\newtheorem{theorem}{Theorem}
\let\cite\autocite%
\let\citep\autocite%
\let\citenum\parencite%
\renewcommand{\citet}[1]{\citeauthor{#1}\,\supercite{#1}}
\definecolor{msft_teal}{rgb}{0.286, 0.773, 0.694}
\definecolor{msft_light_teal}{rgb}{0.725, 0.863, 0.824}
\definecolor{msft_dark_teal}{rgb}{0.133, 0.357, 0.384}
\definecolor{msft_yellow}{rgb}{1.0,   0.725, 0.0}
\definecolor{msft_blue}{rgb}{0.0,   0.471, 0.831}
\definecolor{msft_purple}{rgb}{0.525, 0.380, 0.773}
\definecolor{msft_orange}{rgb}{1.0,   0.639, 0.545}
\definecolor{msft_white}{rgb}{1.0,   1.0,   1.0}
\definecolor{msft_black}{rgb}{0.0,   0.0,   0.0}
\definecolor{msft_grey}{rgb}{0.851, 0.851, 0.839}
\definecolor{ibm_yellow}{RGB}{255, 176, 0}
\definecolor{ibm_orange}{RGB}{254, 97, 0}
\definecolor{ibm_pink}{RGB}{220, 38, 127}
\definecolor{ibm_purple}{RGB}{120, 94, 240}
\definecolor{ibm_blue}{RGB}{100, 143, 255}
\definecolor{arrowgray}{rgb}{0.665, 0.665, 0.665}
\definecolor{circlebg}{rgb}{0.965, 0.965, 0.965}
\definecolor{tab10_blue}{rgb}{0.121, 0.466, 0.705}
\definecolor{tab10_orange}{rgb}{1.0,   0.498, 0.054}
\definecolor{tab10_green}{rgb}{0.172, 0.627, 0.172}
\definecolor{tab10_red}{rgb}{0.839, 0.152, 0.156}
\definecolor{tab10_purple}{rgb}{0.580, 0.403, 0.741}
\definecolor{tab10_brown}{rgb}{0.549, 0.337, 0.294}
\definecolor{tab10_pink}{rgb}{0.890, 0.466, 0.760}
\definecolor{tab10_gray}{rgb}{0.498, 0.498, 0.498}
\definecolor{tab10_olive}{rgb}{0.737, 0.741, 0.133}
\definecolor{tab10_cyan}{rgb}{0.090, 0.745, 0.811}
\def\commentType{0}
\newcommand*{\clippy}{Skala\xspace}
\newcommand*{\clippyOnePointZero}{Skala\kern.10em‑1.0\xspace}
\newcommand*{\clippyOnePointOne}{Skala\kern.10em-1.1\xspace}
\newcommand*{\dens}{\rho}               %
\newcommand*{\oneRDM}{\Gamma}           %
\newcommand{\params}{\mathbf{\theta}}   %
\newcommand*{\cor}{\text{c}}            %
\newcommand*{\xc}{\text{xc}}            %
\newcommand{\Exc}{E_{\xc}}              %
\DeclareMathOperator{\grad}{\boldsymbol{\nabla}}
\DeclarePairedDelimiter{\abs}{\lvert}{\rvert}
\newcommand{\E}{\mathbb{E}}     
\DeclarePairedDelimiter{\norm}{\lVert}{\rVert}  %
\newcommand{\R}{\mathbb{R}}        %
\newcommand*{\ud}{\mathrm{d}}           %
\newcommand*{\isDefinedAs}{\coloneqq}   %
\DeclareSIUnit\hartree{\text {\ensuremath {E}}_{\mathrm {h}}}
\DeclareSIUnit\planckbar{\text {\ensuremath {\hbar }}}
\DeclareSIUnit\bohr{\text {\ensuremath {a}}_{0}}
\newcommand\shouldcite[1][]{\def\@shouldcite@arg{#1}\textcolor{tab10_red}{[needs citation\if\relax\detokenize{#1}\relax\else: #1\fi]}\xspace}
\newcommand*{\tablefontsize}{\scriptsize}
\DeclareRobustCommand{\goldmedal}{\tikz[baseline=-0.6ex]{\fill[gold_medal] circle(2.8pt);}}
\DeclareRobustCommand{\silvermedal}{\tikz[baseline=-0.6ex]{\fill[silver_medal] circle(2.8pt);}}
\DeclareRobustCommand{\bronzemedal}{\tikz[baseline=-0.6ex]{\fill[bronze_medal] circle(2.8pt);}}
\newif\if@firstpageoftable
\newcommand{\multipagetable}[3]%
{\bgroup
  \def\@captype{table}%
  \setbox2=\vbox{\caption{#1}}%
  \setbox3=\vbox{\addtocounter{table}{-1}\caption{#2}}%
  \setbox0=\vbox{#3}%
  \@firstpageoftabletrue
  \loop\ifvoid0\else    
    \if@firstpageoftable
      \@firstpageoftablefalse
      \dimen0=\dimexpr \pagegoal-\pagetotal-\textfloatsep\relax
      \ifdim\dimen0>\bottomfraction\textheight
        \dimen0=\bottomfraction\textheight
      \fi
      \setbox1=\copy0%
      \setbox4=\vsplit1 to 0pt
      \setbox1=\vbox{\unvbox4}%
      \ifdim\dimen0<\dimexpr \ht1+\dp1+\ht2+\dp2\relax
        \dimen0=\textheight
      \fi
      \setbox1=\vsplit0 to \dimexpr \dimen0-\ht2-\dp2\relax
      \begin{table}[bp]
        \unvbox2
        \unvbox1
      \end{table}
    \else
      \setbox1=\vsplit0 to \dimexpr \textheight-\ht3-\dp3\relax
      \begin{table}[tp]
        \usebox3
        \unvbox1
      \end{table}
    \fi
  \repeat
\egroup}
\def\blfootnote{\xdef\@thefnmark{}\@footnotetext}
\begin{document}

\suppressfloats

\begin{center}
    ~\\[0.5em]
    {\LARGE\bfseries
        Accurate and scalable exchange-correlation\\ with deep learning
    }
    \vspace{1.5em}
    
    \newcommand{\maybecomma}{,}
    \renewcommand{\author}[2][]{#2${}^{\text{#1}}$\maybecomma}
    \author[1,\(\dagger\)]{Giulia Luise}
    \author[1,\(\dagger\)]{Chin-Wei Huang}
    \author[1,\(\dagger\)]{Thijs Vogels}
    \author[1,\(\dagger\)]{Derk P.\ Kooi}
    \author[1,\(\dagger\)]{Sebastian Ehlert}
    \author[1]{Stephanie Lanius} 
    \author[1]{Klaas J.\ H.\ Giesbertz} 
    \author[1,2]{Amir Karton}
    \author[1]{Deniz Gunceler} 
    \author[1]{Stefano Battaglia}
    \author[1]{Gregor N.\ C.\ Simm}
    \author[1]{P.\ Bernát Szabó}
    \author[1]{Megan Stanley}
    \author[1]{Wessel P.\ Bruinsma} 
    \author[1]{Lin Huang}
    \author[1]{Xinran Wei}
    \author[1]{Jos\'e Garrido Torres} \\
    \author[1]{Abylay Katbashev}
    \author[1]{Rodrigo Chavez Zavaleta}
    \author[1,a]{B\'alint M\'at\'e}
    \author[1,b]{S\'ekou-Oumar Kaba}
    \author[1]{Roberto Sordillo}
    \author[3]{Yingrong Chen}
    \author[3]{David B.\ Williams-Young}
    \author[1]{Christopher M.\ Bishop} \\
    \author[1,*]{Jan Hermann}
    \author[1,*]{Rianne van den Berg}
    \renewcommand{\maybecomma}{}
    \author[1,*]{Paola Gori-Giorgi} 
    \vspace{1em}
    
    $^{\dagger}$These authors contributed equally and are ordered randomly.
    
    {
    \itshape
    $^1$Microsoft Research, AI for Science\\
    $^2$School of Science and Technology, University of New England, Australia\\
    $^3$Microsoft Quantum
    }

    $^*$\texttt{\small\{pgorigiorgi,rvandenberg,jan.hermann\}@microsoft.com}

    \vspace{1.5em}

  \begin{minipage}{0.85\linewidth}
    \small
    \paragraph{Abstract}
Density Functional Theory (DFT) underpins much of modern computational chemistry and materials science. Yet, the reliability of DFT-derived predictions of experimentally measurable properties remains fundamentally limited by the need to approximate the unknown exchange-correlation (XC) functional.
The traditional paradigm for improving accuracy has relied on increasingly elaborate hand-crafted functional forms. 
This approach has led to a longstanding trade-off between computational efficiency and accuracy, which remains insufficient for reliable predictive modelling of laboratory experiments.
Here we introduce \clippy, a deep learning–based XC functional that surpasses state-of-the-art hybrid functionals in accuracy across the main-group chemistry benchmark set GMTKN55 with an error of $2.8$~kcal/mol, while retaining the lower computational cost characteristic of semi-local DFT.
This demonstrated departure from the historical trade-off between accuracy and efficiency is enabled by learning non-local representations of electronic structure directly from data, bypassing the need for increasingly costly hand-engineered features. Leveraging an unprecedented volume of high-accuracy reference data from wavefunction-based methods, we establish that modern deep learning enables systematically improvable neural exchange-correlation models as training datasets expand, positioning first-principles simulations to become progressively more predictive.
\end{minipage}
  \end{center}
\vspace{1em}

\blfootnote{
\hspace{-1em}
$^\mathrm{ab}$Work done during an internship at Microsoft Research AI for Science\hspace{1em}
$^\mathrm{a}$Current address: \textit{University of Geneva, Department of Computer Science}\hspace{1em}
$^\mathrm{b}$Current address: \textit{Mila -- Quebec AI Institute}
}

\begin{refsection}

\section{Introduction}
The energy of the electrons in molecules and materials serves as a glue between their atoms, determining the stability and properties of the chemical structure. 
Accurately computing the many-electron energy is therefore essential for predictive modeling across a broad spectrum of applications, including assessing whether a chemical reaction will proceed, whether a candidate drug molecule will bind to its target protein, whether a material is suitable for carbon capture, or whether a flow battery can be optimized for renewable energy storage. 
Unfortunately, computing this energy amounts to solving the Schr\"odinger equation, whose cost scales exponentially with the number of electrons $N$. Density functional theory (DFT)\cite{kohn_nobel_1999} provides an exact reformulation that replaces the many-electron wavefunction with the much simpler electron density. 
Although exact in principle, one component of the total energy --- the exchange-correlation (XC) functional --- remains unknown and must be approximated in practical implementations. The role of the XC functional is to capture intricate quantum many-body interactions of electrons using only the electron density, making this a universal functional that has the same form for all molecules and materials.\cite{hohenberg_inhomogeneous_1964,lieb_density_1983} 
Equipped with a formalism\cite{kohn_selfconsistent_1965} whose cost scales asymptotically as $O(N^3)$, and supported by practical functional approximations pioneered over several decades, \cite{perdew_accurate_1986,becke_densityfunctional_1988,perdew_generalized_1996,becke_densityfunctional_1993,grimme_semiempirical_2006,sun_strongly_2015,mardirossian_thirty_2017,lebeda_metagga_2025} DFT has become the computational workhorse in disciplines ranging from (bio)chemistry to catalysis to materials science.\cite{m.teale_dft_2022} 
However, DFT users must still choose from among hundreds of XC functional approximations,\cite{mardirossian_thirty_2017,goerigk_look_2017,m.teale_dft_2022} often relying on dedicated benchmark studies or experimental results to guide the choice for the application at hand. 
Crucially, current XC approximations still fall short of the accuracy required to reliably {\em predict} experimental outcomes across a wide range of chemical systems and properties.\cite{goerigk_look_2017,mardirossian_thirty_2017,m.teale_dft_2022} Achieving this level of accuracy --- commonly known as {\em chemical accuracy} --- typically demands errors below $\sim$1~kcal/mol for processes involving making and breaking covalent chemical bonds.\cite{m.teale_dft_2022}
This means, for example, that in silico screening pipelines for molecule and material discovery often pass too many candidates to the lab, with a large fraction failing experimental verification.  
In addition, lower-cost methods such as force fields and property-guided generative models trained on DFT data inherit these same limitations. 
The search for a general-purpose XC functional that meets chemical accuracy has persisted for over 60 years and is sometimes referred to as ``the pursuit of the divine functional''\cite{mattsson_pursuit_2002} --- a challenge with profound implications for accelerating scientific discovery.

\begin{figure}
\centering
\pgfmathsetmacro{\figscale}{\the\textwidth / 1170}%
\pgfmathsetlengthmacro{\bwidth}{688 * \figscale pt}%
\pgfmathsetlengthmacro{\bheight}{238 * \figscale pt}%
\pgfmathsetlengthmacro{\gap}{1pt}%
\scalebox{\figscale}{\input{figures/intro/top_figure}}%

\smallskip
\centerline{\small\textbf{(a)} Anatomy of an exchange-correlation functional}

\vspace{30pt}
\hspace{-18pt}
\noindent
\begin{minipage}[t]{\bwidth}%
\scalebox{\figscale}{\input{figures/intro/model_overview}}%

\smallskip
\centering{\small\textbf{(b)} \textcolor{ibm_pink}{\textbf{\clippy}}'s learned grid point processing}
\end{minipage}%
\hspace{\gap}%
\begin{minipage}[t]{\dimexpr\textwidth-\bwidth-\gap\relax}%
\raisebox{10pt}[\bheight][0pt]{%
  \vbox to \bheight{%
    \vfill
    \hbox to \linewidth{\hfill\begin{tikzpicture}
\pgfmathsetmacro{\pw}{5.40}
\pgfmathsetmacro{\ph}{2.90}

\draw[black!20, thin] (0, 0.400) -- (\pw, 0.400);
\draw[thin, black] (-0.06, 0.400) -- (0, 0.400);
\node[anchor=east, font=\scriptsize, text=black, inner sep=1pt] at (-0.10, 0.400) {GGA};
\draw[black!20, thin] (0, 1.500) -- (\pw, 1.500);
\draw[thin, black] (-0.06, 1.500) -- (0, 1.500);
\node[anchor=east, font=\scriptsize, text=black, inner sep=1pt] at (-0.10, 1.500) {meta-GGA};
\draw[black!20, thin] (0, 2.600) -- (\pw, 2.600);
\draw[thin, black] (-0.06, 2.600) -- (0, 2.600);
\node[anchor=east, font=\scriptsize, text=black, inner sep=1pt] at (-0.10, 2.600) {Hybrid};

\draw[black, thin] (0, 0) -- (\pw, 0);

\draw[thin, black!70] (0.000, -0.06) -- (0.000, 0);
\node[anchor=north, font=\scriptsize, inner sep=1pt] at (0.000, -0.08) {2};
\draw[thin, black!70] (0.732, -0.06) -- (0.732, 0);
\node[anchor=north, font=\scriptsize, inner sep=1pt] at (0.732, -0.08) {3};
\draw[thin, black!70] (1.464, -0.06) -- (1.464, 0);
\node[anchor=north, font=\scriptsize, inner sep=1pt] at (1.464, -0.08) {4};
\draw[thin, black!70] (2.196, -0.06) -- (2.196, 0);
\node[anchor=north, font=\scriptsize, inner sep=1pt] at (2.196, -0.08) {5};
\draw[thin, black!70] (2.928, -0.06) -- (2.928, 0);
\node[anchor=north, font=\scriptsize, inner sep=1pt] at (2.928, -0.08) {6};
\draw[thin, black!70] (3.660, -0.06) -- (3.660, 0);
\node[anchor=north, font=\scriptsize, inner sep=1pt] at (3.660, -0.08) {7};
\draw[thin, black!70] (4.392, -0.06) -- (4.392, 0);
\node[anchor=north, font=\scriptsize, inner sep=1pt] at (4.392, -0.08) {8};
\draw[thin, black!70] (5.124, -0.06) -- (5.124, 0);
\node[anchor=north, font=\scriptsize, inner sep=1pt] at (5.124, -0.08) {9};
\node[anchor=north, font=\scriptsize] at (2.562, -0.28) {WTMAD-2 on GMTKN55 (kcal/mol)};

\fill[ibm_orange] (4.645, 0.400) circle[radius=0.065];
\fill[ibm_pink] (0.589, 1.500) circle[radius=0.065];
\fill[ibm_purple] (2.607, 1.500) circle[radius=0.065];
\fill[ibm_purple] (3.842, 1.500) circle[radius=0.065];
\fill[ibm_blue] (0.899, 2.600) circle[radius=0.065];
\fill[ibm_blue] (1.432, 2.600) circle[radius=0.065];
\fill[ibm_blue] (2.074, 2.600) circle[radius=0.065];
\fill[ibm_blue] (3.209, 2.600) circle[radius=0.065];

\node[anchor=west, rotate=25, font=\scriptsize, text=ibm_orange, inner sep=1pt] at (4.685, 0.460) {revPBE};
\node[anchor=west, rotate=25, font=\scriptsize, text=ibm_pink, inner sep=1pt] at (0.629, 1.560) {\clippy-1.1};
\node[anchor=west, rotate=25, font=\scriptsize, text=ibm_purple, inner sep=1pt] at (2.647, 1.560) {B97M-V};
\node[anchor=west, rotate=25, font=\scriptsize, text=ibm_purple, inner sep=1pt] at (3.882, 1.560) {r\textsuperscript{2}SCAN};
\node[anchor=west, rotate=25, font=\scriptsize, text=ibm_blue, inner sep=1pt] at (0.939, 2.660) {$\omega$B97M-V};
\node[anchor=west, rotate=25, font=\scriptsize, text=ibm_blue, inner sep=1pt] at (1.472, 2.660) {$\omega$B97X-V};
\node[anchor=west, rotate=25, font=\scriptsize, text=ibm_blue, inner sep=1pt] at (2.114, 2.660) {M06-2X};
\node[anchor=west, rotate=25, font=\scriptsize, text=ibm_blue, inner sep=1pt] at (3.249, 2.660) {B3LYP};
\end{tikzpicture}\hfill}%
    \vfill
  }%
}%

\smallskip
\centering{\small\textbf{(c)} Benchmark: errors for main-group chemistry}
\end{minipage}
\caption{ \textbf{\clippy is a scalable deep learned exchange-correlation functional.}
    (a) Jacob's ladder of density functional approximations\cite{perdew_jacobs_2001} defines the rungs \textcolor{ibm_yellow}{\textbf{LDA}}, \textcolor{ibm_orange}{\textbf{GGA}} and 
    \textcolor{ibm_purple}{\textbf{meta-GGA}} by expanding the set of semi-local features they extract from an electronic density matrix into a grid representation.
    The next rungs, \textcolor{ibm_blue}{\textbf{hybrid}} and \textcolor{ibm_blue}{\textbf{double hybrid}}, extract increasingly expensive wavefunction-based information directly from the density matrix. 
    \textcolor{ibm_pink}{\textbf{\clippy}} departs from this ladder by extracting relatively cheap meta-GGA features, and instead gaining expressivity by learning non-local interactions between grid points at a manageable and controllable cost.
    (b) High-level overview of the neural network architecture for the \clippy functional.
    (c) Weighted total mean absolute deviation (WTMAD-2) on the GMTKN55\cite{goerigk_look_2017} benchmark for general main-group thermochemistry, kinetics and non-covalent interactions. Results are shown for popular functionals, including the best-performing ones in each computational cost category. 
}
\label{fig:jacobs-ladder}
\end{figure}
The prevailing approach has been to handcraft functional forms based on a limited set of ingredients defined by the so-called Jacob's ladder of DFT;\cite{perdew_jacobs_2001} see Fig.~\ref{fig:jacobs-ladder}. Like its biblical namesake, it is intended to guide users toward the ``heaven'' of chemical accuracy.
The ingredients at the lower rungs allow retaining the asymptotic $O(N^3)$
scaling of DFT, but amount to XC functionals that only use (semi-)local information such as the density, its gradient, the Laplacian and the Kohn-Sham kinetic energy density. 
However, it is well established that the exact XC functional exhibits non-local dependence on the density,\cite{lieb_density_1983} and in practice lower-rung approximations yield only limited accuracy.
To improve accuracy, researchers began introducing non-locality through wavefunction-like ingredients.\cite{becke_densityfunctional_1993,grimme_semiempirical_2006}  While this approach enhances accuracy in many cases, it does not do so for all chemical problems, and it increases the computational complexity\footnotemark 
\footnotetext{We express computational cost using the standard asymptotic scaling with system size, $O(f(N))$. In practice, however, actual performance depends on algorithmic speedups, hardware optimizations, and prefactors. Therefore, in Sec.~\ref{sec:cost}, we empirically compare the cost of \clippy with that of other functionals.} to $O(N^4)$, $O(N^5)$ or higher, thereby defining the higher rungs of the ladder.  
All XC functionals in common use are built from this hierarchy of Jacob's ladder ingredients, with a clear accuracy-cost trade-off. They differ primarily in how these ingredients are combined and the number of parameters involved. 
The focus on these ingredients is driven by their availability in standard software packages and compatibility with exact constraints, offering a practical foundation for building functional approximations. 

As in many other areas of science, machine learning (ML) has been explored as a promising approach for developing accurate XC functionals, revealing the challenges and subtleties of this complex learning problem.\cite{akashi_can_2025} Yet, to date this has not led to a meaningful shift in the established accuracy-cost trade-off, and no ML-based functional has seen widespread adoption. There are two interlinked reasons for this. First, high-accuracy data for this complex learning problem are very scarce, as they must be generated using computationally intensive wavefunction methods that require specialized expertise to be used at scale. Second, confined to this low-data regime, the vast majority of efforts have been limited to feeding handcrafted features into machine learning models, whether based on Jacob's ladder ingredients\cite{tozer_exchangecorrelation_1996,dick_machine_2020,kasim_learning_2021,cuierrier_constructing_2021,kirkpatrick_pushing_2021,kanungo_learning_2024,liu_supervised_2023} or newly designed descriptors.\cite{nagai_completing_2020,bystrom_cider_2022,bystrom_nonlocal_2023,polak_realspace_2024} 
This approach mirrors classic machine learning strategies prior to the rise of deep learning (DL), which may partly account for the limited impact observed so far. In the absence of sufficient data to power a more modern deep representation learning approach, the handful of efforts to move beyond handcrafted features, though promising, have remained focused on model systems or
 narrowly defined problems.\cite{schmidt_machine_2019,li_kohnsham_2021,margraf_pure_2021,kalita_how_2022,gao_learning_2024}

In this work, we introduce \clippy, a modern deep-learning solution to this long-standing problem, addressing both the scarcity of chemically accurate training data and several fundamental challenges in applying deep learning to exchange-correlation functionals. Using an efficient wavefunction-based protocol,\cite{karton_explicitly_2012} we generated a training set of approximately $400,000$ high-accuracy energy differences spanning diverse chemistry, augmented by the limited public datasets available at comparable accuracy.
We further design a neural network architecture capable of learning non-local electronic representations directly from data, while relying only on simple semi-local input features.
\clippy achieves higher accuracy than leading hybrid functionals across a broad main-group chemical space. Notably, this is accomplished using a scalable neural-network formulation that retains the asymptotic complexity of semi-local DFT and naturally enables GPU acceleration, effectively decoupling accuracy from computational cost along Jacob’s ladder. \clippy also matches or exceeds hybrid functionals in predicting equilibrium structures and dipole moments.

As additional high-accuracy data covering increasingly diverse regions of chemical space are incorporated in training, we demonstrate that \clippy’s accuracy improves systematically without an increase in computational complexity at inference time. This empirical scaling behaviour motivates the design of \clippy as a continuously evolving exchange-correlation modelling framework rather than a static functional.
Unlike the traditional DFT ``zoo'',\cite{goerigk_look_2017} in which new approximations accumulate without retirement of older ones, \clippy follows an explicit, versioned development process, with successive releases superseding their predecessors through significant demonstrable gains in accuracy and robustness.
In this manuscript, \clippy refers to the current best model, \clippyOnePointOne\footnotemark[2], with version labels provided in figures and tables where appropriate to ensure transparency and reproducibility.
More broadly, by establishing a framework for systematically improvable DFT, this work outlines a concrete route towards genuinely predictive accuracy. Achieving such predictive accuracy would remove a longstanding bottleneck that has constrained the field’s ability to shift the centre of gravity from laboratory-based experimentation towards in silico discovery.

\footnotetext[2]{A previous version of this manuscript reported results for \clippyOnePointZero trained on roughly $150,000$ data points. The present, more accurate version \clippyOnePointOne is trained on $\sim 400,000$ data points and features a refined deep-learning architecture and training protocol.}

\section{Deep learning the XC functional: basic challenges and solutions}
\label{sec:learning-xc-functional}
The success of DFT is based on the Kohn-Sham (KS) formalism,\cite{kohn_selfconsistent_1965} which decomposes the total energy functional into components that capture large effects such as the Pauli exclusion principle and long-range classical electrostatics, as well as the remaining unknown term that we aim to learn --- the XC functional --- which accounts for a smaller but crucial energy contribution due to quantum many-body effects. The XC functional $\Exc[\dens]$ maps the electron density $\dens(r)$, a non-negative function over three-dimensional space, to a scalar value representing the XC energy. 
In practical implementations of KS DFT, all terms except for the XC functional are evaluated using the density represented in a basis set via the density matrix, with atom-centered Gaussian functions being the most commonly used basis functions in chemistry. Focusing on the semi-local functional rungs, the XC energy is instead evaluated using a representation of the electron density on a large integration grid. For molecules containing up to several hundred atoms, the integration grids typically consist of $\sim$\!$10^4$--$10^6$ points. Since any such grid is necessarily a finite representation of the continuous density, the XC functional should have a well-defined limit when the grid becomes infinitely dense and show good convergence as the grid is refined.

\subsection{The two-stage training protocol}
The learning problem we address is to obtain an XC functional $\Exc^\params[\dens]$ parameterized by a neural network with parameters $\theta$ that, combined with the other known energy terms, yields accurate total energy predictions. 
An important factor in designing a training protocol is that, at inference time, $\Exc^\params[\dens]$ is not just evaluated once in isolation.  Instead, it is used inside a self-consistent field (SCF) loop that minimizes the total energy with respect to the density, yielding both the energy minimum and the minimizing self-consistent density $\dens_{\mathrm{SCF}}$.
This makes the learning problem different from a standard regression problem, such as one encounters for instance when training machine-learned force fields.\cite{behler_generalized_2007} 

While previous ML attempts at learning the XC functional have proposed and analyzed several training protocols to address this challenge,\cite{nagai_completing_2020,dick_machine_2020,kirkpatrick_pushing_2021,li_kohnsham_2021,gao_learning_2024,kanungo_learning_2024} many of them are too computationally demanding for the much larger-scale training considered in this work. Performing SCF calculations for every training example is more expensive than doing a single inference call of the model, and this cost becomes prohibitive when considering our large-scale training dataset. Moreover, close to model initialization, the parameterized $\Exc^\params$ is a very poor estimate of the true XC functional: the SCF procedure would fail to converge for many systems, and for those that do converge, the resulting self-consistent densities and energies would be of poor quality. Starting from such poor predictions risks the functional compensating functional-driven errors with density-driven errors in order to drive down the error between the predicted total energy and the reference energy provided by the high-accuracy wavefunction method.

The alternative strategy we adopt is to introduce a two-stage training protocol, consisting of a pre-training phase and a fine-tuning phase. In the pre-training phase the functional is evaluated for a fixed input density, bypassing the SCF procedure. This brings the problem closer to standard regression: a single forward pass of the model, combined with the other known energy terms, yields a total energy prediction for fixed input density features. 
However, this approach requires a reasonable approximation of the true minimizing ground-state density. Producing highly accurate ground-state densities beyond the quality of DFT densities is even more challenging than producing accurate energy labels,\cite{assaraf_improved_2007,chen_initio_2021,cheng_highly_2025} and does not scale to our dataset size. Instead, following \citet{kirkpatrick_pushing_2021}, we use approximate B3LYP\cite{becke_densityfunctional_1993,stephens_initio_1994} densities (not B3LYP energies) during training, which can be generated at scale for our entire dataset.\cite{ehlert_accurate_2025} 
This enables us to expose the model to a broad range of approximate densities, combined with very accurate wavefunction-level reference energy labels, at reasonable training cost. To predict the total energy, the remaining KS energy components are computed from the B3LYP density and KS orbitals, as detailed in Sec.~\ref{sec:training-objective} of the Supplementary Information. 
Since reference energies from wavefunction methods are generally more accurate for reaction energy differences than for total energies, the model is trained with a reaction energy regression loss rather than a total energy loss.

After this pre-training phase, the model produces reasonable energy predictions given B3LYP densities, but it has not yet been incentivized to produce accurate self-consistent densities and total energies when deployed inside an SCF procedure. This motivates a second, much shorter, fine-tuning phase.
In the fine-tuning phase, the model is trained using its own SCF densities, generated on the fly during training. This aims to close the gap between the accuracy achieved when evaluating the functional on the fixed approximate input densities from the pre-training stage, and the accuracy obtained when evaluating the functional on its own SCF densities. Like during pre-training, the energy references we train against originate from wavefunction-level theory. Crucially, this procedure does not require backpropagating through the SCF loop, as described in more detail in Sec.~\ref{subsupp:SCF-finetuning} of the Supplementary Information.
During the SCF fine-tuning phase we monitor the aforementioned accuracy gap on a holdout validation set, as well as the accuracy of our SCF densities by comparing dipole moments with accurate labels available in the literature.\cite{hait_how_2018a} As we show in Sec.~\ref{sec:densities}, a notable empirical observation of our XC learning framework is that, during fine-tuning, both the SCF energies and the SCF electron densities improve simultaneously, suggesting that the increased accuracy in energies is not the result of an error compensation between a functional-driven error and a density-driven error.

\subsection{The XC functional architecture of \clippy}
Several mathematical properties of the XC functional are known, usually referred to as exact constraints.\cite{levy_hellmannfeynman_1985,kaplan_predictive_2022,lieb_density_1983,sun_strongly_2015} 
Following a well-established practice in DFT, we build in some of the most energetically relevant constraints (such as the high-density uniform coordinate scaling, size-consistency, and the Lieb-Oxford lower bound\cite{lieb_lower_1979}) by constructing \clippy as
\begin{equation}
\label{eq:Exc-basic}
 \Exc^\theta[\dens] = - \frac{3}{4} \left(\frac{6}{\pi}\right)^{\frac{1}{3}}\int \left(\dens^{(\uparrow)}(r)^{4/3} + \dens^{(\downarrow)}(r)^{4/3}\right) f_\params[\mathbf{x}[\dens]](r) \,dr,   
\end{equation}
where $\dens^{(\uparrow)}$ and $\dens^{(\downarrow)}$ are the densities of the two spin channels and $f_\params$ is a bounded enhancement factor.
The vast majority of previous ML attempts only learned the enhancement factor with a {\em local function} $f_\params(\mathbf{x}[\dens](r))$ of the given hand-designed input features $\mathbf{x}[\dens](r)$.\cite{dick_machine_2020,kasim_learning_2021,cuierrier_constructing_2021,kirkpatrick_pushing_2021,kanungo_learning_2024,nagai_completing_2020,bystrom_cider_2022,bystrom_nonlocal_2023,polak_realspace_2024} In contrast, our DL approach takes inspiration from neural operators,\cite{kovachki_neural_2023} and models the enhancement factor as a neural functional that learns non-local representations from the input features, hence the distinct notation $f_\params[\mathbf{x}[\dens]](r)$.
The architecture for the enhancement factor, outlined in Fig.~\ref{fig:jacobs-ladder}(b), takes as input the set of density-dependent semi-local features $\mathbf{x}[\dens]$ of the standard meta-generalized-gradient approximation (meta-GGA) $O(N^3)$ rung at each integration point $r$. The first module learns spin-order invariant features, followed by further learned local processing at each position $r$. Subsequently, in the non-local interactions module, information is exchanged between each grid point and its atom's coarse point (a representative point on a coarser, atom-centred grid; see Sec.~\ref{sec:architecture}), resulting in a scalable architecture that preserves the asymptotic computational complexity of semi-local DFT. Finally, the spin-order invariant features and the non-local representations are combined and further processed to predict the enhancement factor $f_\params[\mathbf{x}[\dens]](r)$, followed by a numerical integration using Eq.~\eqref{eq:Exc-basic}.
For a more detailed overview of the architecture, the reader is referred to Extended Data Fig.~\ref{fig:architecture-diagram-nonlocal} and Sec.~\ref{sec:architecture}. Importantly, \clippy's architecture is designed to have a well-defined limit when the grid becomes infinitely dense: once trained, it can be evaluated on different grid types and sizes, and displays convergence behavior similar to traditional functionals, as shown in Fig.~\ref{fig:grid-convergence} of the Supplementary Information.

Non-locality on the DFT grid has also been used in prior hand-crafted approaches to model dispersion\cite{dion_van_2004,langreth_van_2005,vydrov_nonlocal_2010} --- a long-range, subtle yet crucial component of the XC energy, essential for capturing interactions that do not involve the making or breaking of covalent chemical bonds.\cite{hermann_firstprinciples_2017} Our focus here is distinct: we target thermochemistry --- the energetics of forming and breaking covalent bonds --- and kinetics, namely the energy barriers separating reactants from products. Accordingly, we do not yet attempt to model dispersion explicitly, and instead train our functional using a fixed D3 dispersion correction.\cite{grimme_consistent_2010,grimme_effect_2011} Learning dispersion interactions within this framework is deferred to future work.

With learned non-locality, \clippy eliminates the need for computationally expensive ingredients higher up Jacob's ladder, achieving high accuracy for main-group chemistry given sufficient training data without increasing computational cost. In particular, accuracy in main-group thermochemistry and kinetics has long been dominated by the accuracy–cost trade-off that defines Jacob’s ladder. By disrupting this trade-off, we establish a deep-learning approach that is systematically improvable with more data, charting a route towards learning the universal XC functional in a scalable manner.

\subsection{The training data}
More broadly, our approach mirrors the shift from classic machine learning with hand-designed features to deep representation learning with neural networks.\cite{ lecun_deep_2015, goodfellow_deep_2016, bishop_deep_2024} Just as large-scale datasets such as ImageNet\cite{deng_imagenet_2009,russakovsky_imagenet_2015} acted as catalysts for deep learning in computer vision\cite{krizhevsky_imagenet_2012}, realizing this approach for DFT demands a dataset far exceeding what has been available to date. 
Given the lack of public datasets at the required scale and accuracy, we have initiated\cite{ehlert_accurate_2025} a large-scale effort to generate and assemble a chemically diverse collection of high-accuracy quantum chemistry data, which we term the Microsoft Research Accurate Chemistry Collection (MSR-ACC). All reference energies are obtained with wavefunction methods achieving at least 1~kcal/mol accuracy with respect to experiment, and the collection continues to expand.

The current version of MSR‑ACC comprises approximately 395,000 energy differences labeled with the high‑accuracy W1‑F12 and W1w thermochemical protocols,\cite{karton_w4_2006,karton_explicitly_2012} achieving CCSD(T)/CBS accuracy. Its largest component, in terms of both scale and coverage, consists of roughly 120,000 total atomization energies (TAEs). The remaining datasets span a broad range of chemical space, including proton and electron affinities, ionization potentials, reaction pathways, non‑covalent clusters, water dimer structures, as well as slightly distorted geometries and conformers derived from selected molecules in the TAE set, in addition to atomic properties.
This collection is further augmented with approximately 80,000 energy differences from publicly available datasets of comparable or higher accuracy, covering diverse chemistry such as non‑covalent interactions, atomization energies of larger molecules, decomposition energies of mindless structures, and reaction barriers. After removing overlap with test sets based on molecular graphs and reserving dedicated holdout sets, the resulting training set for the current version of \clippy\ (\clippyOnePointOne) comprises about 400,000 energy differences, representing an unprecedented combination of size, accuracy, and chemical diversity.

\begin{table}[h!]
    \centering
    \caption{GMTKN55 unweighted MAE breakdown (kcal/mol). All functionals without VV10 corrections use D3(BJ), except M06-2X which uses D3(0). For each subset, we call out the functionals that perform best (\goldmedal{}), second place (\silvermedal{}) and third place (\bronzemedal{}). \label{tab:gmtkn55-breakdown}}
    
\providecolor{skala_highlight}{HTML}{EAEEF3}
\providecolor{gold_medal}{HTML}{FFD700}
\providecolor{gold_medal_dark}{HTML}{B8960A}
\providecolor{silver_medal}{HTML}{C0C0C0}
\providecolor{silver_medal_dark}{HTML}{808080}
\providecolor{bronze_medal}{HTML}{CD7F32}
\providecolor{bronze_medal_dark}{HTML}{8B5521}
\providecolor{rung_gga}{HTML}{FE6100}
\providecolor{rung_meta_gga}{HTML}{785EF0}
\providecolor{rung_hybrid}{HTML}{648FFF}
\providecommand{\goldmedal}{\tikz[baseline=-0.8ex]{\fill[gold_medal] circle(2.8pt);}}
\providecommand{\silvermedal}{\tikz[baseline=-0.8ex]{\fill[silver_medal] circle(2.8pt);}}
\providecommand{\bronzemedal}{\tikz[baseline=-0.8ex]{\fill[bronze_medal] circle(2.8pt);}}
\tablefontsize
\setlength{\tabcolsep}{2.5pt}
\ifdefined\rbs\else\newlength{\rbs}\fi\setlength{\rbs}{4pt}
\ifdefined\ras\else\newlength{\ras}\fi\setlength{\ras}{4pt}
\ifdefined\rhs\else\newlength{\rhs}\fi\setlength{\rhs}{6pt}
\setlength{\aboverulesep}{0pt}
\setlength{\belowrulesep}{0pt}
\newlength{\gmtknnumwd}
\newlength{\gmtknmedalwd}
\newlength{\gmtknpairwd}
\settowidth{\gmtknnumwd}{88.88}
\settowidth{\gmtknmedalwd}{\goldmedal{}}
\setlength{\gmtknpairwd}{\dimexpr\gmtknnumwd + \gmtknmedalwd + 2pt\relax}
\newcommand{\gmtkcell}[2]{\makebox[\gmtknnumwd][r]{#1}\,\makebox[\gmtknmedalwd][c]{#2}}
\noindent
\resizebox{\textwidth}{!}{%
\begin{tabular}{@{} l @{\hspace{4pt}} l *{8}{c} @{}}
    \toprule
    \rule{0pt}{\rbs+\ht\strutbox} &
    & \multicolumn{1}{c}{\textbf{GGA}}
    & \multicolumn{3}{c}{\textbf{meta-GGA}}
    & \multicolumn{4}{c}{\textbf{Hybrid}}
    \\
    &
    & revPBE
    & r\textsuperscript{2}SCAN
    & B97M-V
    & \cellcolor{skala_highlight}%
\clippyOnePointOne
    & B3LYP
    & M06-2X
    & \(\omega\)B97X-V
    & \(\omega\)B97M-V\rule[-0.5\rhs]{0pt}{\ht\strutbox+\rhs}
    \\
    \cmidrule(l{2pt}r{2pt}){3-3}
    \cmidrule(l{2pt}r{2pt}){4-6}
    \cmidrule(l{2pt}r{2pt}){7-10}
    \multicolumn{5}{@{}l}{\textbf{Basic properties}\rule[-0.3ex]{0pt}{0.3ex}} & \cellcolor{skala_highlight} & \multicolumn{4}{l}{} \\
    AL2x6\rule{0pt}{\rbs+\ht\strutbox} & {\scriptsize Dimerization energies of AlX${}_3$ compounds}
    & \gmtkcell{2.07}{}
    & \gmtkcell{1.71}{}
    & \gmtkcell{1.53}{}
    & \cellcolor{skala_highlight}%
\gmtkcell{0.32}{\goldmedal{}%
}
    & \gmtkcell{2.72}{}
    & \gmtkcell{0.90}{\silvermedal{}%
}
    & \gmtkcell{1.20}{\bronzemedal{}%
}
    & \gmtkcell{1.29}{}
    \\
    ALK8 & {\scriptsize Dissociation and other reactions of alkaline compounds}
    & \gmtkcell{3.62}{}
    & \gmtkcell{3.48}{}
    & \gmtkcell{2.63}{}
    & \cellcolor{skala_highlight}%
\gmtkcell{3.07}{}
    & \gmtkcell{2.46}{\bronzemedal{}%
}
    & \gmtkcell{2.34}{\silvermedal{}%
}
    & \gmtkcell{0.92}{\goldmedal{}%
}
    & \gmtkcell{2.48}{}
    \\
    ALKBDE10 & {\scriptsize Dissociation energies in group-1 and -2 diatomics}
    & \gmtkcell{5.16}{}
    & \gmtkcell{5.07}{}
    & \gmtkcell{4.12}{}
    & \cellcolor{skala_highlight}%
\gmtkcell{3.79}{\silvermedal{}%
}
    & \gmtkcell{4.39}{}
    & \gmtkcell{4.80}{}
    & \gmtkcell{4.05}{\bronzemedal{}%
}
    & \gmtkcell{3.74}{\goldmedal{}%
}
    \\
    BH76RC & {\scriptsize Reaction energies of the BH76 set}
    & \gmtkcell{2.80}{}
    & \gmtkcell{2.97}{}
    & \gmtkcell{2.02}{}
    & \cellcolor{skala_highlight}%
\gmtkcell{0.60}{\goldmedal{}%
}
    & \gmtkcell{1.98}{}
    & \gmtkcell{0.94}{\bronzemedal{}%
}
    & \gmtkcell{1.51}{}
    & \gmtkcell{0.87}{\silvermedal{}%
}
    \\
    DC13 & {\scriptsize 13 difficult cases for DFT methods}
    & \gmtkcell{8.46}{}
    & \gmtkcell{7.88}{}
    & \gmtkcell{5.20}{\silvermedal{}%
}
    & \cellcolor{skala_highlight}%
\gmtkcell{2.20}{\goldmedal{}%
}
    & \gmtkcell{10.58}{}
    & \gmtkcell{7.49}{}
    & \gmtkcell{6.52}{}
    & \gmtkcell{5.40}{\bronzemedal{}%
}
    \\
    DIPCS10 & {\scriptsize Double-ionization potentials of closed-shell systems}
    & \gmtkcell{4.77}{}
    & \gmtkcell{5.12}{}
    & \gmtkcell{4.52}{}
    & \cellcolor{skala_highlight}%
\gmtkcell{2.07}{\goldmedal{}%
}
    & \gmtkcell{4.68}{}
    & \gmtkcell{3.11}{\silvermedal{}%
}
    & \gmtkcell{4.05}{\bronzemedal{}%
}
    & \gmtkcell{5.20}{}
    \\
    FH51 & {\scriptsize Reaction energies in various (in-)organic systems}
    & \gmtkcell{3.33}{}
    & \gmtkcell{2.19}{}
    & \gmtkcell{2.27}{}
    & \cellcolor{skala_highlight}%
\gmtkcell{0.48}{\goldmedal{}%
}
    & \gmtkcell{2.60}{}
    & \gmtkcell{1.19}{\bronzemedal{}%
}
    & \gmtkcell{2.32}{}
    & \gmtkcell{1.06}{\silvermedal{}%
}
    \\
    G21EA & {\scriptsize Adiabatic electron affinities}
    & \gmtkcell{2.84}{}
    & \gmtkcell{3.59}{}
    & \gmtkcell{3.09}{}
    & \cellcolor{skala_highlight}%
\gmtkcell{1.11}{\goldmedal{}%
}
    & \gmtkcell{1.99}{}
    & \gmtkcell{1.86}{\bronzemedal{}%
}
    & \gmtkcell{1.93}{}
    & \gmtkcell{1.44}{\silvermedal{}%
}
    \\
    G21IP & {\scriptsize Adiabatic ionization potentials}
    & \gmtkcell{4.19}{}
    & \gmtkcell{4.69}{}
    & \gmtkcell{3.02}{}
    & \cellcolor{skala_highlight}%
\gmtkcell{2.00}{\goldmedal{}%
}
    & \gmtkcell{3.55}{}
    & \gmtkcell{2.63}{\silvermedal{}%
}
    & \gmtkcell{3.04}{}
    & \gmtkcell{2.88}{\bronzemedal{}%
}
    \\
    G2RC & {\scriptsize Reaction energies of selected G2/97 systems}
    & \gmtkcell{6.15}{}
    & \gmtkcell{5.69}{}
    & \gmtkcell{4.60}{}
    & \cellcolor{skala_highlight}%
\gmtkcell{0.85}{\goldmedal{}%
}
    & \gmtkcell{2.73}{}
    & \gmtkcell{1.94}{\silvermedal{}%
}
    & \gmtkcell{3.93}{}
    & \gmtkcell{1.99}{\bronzemedal{}%
}
    \\
    HEAVYSB11 & {\scriptsize Dissociation energies in heavy-element compounds}
    & \gmtkcell{2.79}{}
    & \gmtkcell{3.07}{}
    & \gmtkcell{2.17}{\bronzemedal{}%
}
    & \cellcolor{skala_highlight}%
\gmtkcell{0.89}{\goldmedal{}%
}
    & \gmtkcell{3.37}{}
    & \gmtkcell{8.21}{}
    & \gmtkcell{1.46}{\silvermedal{}%
}
    & \gmtkcell{2.70}{}
    \\
    NBPRC & {\scriptsize Reactions involving trihydride oligomers}
    & \gmtkcell{1.98}{}
    & \gmtkcell{1.61}{}
    & \gmtkcell{1.76}{}
    & \cellcolor{skala_highlight}%
\gmtkcell{0.86}{\goldmedal{}%
}
    & \gmtkcell{2.01}{}
    & \gmtkcell{0.94}{\bronzemedal{}%
}
    & \gmtkcell{1.45}{}
    & \gmtkcell{0.91}{\silvermedal{}%
}
    \\
    PA26 & {\scriptsize Adiabatic proton affinities (incl. amino acids)}
    & \gmtkcell{4.70}{}
    & \gmtkcell{2.63}{}
    & \gmtkcell{3.32}{}
    & \cellcolor{skala_highlight}%
\gmtkcell{0.92}{\goldmedal{}%
}
    & \gmtkcell{2.86}{}
    & \gmtkcell{1.22}{\silvermedal{}%
}
    & \gmtkcell{2.62}{}
    & \gmtkcell{1.40}{\bronzemedal{}%
}
    \\
    RC21 & {\scriptsize Fragmentations and rearrangements in radical cations}
    & \gmtkcell{4.85}{}
    & \gmtkcell{5.11}{}
    & \gmtkcell{3.51}{}
    & \cellcolor{skala_highlight}%
\gmtkcell{2.13}{\bronzemedal{}%
}
    & \gmtkcell{2.53}{}
    & \gmtkcell{1.63}{\goldmedal{}%
}
    & \gmtkcell{3.53}{}
    & \gmtkcell{1.86}{\silvermedal{}%
}
    \\
    SIE4x4 & {\scriptsize Self-interaction-error related problems}
    & \gmtkcell{23.42}{}
    & \gmtkcell{18.17}{}
    & \gmtkcell{16.27}{}
    & \cellcolor{skala_highlight}%
\gmtkcell{12.68}{}
    & \gmtkcell{18.05}{}
    & \gmtkcell{8.67}{\goldmedal{}%
}
    & \gmtkcell{11.48}{\bronzemedal{}%
}
    & \gmtkcell{10.75}{\silvermedal{}%
}
    \\
    TAUT15 & {\scriptsize Relative energies in tautomers}
    & \gmtkcell{1.55}{}
    & \gmtkcell{1.57}{}
    & \gmtkcell{0.89}{}
    & \cellcolor{skala_highlight}%
\gmtkcell{0.32}{\goldmedal{}%
}
    & \gmtkcell{1.16}{}
    & \gmtkcell{0.76}{}
    & \gmtkcell{0.72}{\bronzemedal{}%
}
    & \gmtkcell{0.33}{\silvermedal{}%
}
    \\
    W4-11 & {\scriptsize Total atomization energies}
    & \gmtkcell{7.56}{}
    & \gmtkcell{3.88}{}
    & \gmtkcell{2.87}{}
    & \cellcolor{skala_highlight}%
\gmtkcell{1.50}{\goldmedal{}%
}
    & \gmtkcell{3.21}{}
    & \gmtkcell{3.02}{}
    & \gmtkcell{2.82}{\bronzemedal{}%
}
    & \gmtkcell{2.08}{\silvermedal{}%
}
    \\
    YBDE18\rule[-0.5em]{0pt}{0.5em} & {\scriptsize Bond-dissociation energies of AlX${}_3$ compounds}
    & \gmtkcell{4.42}{}
    & \gmtkcell{3.26}{}
    & \gmtkcell{4.37}{}
    & \cellcolor{skala_highlight}%
\gmtkcell{0.85}{\goldmedal{}%
}
    & \gmtkcell{4.72}{}
    & \gmtkcell{2.42}{\bronzemedal{}%
}
    & \gmtkcell{2.01}{\silvermedal{}%
}
    & \gmtkcell{2.90}{}
    \\
    \multicolumn{5}{@{}l}{\textbf{Reaction energies}\rule[-0.3ex]{0pt}{0.3ex}} & \cellcolor{skala_highlight} & \multicolumn{4}{l}{} \\
    BSR36 & {\scriptsize Bond-separation reactions of saturated hydrocarbons}
    & \gmtkcell{1.79}{}
    & \gmtkcell{0.45}{\silvermedal{}%
}
    & \gmtkcell{0.22}{\goldmedal{}%
}
    & \cellcolor{skala_highlight}%
\gmtkcell{0.99}{}
    & \gmtkcell{3.38}{}
    & \gmtkcell{2.48}{}
    & \gmtkcell{2.01}{}
    & \gmtkcell{0.46}{\bronzemedal{}%
}
    \\
    C60ISO & {\scriptsize Relative energies between C${}_{60}$ isomers}
    & \gmtkcell{9.85}{}
    & \gmtkcell{5.32}{\bronzemedal{}%
}
    & \gmtkcell{4.85}{\silvermedal{}%
}
    & \cellcolor{skala_highlight}%
\gmtkcell{8.16}{}
    & \gmtkcell{2.40}{\goldmedal{}%
}
    & \gmtkcell{7.86}{}
    & \gmtkcell{13.70}{}
    & \gmtkcell{11.85}{}
    \\
    CDIE20 & {\scriptsize Double-bond isomerization energies in cyclic systems}
    & \gmtkcell{1.49}{}
    & \gmtkcell{1.59}{}
    & \gmtkcell{1.62}{}
    & \cellcolor{skala_highlight}%
\gmtkcell{0.19}{\goldmedal{}%
}
    & \gmtkcell{0.99}{}
    & \gmtkcell{0.54}{\silvermedal{}%
}
    & \gmtkcell{0.62}{}
    & \gmtkcell{0.59}{\bronzemedal{}%
}
    \\
    DARC & {\scriptsize Reaction energies of Diels-Alder reactions}
    & \gmtkcell{3.69}{}
    & \gmtkcell{2.65}{}
    & \gmtkcell{3.50}{}
    & \cellcolor{skala_highlight}%
\gmtkcell{0.40}{\goldmedal{}%
}
    & \gmtkcell{8.01}{}
    & \gmtkcell{2.16}{\bronzemedal{}%
}
    & \gmtkcell{4.33}{}
    & \gmtkcell{0.75}{\silvermedal{}%
}
    \\
    ISO34 & {\scriptsize Isomerization energies of medium-sized organic molecules}
    & \gmtkcell{1.50}{}
    & \gmtkcell{1.29}{}
    & \gmtkcell{1.46}{}
    & \cellcolor{skala_highlight}%
\gmtkcell{0.22}{\goldmedal{}%
}
    & \gmtkcell{1.78}{}
    & \gmtkcell{1.23}{}
    & \gmtkcell{1.17}{\bronzemedal{}%
}
    & \gmtkcell{0.62}{\silvermedal{}%
}
    \\
    ISOL24 & {\scriptsize Isomerization energies of large organic molecules}
    & \gmtkcell{4.58}{}
    & \gmtkcell{4.02}{}
    & \gmtkcell{4.09}{}
    & \cellcolor{skala_highlight}%
\gmtkcell{1.18}{\goldmedal{}%
}
    & \gmtkcell{5.80}{}
    & \gmtkcell{2.73}{\bronzemedal{}%
}
    & \gmtkcell{2.98}{}
    & \gmtkcell{1.62}{\silvermedal{}%
}
    \\
    MB16-43 & {\scriptsize Decomposition energies of artificial molecules}
    & \gmtkcell{27.03}{}
    & \gmtkcell{14.69}{\bronzemedal{}%
}
    & \gmtkcell{35.97}{}
    & \cellcolor{skala_highlight}%
\gmtkcell{2.57}{\goldmedal{}%
}
    & \gmtkcell{25.28}{}
    & \gmtkcell{15.58}{}
    & \gmtkcell{33.06}{}
    & \gmtkcell{14.52}{\silvermedal{}%
}
    \\
    PArel & {\scriptsize Relative energies in protonated isomers}
    & \gmtkcell{1.53}{}
    & \gmtkcell{1.55}{}
    & \gmtkcell{1.36}{}
    & \cellcolor{skala_highlight}%
\gmtkcell{0.51}{\goldmedal{}%
}
    & \gmtkcell{1.17}{}
    & \gmtkcell{0.97}{}
    & \gmtkcell{0.63}{\bronzemedal{}%
}
    & \gmtkcell{0.59}{\silvermedal{}%
}
    \\
    RSE43\rule[-0.5em]{0pt}{0.5em} & {\scriptsize Radical-stabilization energies}
    & \gmtkcell{2.32}{}
    & \gmtkcell{1.52}{}
    & \gmtkcell{2.11}{}
    & \cellcolor{skala_highlight}%
\gmtkcell{0.37}{\goldmedal{}%
}
    & \gmtkcell{1.72}{}
    & \gmtkcell{0.63}{\silvermedal{}%
}
    & \gmtkcell{0.98}{}
    & \gmtkcell{0.77}{\bronzemedal{}%
}
    \\
    \multicolumn{5}{@{}l}{\textbf{Barrier heights}\rule[-0.3ex]{0pt}{0.3ex}} & \cellcolor{skala_highlight} & \multicolumn{4}{l}{} \\
    BH76 & {\scriptsize Barrier heights of various reaction types}
    & \gmtkcell{7.84}{}
    & \gmtkcell{7.01}{}
    & \gmtkcell{4.15}{}
    & \cellcolor{skala_highlight}%
\gmtkcell{1.75}{}
    & \gmtkcell{5.10}{}
    & \gmtkcell{1.22}{\goldmedal{}%
}
    & \gmtkcell{1.69}{\bronzemedal{}%
}
    & \gmtkcell{1.33}{\silvermedal{}%
}
    \\
    BHDIV10 & {\scriptsize Diverse reaction barrier heights}
    & \gmtkcell{7.81}{}
    & \gmtkcell{6.12}{}
    & \gmtkcell{2.90}{}
    & \cellcolor{skala_highlight}%
\gmtkcell{0.53}{\goldmedal{}%
}
    & \gmtkcell{3.21}{}
    & \gmtkcell{1.02}{\bronzemedal{}%
}
    & \gmtkcell{0.87}{\silvermedal{}%
}
    & \gmtkcell{1.28}{}
    \\
    BHPERI & {\scriptsize Barrier heights of pericyclic reactions}
    & \gmtkcell{6.27}{}
    & \gmtkcell{4.73}{}
    & \gmtkcell{1.15}{\bronzemedal{}%
}
    & \cellcolor{skala_highlight}%
\gmtkcell{1.51}{}
    & \gmtkcell{1.15}{\silvermedal{}%
}
    & \gmtkcell{1.35}{}
    & \gmtkcell{2.09}{}
    & \gmtkcell{1.13}{\goldmedal{}%
}
    \\
    BHROT27 & {\scriptsize Barrier heights for rotation around single bonds}
    & \gmtkcell{0.37}{}
    & \gmtkcell{0.76}{}
    & \gmtkcell{0.69}{}
    & \cellcolor{skala_highlight}%
\gmtkcell{0.29}{\silvermedal{}%
}
    & \gmtkcell{0.41}{}
    & \gmtkcell{0.36}{}
    & \gmtkcell{0.31}{\bronzemedal{}%
}
    & \gmtkcell{0.22}{\goldmedal{}%
}
    \\
    INV24 & {\scriptsize Inversion/racemization barrier heights}
    & \gmtkcell{1.91}{}
    & \gmtkcell{1.13}{\bronzemedal{}%
}
    & \gmtkcell{1.22}{}
    & \cellcolor{skala_highlight}%
\gmtkcell{1.03}{\silvermedal{}%
}
    & \gmtkcell{1.02}{\goldmedal{}%
}
    & \gmtkcell{1.56}{}
    & \gmtkcell{1.66}{}
    & \gmtkcell{1.29}{}
    \\
    PX13 & {\scriptsize Proton-exchange barrier in H${}_2$O, NH${}_3$, and HF clusters}
    & \gmtkcell{8.75}{}
    & \gmtkcell{8.95}{}
    & \gmtkcell{1.09}{\silvermedal{}%
}
    & \cellcolor{skala_highlight}%
\gmtkcell{1.06}{\goldmedal{}%
}
    & \gmtkcell{4.34}{}
    & \gmtkcell{5.32}{}
    & \gmtkcell{2.54}{}
    & \gmtkcell{1.88}{\bronzemedal{}%
}
    \\
    WCPT18\rule[-0.5em]{0pt}{0.5em} & {\scriptsize Proton-transfer barriers with water or no catalyst}
    & \gmtkcell{7.25}{}
    & \gmtkcell{6.18}{}
    & \gmtkcell{1.49}{\bronzemedal{}%
}
    & \cellcolor{skala_highlight}%
\gmtkcell{0.41}{\goldmedal{}%
}
    & \gmtkcell{2.30}{}
    & \gmtkcell{1.90}{}
    & \gmtkcell{1.73}{}
    & \gmtkcell{1.41}{\silvermedal{}%
}
    \\
    \multicolumn{5}{@{}l}{\textbf{Intermolecular NCIs}\rule[-0.3ex]{0pt}{0.3ex}} & \cellcolor{skala_highlight} & \multicolumn{4}{l}{} \\
    ADIM6 & {\scriptsize Interaction energies of $n$-alkane dimers}
    & \gmtkcell{0.22}{}
    & \gmtkcell{0.14}{}
    & \gmtkcell{0.14}{}
    & \cellcolor{skala_highlight}%
\gmtkcell{0.08}{\goldmedal{}%
}
    & \gmtkcell{0.12}{\bronzemedal{}%
}
    & \gmtkcell{0.29}{}
    & \gmtkcell{0.12}{}
    & \gmtkcell{0.11}{\silvermedal{}%
}
    \\
    AHB21 & {\scriptsize Interaction energies in anion-neutral dimers}
    & \gmtkcell{1.07}{}
    & \gmtkcell{1.23}{}
    & \gmtkcell{0.47}{}
    & \cellcolor{skala_highlight}%
\gmtkcell{0.41}{}
    & \gmtkcell{0.30}{\silvermedal{}%
}
    & \gmtkcell{0.94}{}
    & \gmtkcell{0.34}{\bronzemedal{}%
}
    & \gmtkcell{0.28}{\goldmedal{}%
}
    \\
    CARBHB12 & {\scriptsize Hydrogen-bonded complexes of carbene analogs}
    & \gmtkcell{1.12}{}
    & \gmtkcell{1.21}{}
    & \gmtkcell{0.30}{\bronzemedal{}%
}
    & \cellcolor{skala_highlight}%
\gmtkcell{0.40}{}
    & \gmtkcell{0.90}{}
    & \gmtkcell{0.26}{\silvermedal{}%
}
    & \gmtkcell{0.35}{}
    & \gmtkcell{0.20}{\goldmedal{}%
}
    \\
    CHB6 & {\scriptsize Interaction energies in cation-neutral dimers}
    & \gmtkcell{0.87}{}
    & \gmtkcell{0.40}{\goldmedal{}%
}
    & \gmtkcell{0.80}{\silvermedal{}%
}
    & \cellcolor{skala_highlight}%
\gmtkcell{1.29}{}
    & \gmtkcell{1.42}{}
    & \gmtkcell{1.45}{}
    & \gmtkcell{0.86}{\bronzemedal{}%
}
    & \gmtkcell{0.96}{}
    \\
    HAL59 & {\scriptsize Binding energies in halogenated dimers (incl. halogen bonds)}
    & \gmtkcell{0.73}{}
    & \gmtkcell{0.83}{}
    & \gmtkcell{0.47}{}
    & \cellcolor{skala_highlight}%
\gmtkcell{0.23}{\goldmedal{}%
}
    & \gmtkcell{0.58}{}
    & \gmtkcell{0.35}{}
    & \gmtkcell{0.30}{\bronzemedal{}%
}
    & \gmtkcell{0.28}{\silvermedal{}%
}
    \\
    HEAVY28 & {\scriptsize Noncovalent interaction energies of heavy element hydrides}
    & \gmtkcell{0.29}{}
    & \gmtkcell{0.25}{}
    & \gmtkcell{0.23}{\bronzemedal{}%
}
    & \cellcolor{skala_highlight}%
\gmtkcell{0.36}{}
    & \gmtkcell{0.34}{}
    & \gmtkcell{0.33}{}
    & \gmtkcell{0.18}{\goldmedal{}%
}
    & \gmtkcell{0.18}{\silvermedal{}%
}
    \\
    IL16 & {\scriptsize Interaction energies in anion-cation dimers}
    & \gmtkcell{0.87}{}
    & \gmtkcell{0.57}{\bronzemedal{}%
}
    & \gmtkcell{0.52}{\goldmedal{}%
}
    & \cellcolor{skala_highlight}%
\gmtkcell{0.57}{}
    & \gmtkcell{0.84}{}
    & \gmtkcell{0.53}{\silvermedal{}%
}
    & \gmtkcell{1.05}{}
    & \gmtkcell{0.94}{}
    \\
    PNICO23 & {\scriptsize Interaction energies in pnictogen-containing dimers}
    & \gmtkcell{0.88}{}
    & \gmtkcell{0.86}{}
    & \gmtkcell{0.22}{\bronzemedal{}%
}
    & \cellcolor{skala_highlight}%
\gmtkcell{0.19}{\silvermedal{}%
}
    & \gmtkcell{0.48}{}
    & \gmtkcell{0.29}{}
    & \gmtkcell{0.18}{\goldmedal{}%
}
    & \gmtkcell{0.26}{}
    \\
    RG18 & {\scriptsize Interaction energies in rare-gas complexes}
    & \gmtkcell{0.09}{\bronzemedal{}%
}
    & \gmtkcell{0.16}{}
    & \gmtkcell{0.07}{\goldmedal{}%
}
    & \cellcolor{skala_highlight}%
\gmtkcell{0.10}{}
    & \gmtkcell{0.14}{}
    & \gmtkcell{0.24}{}
    & \gmtkcell{0.10}{}
    & \gmtkcell{0.08}{\silvermedal{}%
}
    \\
    S22 & {\scriptsize Binding energies of noncovalently bound dimers}
    & \gmtkcell{0.43}{}
    & \gmtkcell{0.30}{}
    & \gmtkcell{0.23}{\bronzemedal{}%
}
    & \cellcolor{skala_highlight}%
\gmtkcell{0.14}{\goldmedal{}%
}
    & \gmtkcell{0.32}{}
    & \gmtkcell{0.35}{}
    & \gmtkcell{0.21}{\silvermedal{}%
}
    & \gmtkcell{0.24}{}
    \\
    S66 & {\scriptsize Binding energies of noncovalently bound dimers}
    & \gmtkcell{0.28}{}
    & \gmtkcell{0.31}{}
    & \gmtkcell{0.13}{\bronzemedal{}%
}
    & \cellcolor{skala_highlight}%
\gmtkcell{0.07}{\goldmedal{}%
}
    & \gmtkcell{0.27}{}
    & \gmtkcell{0.24}{}
    & \gmtkcell{0.12}{\silvermedal{}%
}
    & \gmtkcell{0.14}{}
    \\
    WATER27\rule[-0.5em]{0pt}{0.5em} & {\scriptsize Binding energies in (H${}_2$O)${}_n$, H${}^+$(H${}_2$O)${}_n$ and OH${}^-$(H${}_2$O)${}_n$}
    & \gmtkcell{3.37}{}
    & \gmtkcell{7.27}{}
    & \gmtkcell{0.77}{\goldmedal{}%
}
    & \cellcolor{skala_highlight}%
\gmtkcell{1.25}{\bronzemedal{}%
}
    & \gmtkcell{4.10}{}
    & \gmtkcell{3.83}{}
    & \gmtkcell{1.56}{}
    & \gmtkcell{1.10}{\silvermedal{}%
}
    \\
    \multicolumn{5}{@{}l}{\textbf{Intramolecular NCIs}\rule[-0.3ex]{0pt}{0.3ex}} & \cellcolor{skala_highlight} & \multicolumn{4}{l}{} \\
    ACONF & {\scriptsize Relative energies of alkane conformers}
    & \gmtkcell{0.10}{}
    & \gmtkcell{0.15}{}
    & \gmtkcell{0.15}{}
    & \cellcolor{skala_highlight}%
\gmtkcell{0.07}{}
    & \gmtkcell{0.06}{\silvermedal{}%
}
    & \gmtkcell{0.24}{}
    & \gmtkcell{0.02}{\goldmedal{}%
}
    & \gmtkcell{0.06}{\bronzemedal{}%
}
    \\
    AMINO20x4 & {\scriptsize Relative energies in amino acid conformers}
    & \gmtkcell{0.36}{}
    & \gmtkcell{0.20}{}
    & \gmtkcell{0.23}{}
    & \cellcolor{skala_highlight}%
\gmtkcell{0.18}{\goldmedal{}%
}
    & \gmtkcell{0.21}{}
    & \gmtkcell{0.29}{}
    & \gmtkcell{0.19}{\silvermedal{}%
}
    & \gmtkcell{0.19}{\bronzemedal{}%
}
    \\
    BUT14DIOL & {\scriptsize Relative energies in butane-1,4-diol conformers}
    & \gmtkcell{0.31}{}
    & \gmtkcell{0.25}{}
    & \gmtkcell{0.14}{}
    & \cellcolor{skala_highlight}%
\gmtkcell{0.06}{\bronzemedal{}%
}
    & \gmtkcell{0.30}{}
    & \gmtkcell{0.13}{}
    & \gmtkcell{0.04}{\goldmedal{}%
}
    & \gmtkcell{0.05}{\silvermedal{}%
}
    \\
    ICONF & {\scriptsize Relative energies in conformers of inorganic systems}
    & \gmtkcell{0.32}{}
    & \gmtkcell{0.30}{}
    & \gmtkcell{0.29}{}
    & \cellcolor{skala_highlight}%
\gmtkcell{0.18}{\silvermedal{}%
}
    & \gmtkcell{0.28}{}
    & \gmtkcell{0.31}{}
    & \gmtkcell{0.25}{\bronzemedal{}%
}
    & \gmtkcell{0.15}{\goldmedal{}%
}
    \\
    IDISP & {\scriptsize Intramolecular dispersion interactions}
    & \gmtkcell{3.03}{}
    & \gmtkcell{2.48}{}
    & \gmtkcell{3.18}{}
    & \cellcolor{skala_highlight}%
\gmtkcell{0.60}{\goldmedal{}%
}
    & \gmtkcell{3.59}{}
    & \gmtkcell{2.07}{\bronzemedal{}%
}
    & \gmtkcell{2.59}{}
    & \gmtkcell{1.63}{\silvermedal{}%
}
    \\
    MCONF & {\scriptsize Relative energies in melatonin conformers}
    & \gmtkcell{0.44}{}
    & \gmtkcell{0.46}{}
    & \gmtkcell{0.34}{}
    & \cellcolor{skala_highlight}%
\gmtkcell{0.27}{\bronzemedal{}%
}
    & \gmtkcell{0.22}{\goldmedal{}%
}
    & \gmtkcell{0.56}{}
    & \gmtkcell{0.25}{\silvermedal{}%
}
    & \gmtkcell{0.38}{}
    \\
    PCONF21 & {\scriptsize Relative energies in tri- and tetrapeptide conformers}
    & \gmtkcell{0.86}{}
    & \gmtkcell{0.46}{\bronzemedal{}%
}
    & \gmtkcell{0.82}{}
    & \cellcolor{skala_highlight}%
\gmtkcell{0.35}{\silvermedal{}%
}
    & \gmtkcell{0.52}{}
    & \gmtkcell{1.10}{}
    & \gmtkcell{0.32}{\goldmedal{}%
}
    & \gmtkcell{0.63}{}
    \\
    SCONF & {\scriptsize Relative energies of sugar conformers}
    & \gmtkcell{0.54}{}
    & \gmtkcell{0.54}{}
    & \gmtkcell{0.18}{}
    & \cellcolor{skala_highlight}%
\gmtkcell{0.09}{\goldmedal{}%
}
    & \gmtkcell{0.30}{}
    & \gmtkcell{0.27}{}
    & \gmtkcell{0.15}{\silvermedal{}%
}
    & \gmtkcell{0.17}{\bronzemedal{}%
}
    \\
    UPU23\rule[-\ras]{0pt}{\ras} & {\scriptsize Relative energies between RNA-backbone conformers}
    & \gmtkcell{0.48}{}
    & \gmtkcell{0.39}{\silvermedal{}%
}
    & \gmtkcell{0.45}{\bronzemedal{}%
}
    & \cellcolor{skala_highlight}%
\gmtkcell{0.35}{\goldmedal{}%
}
    & \gmtkcell{0.62}{}
    & \gmtkcell{0.52}{}
    & \gmtkcell{0.54}{}
    & \gmtkcell{0.46}{}
    \\
    \midrule
    \rule{0pt}{\dimexpr\rbs/2+\ht\strutbox\relax} &
    & \makebox[\gmtknpairwd][r]{0\,\goldmedal{}}
    & \makebox[\gmtknpairwd][r]{1\,\goldmedal{}}
    & \makebox[\gmtknpairwd][r]{4\,\goldmedal{}}
    & \cellcolor{skala_highlight}%
\makebox[\gmtknpairwd][r]{32\,\goldmedal{}}
    & \makebox[\gmtknpairwd][r]{3\,\goldmedal{}}
    & \makebox[\gmtknpairwd][r]{3\,\goldmedal{}}
    & \makebox[\gmtknpairwd][r]{6\,\goldmedal{}}
    & \makebox[\gmtknpairwd][r]{6\,\goldmedal{}}
    \\
    &
    & \makebox[\gmtknpairwd][r]{0\,\silvermedal{}}
    & \makebox[\gmtknpairwd][r]{2\,\silvermedal{}}
    & \makebox[\gmtknpairwd][r]{4\,\silvermedal{}}
    & \cellcolor{skala_highlight}%
\makebox[\gmtknpairwd][r]{6\,\silvermedal{}}
    & \makebox[\gmtknpairwd][r]{3\,\silvermedal{}}
    & \makebox[\gmtknpairwd][r]{10\,\silvermedal{}}
    & \makebox[\gmtknpairwd][r]{8\,\silvermedal{}}
    & \makebox[\gmtknpairwd][r]{22\,\silvermedal{}}
    \\
    \rule[-\ras]{0pt}{\ras} &
    & \makebox[\gmtknpairwd][r]{1\,\bronzemedal{}}
    & \makebox[\gmtknpairwd][r]{5\,\bronzemedal{}}
    & \makebox[\gmtknpairwd][r]{9\,\bronzemedal{}}
    & \cellcolor{skala_highlight}%
\makebox[\gmtknpairwd][r]{4\,\bronzemedal{}}
    & \makebox[\gmtknpairwd][r]{2\,\bronzemedal{}}
    & \makebox[\gmtknpairwd][r]{9\,\bronzemedal{}}
    & \makebox[\gmtknpairwd][r]{14\,\bronzemedal{}}
    & \makebox[\gmtknpairwd][r]{11\,\bronzemedal{}}
    \\\midrule
    \multicolumn{2}{@{}l}{WTMAD-2\rule{0pt}{\rbs+\ht\strutbox}\rule[-\ras]{0pt}{\ras}}
    & \gmtkcell{8.35}{}
    & \gmtkcell{7.25}{}
    & \gmtkcell{5.56}{}
    & \cellcolor{skala_highlight}%
\gmtkcell{2.80}{\goldmedal{}%
}
    & \gmtkcell{6.38}{}
    & \gmtkcell{4.83}{}
    & \gmtkcell{3.96}{\bronzemedal{}%
}
    & \gmtkcell{3.23}{\silvermedal{}%
}
    \\
    \bottomrule
\end{tabular}%
}

\end{table}

\section{Accuracy of \clippy}
\label{sec:SkalaMainResults}
An XC functional is used to predict the energy and properties of {\em new} molecules: it must therefore show {\em compositional} generalization to different compounds from those seen during training. This should not be confused with the simpler {\em configurational} generalization to unseen configurations of the same system used in training.\cite{akashi_can_2025}
For this reason, as detailed in Sec.~\ref{sec:training-data}, we have removed from the training set all systems with more than two atoms whose molecular graphs match those in our two main test sets, GMTKN55 and W4-17.
For performance across main group chemistry, we test on the GMTKN55 database,\cite{goerigk_look_2017}  which is the de facto standard benchmark for electronic structure methods, comprising 55 subsets
covering five categories: basic properties, thermochemistry, kinetics, intermolecular non-covalent interactions, and
conformational energies. The overall accuracy of an electronic structure method on this broad dataset is encoded in the weighted total mean absolute deviation (WTMAD-2).\cite{goerigk_look_2017} Additionally, because our largest training dataset consists of total atomization energies (TAEs) -- the energy required to dissociate a molecule into its constituent atoms, and a stringent test for electronic-structure methods -- we also evaluate performance on the well-established W4-17 benchmark,\cite{karton_w417_2017} comprising 200 diverse atomization energies of small molecules.

The accuracy of \clippy\ for general main-group chemistry is readily apparent in Fig.~\ref{fig:jacobs-ladder}, which compares its WTMAD-2 error on GMTKN55 to that of widely used functionals, including the best-performing ones in the first three rungs of Jacob’s ladder (up to the hybrid, or $O(N^4)$, rung).\footnotemark 
\footnotetext{We do not include comparisons with local hybrid functionals, which are more computationally demanding than global hybrids in standard implementations and have not yet seen widespread adoption. Examples include the DM21 functional,\cite{kirkpatrick_pushing_2021} a machine-learned local hybrid that reports a WTMAD-2 of 3.97~kcal/mol. More recent local hybrids\cite{wodynski_localhybrid_2026} incorporate neural networks to learn portions of the functional and achieve WTMAD-2 values down to 2.47~kcal/mol. We also omit the very recent global hybrid functional COACH,\cite{liang_reaching_2026} which is not yet widely available and reports incremental improvements over $\omega$B97M-V. In all these approaches, portions of the GMTKN55 benchmark are included in the training data, and the W4-17 dataset is used in its entirety for training.} 
\clippy achieves the lowest overall WTMAD-2 error, surpassing all hybrid functionals despite operating at semi-local cost. A detailed breakdown across the 55 individual subsets of GMTKN55 in Table~\ref{tab:gmtkn55-breakdown} shows that \clippy is the best-performing functional on 32 of the 55 subsets, more than all other functionals combined. In particular, \clippy outperforms the best hybrid functional across nearly all thermochemistry and kinetics benchmarks, ranging from small to large systems. It is widely assumed that this level of accuracy requires ascending to the hybrid (or even double-hybrid) rung of Jacob's ladder; however, our results indicate that \clippy learns the necessary non-local effects directly from data, without recourse to Hartree–Fock exchange. 

Moving to atomization energies, we note that all molecular structures in our TAE training set have single-reference electronic structure character, which can be treated accurately with the thermochemical W1-F12 and W1w protocols\cite{karton_w4_2006,karton_explicitly_2012} based on CCSD(T)/CBS. The test set W4-17, by contrast, is labeled with the higher-level CCSDTQ5-based W4 protocol,\cite{karton_w4_2006,karton_w411_2011} which can also handle the multi-reference cases present in this test set.
Training on the large TAE set enables \clippy to reach chemical accuracy (MAE 0.92 kcal/mol) on the single-reference subset of W4-17 (183 reactions out of 200), with an overall MAE of 1.23 kcal/mol when we include also the multi-reference cases for which we lack training data (see Extended Data Fig.~\ref{fig:w4-17-breakdown}). In Extended Data Fig.~\ref{fig:holdout-generalization}, we evaluate \clippy\ on holdout splits of the training data spanning diverse chemical properties, demonstrating strong generalization across a broad chemical space.

In Sec.~\ref{sec:suppeval} of the Supplementary Information, we further assess aspects of practical usability, such as SCF-cycle convergence as reported in Table~\ref{tab:conv-analysis-summary} and grid-size convergence shown in Fig.~\ref{fig:grid-convergence}. As expected for a deep-learned functional, \clippy exhibits slightly less smooth behavior than traditional functionals, but all variations remain within acceptable ranges.

\begin{figure}[t!]
    \begin{subfigure}{0.55\linewidth}
        \centering
        \hspace{-.9em}%
        \includegraphics[width=1.0\linewidth]{figures/ablation_combined_skalav11.pdf}      
        \vspace{-0.2cm}
        \caption{Ablations on Diet GMTKN55}
        \label{fig:model-data-ablation}
    \end{subfigure}
    \begin{subfigure}{0.43\linewidth}
        \centering
        \hspace{-.9em}%
        \includegraphics[width=1.0\linewidth]{figures/tc_at_gamma1_final.pdf}
        \vspace{-0.2cm}
        \caption{Analysis of $T_c$ positivity constraint}
        \label{fig:tc_data_ablation_main}    
    \end{subfigure}
    \caption{\textbf{Model and data ablations}. (a) Left panel: Accuracy of \clippy's non-local architecture compared with its local branch only, trained on the full dataset in Extended Data Table~\ref{tab:training-data}. Right panel: Data composition ablation from Extended Data Table~\ref{tab:training-data}: set A contains thermochemistry, basic properties, distorted geometries and transition metals. Set B contains barrier heights and reactions. Set C contains NCI and conformations. Details on the specific datasets that belong to A, B, C are shown in Extended Data Table~\ref{tab:training-data}.
    In both ablations, for each setting we trained five models using different random seeds (affecting model initialization and data ordering); each dot represents one model instance and bars show the average. Note that for the local models, two seeds yield nearly identical results, causing their dots to overlap. SCF fine-tuning was done for 20,000 steps, and evaluation was performed on the smaller Diet GMTKN55.\cite{gould_diet_2018} For details on SCF convergence in the ablation studies, see Sec.~\ref{sec:scf-retry}. (b) The kinetic correlation component $T_\cor$ of $\Exc$ for several atoms, when training on MSR-ACC/TAE25\cite{ehlert_accurate_2025} only, and when training on the full set. Reference data\cite{vuckovic_density_2019,vuckovic_interpolated_2017} obtained by reverse-engineering CCSD densities in a pure Kohn-Sham framework for H, He, Ne and Ar are also shown for comparison.}
\end{figure}

\subsection{The importance of learning nonlocal interactions}
\label{sec:local-ablation}
The non-local branch of our architecture is remarkably lightweight --- \clippy comprises just $385,217$ parameters in total, with $331,265$ allocated to the local branch. This compact design is crucial for maintaining scalability. We therefore quantify the performance gains enabled by learned non-locality. In the left panel of Fig.~\ref{fig:model-data-ablation}, we compare models that are purely local to models sharing the same non-local architecture as \clippy, both on the full training set of Extended Data Table~\ref{tab:training-data}. For each setting, we trained multiple instances with different random seeds, and evaluated them on Diet GMTKN55,\cite{gould_diet_2018} a representative subset designed to approximate the WTMAD-2 metric on the full GMTKN55 dataset. 
This ablation shows the significant improvement enabled by the non-local branch, which reduces the average WTMAD-2 error across 5 seeds by about 50\%.

The clear performance gap between the local and non-local models confirms that \clippy captures essential non-local effects directly from data, without relying on Hartree–Fock exchange.
This finding is further supported by the molecular ``thermometer'' recently introduced by \citet{grimme_molecular_2026} for quantifying effective non-local exchange. 
As shown in Extended Data Fig.~\ref{fig:non-local-exchange}, \clippyOnePointOne yields an effective non-local exchange of 53.7\%, close to the 60\% reference value, outperforming several hybrid functionals.\cite{grimme_molecular_2026} As another prototypical application where Hartree–Fock exchange is widely considered essential, we report in Extended Data Fig.~\ref{fig:w1-sn2-bh} barrier heights for S$_{\rm N}$2 reactions:\cite{karton_w1sn2bh_2026} once again, \clippyOnePointOne achieves accuracy comparable to the best hybrid functionals, with errors roughly 2.5 times smaller than those of the best meta‑GGAs.

\subsection{\clippy's accuracy improves systematically with training data}
\label{sec:data-ablation}
The right panel in figure~\ref{fig:model-data-ablation} reports an ablation study on training data composition, which shows systematic improvement of \clippy on Diet GMTKN55 as we add more diverse chemistry in training. We find that training on thermochemistry, basic properties, distortions and transition metals (set A) on average yields GGA-level accuracy on Diet GMTKN55. Adding reactions and barrier heights (set B) brings a modest improvement, and the inclusion of non-covalent interactions and conformers (set C) further reduces the WTMAD-2 error (see Extended Data Table~\ref{tab:training-data} for the detailed composition of each set). 
At each stage, increasing the size and diversity of the training data both lowers the average WTMAD-2 error and reduces the inter-seed variance, reflecting better coverage of the relevant chemical space.

\subsection{The emergence of learned physical constraints with training data}
Exact constraints of the XC functional have been pivotal in guiding the approximations that made DFT practical for thousands of applications in chemistry and materials science.\cite{kaplan_predictive_2022} Many are ingeniously built in by design,\cite{perdew_generalized_1996,sun_strongly_2015} lending robustness to the functionals that include them. In \clippy, we imposed only minimal constraints to maximize model flexibility, making it interesting to explore whether physical constraints can emerge from data.

In Fig.~\ref{fig:tc_data_ablation_main}, we assess whether the model learns to satisfy the positivity of $T_\cor$,\cite{levy_hellmannfeynman_1985} the kinetic correlation component of $\Exc$. This constraint reflects the physical principle that electron correlation increases the kinetic energy, as electrons move faster to avoid one another due to their mutual Coulomb repulsion.
We evaluate $T_\cor^\theta$  for several atoms and perform a data ablation analysis. When \clippy\ is trained only on the MSR-ACC/TAE25 dataset,\cite{ehlert_accurate_2025}
the constraint is largely violated. In contrast, training on the full dataset leads to consistent satisfaction of the constraint, yielding values of $T_\cor^\theta$  not far from values obtained by reverse-engineering CCSD densities.\cite{vuckovic_interpolated_2017,vuckovic_density_2019}

A subtle point is that our functional is evaluated within the generalized Kohn-Sham (GKS) framework, where the kinetic correlation energy is not constrained to match its definition in the pure Kohn-Sham scheme, and may therefore differ from the corresponding exact $T_\cor$. A meta-GGA within the generalized Kohn-Sham has much more freedom to move energy contributions between the kinetic energy and the XC functional. Notably, with our large training set \clippy yields $T_\cor$ values close to pure Kohn-Sham ones, despite the additional freedom of the GKS framework.
For more detailed results and discussion, the reader is referred to Sec.~\ref{sec:exact-constraints} of the Supplementary Information.

\section{Beyond energies: Densities and equilibrium geometries}
After establishing the accuracy of \clippy on energies, we assess the quality of its predicted electron densities via dipole moments and evaluate its predicted equilibrium geometries.
\begin{figure}[t!]
    \begin{minipage}[b]{0.5\linewidth}
    \centering
    \subfloat[Effect of fine-tuning with SCF on dipole and reaction error]{
        \centering
        \includegraphics[width=\linewidth]{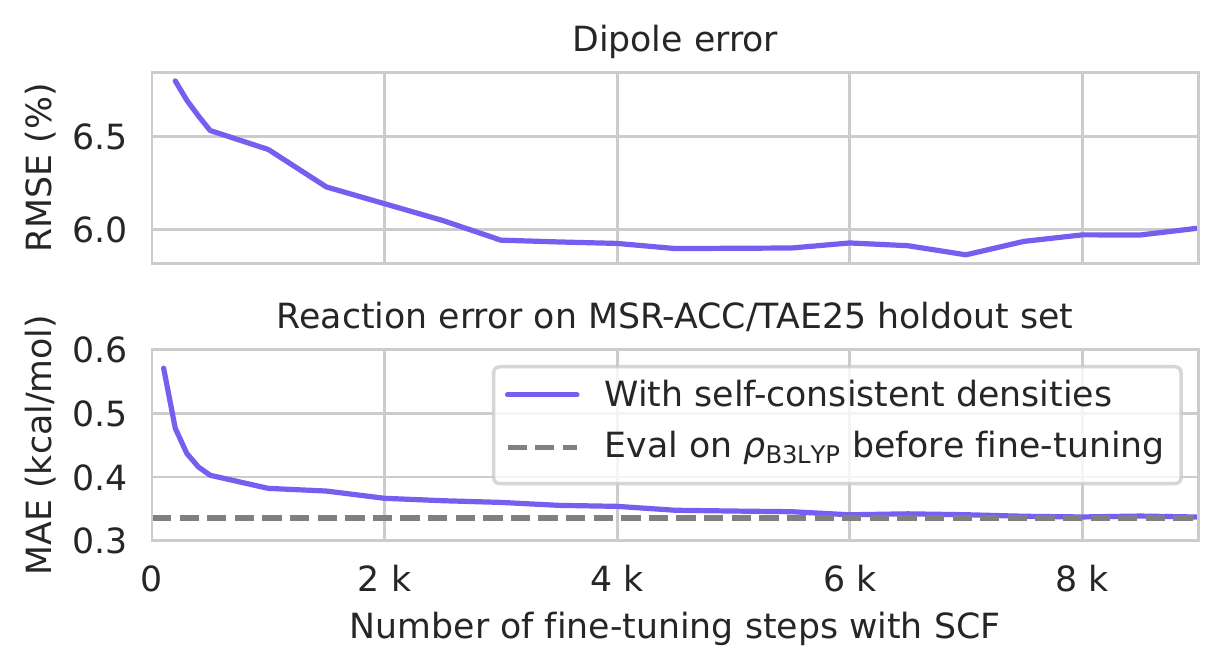}
        \vspace{-0.2cm}
        \label{fig:finetune-gap-dipole}
    }
    \end{minipage}
    \begin{minipage}[b]{0.5\linewidth}
    \centering
    \subfloat[Dipole errors\cite{hait_how_2018a} of various functionals]{
        \addtolength{\tabcolsep}{-3pt}
        \tablefontsize 
\begin{tabular}{lllllll}
\toprule
 & \multicolumn{3}{l}{RMSE} & Mean & Max & Std \\\cmidrule(lr){2-4}
 & All & NSP & SP &  &  &  \\
\midrule
revPBE & 11.79 & 9.87 & 14.78 & 8.45 & 42.96 & 8.26 \\
r\textsuperscript2SCAN & 8.95 & 8.16 & 10.28 & 6.07 & 32.27 & 6.60 \\
B97M-V & 11.54 & 10.19 & 13.74 & 7.21 & 67.70 & 9.03 \\
B3LYP & 7.09 & 6.60 & 7.95 & 4.13 & 45.94 & 5.79 \\
M06-2X & 7.73 & 7.69 & 7.79 & 4.19 & 61.56 & 6.52 \\
\(\omega\)B97X-V & 5.18 & 4.64 & 6.08 & 3.58 & 18.81 & 3.76 \\
\(\omega\)B97M-V & 5.84 & 5.44 & 6.54 & 3.81 & 32.31 & 4.44 \\
\textbf{\clippyOnePointOne} & 4.43 & 4.23 & 4.80 & 3.04 & 18.75 & 3.24 \\
\bottomrule
\end{tabular}

        \vspace{3.1em}
        \vspace{-0.2cm}
        \label{tab:dipole}
    }
    \end{minipage}\\
    \vspace{0.5em}\\
    \begin{minipage}[b]{\linewidth}
        \centering
        \include{tables/geometry}
        \subcaption{Geometry optimization errors of various functionals}
        \label{tab:geometry}
    \end{minipage}
    \caption{\textbf{Dipoles and equilibrium geometries.}
        (a): 
            The dipole error\cite{hait_how_2018a} (top) and reaction error on the holdout set of total atomization energies (bottom) during the fine-tuning of \clippy with self-consistent densities instead of B3LYP densities.
            For dipoles, we show the root-mean-square of the regularized error\cite{hait_how_2018a} (RMSE, top) and, for energies, the mean absolute error (MAE, bottom). 
        (b):
            Comparison of \clippy's dipole errors on the benchmark of \citet{hait_how_2018a} against reference functionals.
        (c): 
            Geometry optimization results. The geometries in the benchmark datasets LMGB35, HMGB11\cite{grimme_consistent_2015}, CCse21\cite{piccardo_semiexperimental_2015} and W4-11-GEOM\cite{spackman_basis_2016} were optimized for various functionals and \clippy. Average absolute errors are reported for bond lengths (in \r{A}ngstrom) and bond angles (in degrees) compared to the ground truth values from these datasets. Box plots show the quartiles of the error distribution. 
        }
        \vspace{-0.3cm}
\end{figure}

\subsection{Densities} \label{sec:densities}
Recall that the energy error from a KS DFT calculation with a given XC functional can be decomposed into two components: a {\em functional error}, which is the error the functional would make if evaluated on the exact density, and a {\em density-driven error}, which is the error the exact functional would make when evaluated on the self-consistent density of the approximate functional.\cite{kim_understanding_2013,mezei_electron_2017,gubler_accuracy_2025} 
These two errors can compensate each other,\cite{kanungo_unconventional_2024,kaplan_how_2024} yielding XC approximations that improve energies by worsening their SCF densities, ``straying from the path toward the exact functional'', quoting \citet{medvedev_density_2017} 

We pretrain our functional to reproduce accurate wavefunction energy differences when evaluated on the approximate densities $\rho_{\rm B3LYP}$. We then fine-tune it for a small number of steps using on-the-fly SCF densities, thereby closing the gap between the accuracy learned on $\rho_{\rm B3LYP}$ and that obtained on the self-consistent densities $\rho_{\rm \clippy}$ produced by \clippy, as detailed in Sec.~\ref{subsupp:SCF-finetuning} of the Supplementary Information. To ensure that this SCF fine-tuning does not rely on error compensation, we monitor the quality of the resulting densities by comparing their dipole moments against a highly accurate dataset of 151 structures.\cite{hait_how_2018a}
Fig.~\ref{fig:finetune-gap-dipole} illustrates this process. The lower panel shows, as a function of the number of fine-tuning steps, the gap between the accuracy learned on B3LYP densities and the actual SCF evaluation of \clippy\ on the TAE holdout set; the dashed line indicates the accuracy on $\rho_{\rm B3LYP}$ at the end of the pretraining phase. The upper panel reports the errors of the SCF \clippy\ density during fine-tuning. In an initial phase ($\lesssim 2000$ fine-tuning steps), SCF convergence is erratic and errors are reported only for the molecules that converge. After $\sim 2000$ steps, all systems converge and the error reflects the full dataset. From this point on, both energies and densities improve simultaneously -- an emerging exact property consistent with the behaviour of the exact functional.
The final \clippy\ error on the dipole dataset falls well below that of B3LYP and outperforms the best range-separated hybrid functionals, as shown in Table~\ref{tab:dipole}.

\subsection{Equilibrium geometries}
One of the primary uses of DFT is the prediction of equilibrium molecular structures by relaxing the nuclear coordinates to their lowest-energy configuration. We find that when \clippy\ is trained only on equilibrium structures and reaction paths, its predicted equilibrium geometries reach, at best, GGA-level accuracy.
Motivated by the idea that \clippy's accuracy can be systematically improved by augmenting the training data to target specific regions of chemical space and properties, we generated approximately 20,000 slightly distorted structures from a subset of molecules in the MSR-ACC/TAE dataset. For each distorted structure, we computed the energy difference with respect to equilibrium using the accurate W1w protocol,\cite{karton_w4_2006,karton_explicitly_2012}  thereby constructing the MSR-ACC/Distortion dataset. Because the MSR-ACC/TAE equilibrium structures were originally obtained with an approximate functional\cite{ehlert_accurate_2025} (B3LYP), the resulting W1w labels for these small distortions effectively encode information about minima at CCSD(T)/CBS accuracy. 

We benchmark geometries optimized with \clippy\ against (semi-)experimental datasets comprising light main-group bond lengths (LMGB35),\cite{grimme_consistent_2015} heavy main-group bond lengths (HMGB11),\cite{grimme_consistent_2015} as well as bond lengths and angles for small molecules in the CCse21\cite{piccardo_semiexperimental_2015} and the CCSD(T)/CBS W4-11-GEOM datasets.\cite{spackman_basis_2016} The results, shown in Table~\ref{tab:geometry}, compare \clippy not only with functionals across different rungs of Jacob’s ladder, but also with the semi-empirical GFN2-xTB method.\cite{bannwarth_gfn2xtban_2019} \clippy achieves an accuracy on par with, or exceeding, that of the best hybrid functionals.

In practical applications, \clippy would typically be used to compute both equilibrium structures and reaction energies. In Sec.~\ref{sec:SkalaMainResults}, however, we follow standard practice by evaluating all functionals, including \clippy, on fixed accurate geometries. This raises the question of the additional error introduced when a functional is used to determine both quantities. This error can be quantified using the GEO\cite{vuckovic_quantifying_2020} metric, defined as the energy difference obtained when a given functional is evaluated on its own optimized geometry versus a reference geometry. For \clippy, we find a very small average GEO of $\sim 0.02$~kcal/mol on the W4-11-GEOM dataset.
Details of the evaluation protocol are provided in Sec.~\ref{sec:suppgeom} of the Supplementary Information, alongside GEO results for other functionals and an ablation study that shows the impact of the MSR-ACC/Distortion training set on geometry accuracy (Sec.~\ref{sec:geometry-ablation}).  Besides improving structure prediction, the MSR-ACC/Distortion dataset allows \clippy to generalize to the prediction of highly strained conformations of larger molecules such as adenosine, benzylpenicillin, and efavirenz,\cite{brew_wiggle150_2025} as shown in Extended Data Figs.~\ref{fig:wiggle150-breakdown} and \ref{fig:wiggle150-vib-ablation}.

\section{Computational Efficiency and Integration into Production Codes}
\label{sec:cost}

\begin{figure}[tb]
    \begin{minipage}[b]{0.55\linewidth}
    \vspace{-0.5cm}%
    \includegraphics[width=\linewidth]{figures/scaling_exc_only.pdf}
    \subcaption{}
    \end{minipage}%
    \hspace{0.05\linewidth}%
    \begin{minipage}[b]{0.4\linewidth}
    \includegraphics[width=\linewidth]{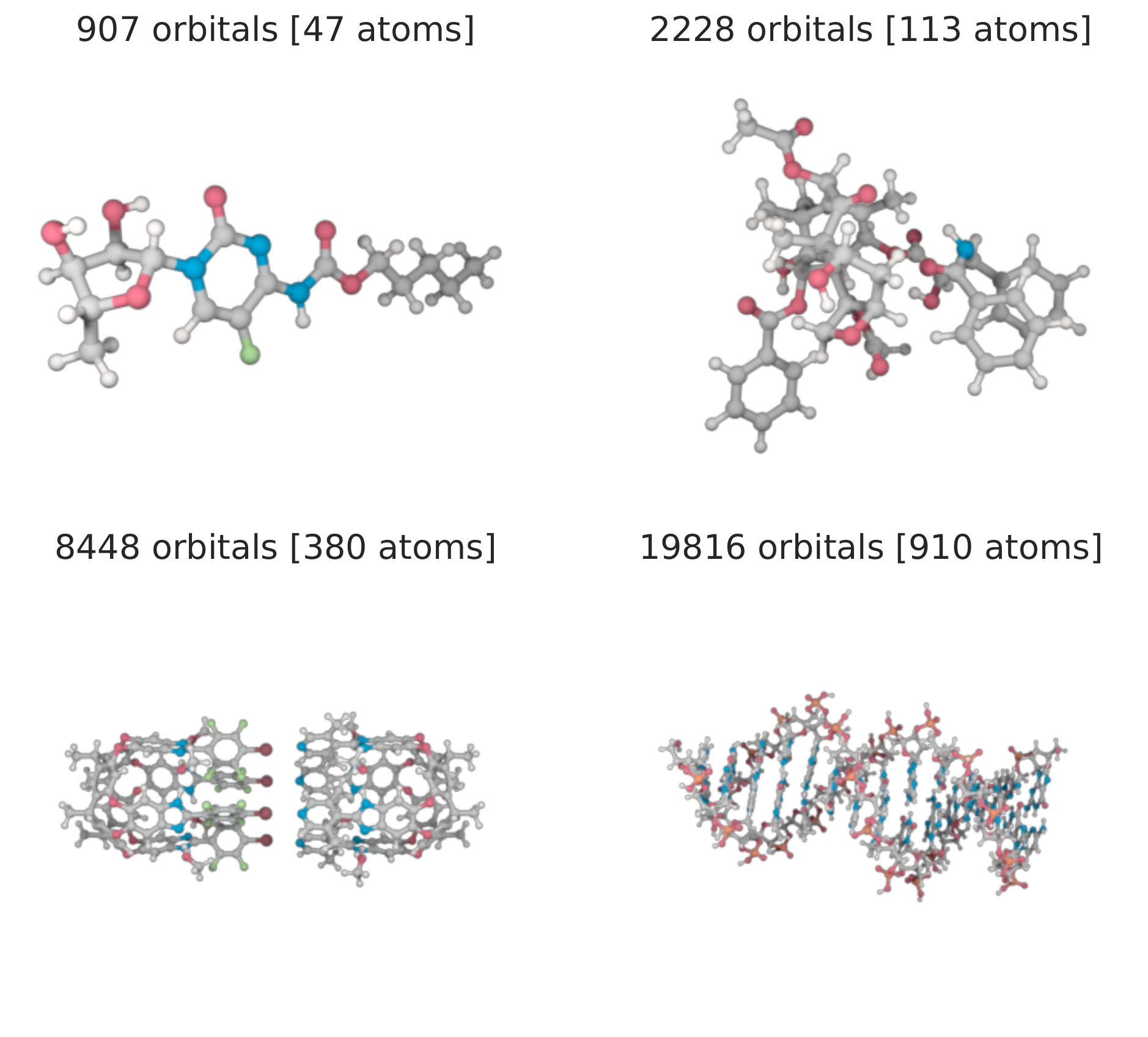}
    \vspace{-1.cm}
    \subcaption{}
    \end{minipage}%
    \caption{\textbf{Computational cost.}
    (a) Wall-clock time for exchange-correlation integration per SCF step as a function of molecular size, measured using GauXC.
    Only the XC numerical integration is timed: eval\_exc\_vxc for pure functionals (SVWN5, revPBE, r\textsuperscript{2}SCAN) and eval\_exc\_vxc plus sn-LinK exact exchange for hybrids (B3LYP, M06-2X).
    Coulomb matrix construction, diagonalization, and SCF convergence are not included.
    All computations use the def2-TZVP basis set, GauXC's GM3 integration grid (Mura–Knowles radial quadrature, Robust pruning), and unrestricted Kohn-Sham (UKS).
    The left panel shows GPU timings on an NVIDIA A100; the right panel shows CPU timings using 16 OpenMP threads on an AMD EPYC 7V13.
    We were unable to run hybrid functionals for molecules with over 2000 basis functions with GauXC's sn-LinK.
    Lines show fitted power laws $a N_{\rm orbitals}^n$; the fitted exponent $n$ is reported in each legend entry.
    Each data point is the minimum of five timed iterations following three warmup iterations.
    (b) A sample of the molecules used for evaluating timings of \clippy with GauXC.
    See Sec.~\ref{sec:eval-sets} for more information on all molecules.}
    \label{fig:timing}
\end{figure}
The computational cost of quantum chemistry methods is commonly expressed through its asymptotic scaling with system size, $O(f(N))$, a convention we have followed so far in this paper. In practice, the prefactors of that scaling can differ by orders of magnitude between methods, the cost can be dominated by other terms for smaller to medium-sized molecules, and the bottleneck for scaling may be memory rather than compute.\cite{scuseria_linear_1999} Moreover, hardware-specific optimizations and algorithmic advances continue to lower the cost in practice.  

For all these reasons, although our architecture design ensures that \clippy has the same asymptotic scaling as meta-GGA semi-local DFT, we have to empirically verify its prefactor and actual cost as system size increases. A relevant analogy to clarify why this is crucially important is the following. 
At the hybrid rung of Jacob's ladder we find both global hybrids and local hybrids. For global hybrids, the XC functional contains a fixed fraction of exact exchange evaluated on the basis set, which can be made computationally efficient but lacks universality, as different systems often require different optimal fractions. In contrast, the more flexible local hybrids allow the fraction of exact exchange to be position-dependent, requiring the exchange to be evaluated on the grid.
Although both have the same asymptotic scaling, the latter has a much larger prefactor, with basic implementations being even more expensive than the double hybrids of the next rung. Despite impressive progress over the last decade,\cite{bahmann_efficient_2015,laqua_efficient_2018,kussmann_highly_2021} less costly implementations of local hybrids are still rare,\cite{kaupp_next_2024} which has prevented their widespread use so far. 

For five popular XC functionals (SVWN5, revPBE, r\textsuperscript{2}SCAN, B3LYP, M06-2X) and \clippy, spanning semilocal to hybrid functionals, we have measured the cost of the XC numerical integration together with sn-LinK\cite{neese_efficient_2009,laqua_efficient_2018,laqua_accelerating_2021,williams-young_distributed_2023} exact exchange for hybrid functionals (B3LYP and M06-2X) (\cref{fig:timing}). This isolates the cost of XC evaluation from the other components of a KS calculation, which can vary substantially across different codes and implementations. On GPU, \clippy\ exhibits a computational cost within 30\% of the meta-GGA functional r\textsuperscript{2}SCAN. On CPU, we observe an approximately twofold prefactor relative to r\textsuperscript{2}SCAN for smaller systems, which decreases toward unity as the cost becomes dominated by atomic orbital evaluation for larger systems. Hybrid functionals remain roughly a factor of 3–6 more expensive than \clippy.

Having confirmed empirically that the \clippy\ architecture is scalable while retaining the computational cost of semilocal DFT, a second barrier to its widespread adoption is its availability in production codes. Many modern electronic structure packages implement standard XC functionals via libraries such as libXC\cite{lehtola_recent_2018}, which process grid points independently. The grid-level non-locality introduced by \clippy\ makes it incompatible with this paradigm. However, our results suggest that achieving substantially higher accuracy in DFT without increasing computational cost will require enabling deep-learning functionals that learn non-locality, such as \clippy, to be readily implemented and optimized within production codes.
For quantum chemistry codes based on Gaussian basis sets, the GauXC library\cite{williams-young_efficient_2020,williams-young_distributed_2023} provides a practical pathway to support this new class of functionals. GauXC is a modern, modular C\texttt{++} library for the evaluation of exchange-correlation (XC) energies and their derivatives in Gaussian basis sets: given a density matrix, it performs all grid operations internally and returns the required integrated quantities in the chosen basis, using efficient, scalable distributed-memory XC and $K$ integrators for both CPU and accelerator-based (GPU) architectures. Codes such as Psi4 \cite{smith_si4_2020} and CP2K \cite{kuhne_cp2k_2020} are already integrating \clippy\ through GauXC. Another possible pathway, which works for any basis set, is direct integration from the released model checkpoint, and this is used in the integration to PySCF \cite{sun_pyscf_2018} and is available to other DFT codes through Python, C/C\texttt{++}, and Fortran interfaces.

\section{Conclusions}
We have presented a deep learning-based exchange-correlation functional, \clippy, that marks a significant step forward in the long-standing quest for a general-purpose, chemically accurate and computationally efficient density functional. 
By assembling a large-scale training set, designing a training protocol that scales efficiently, and developing a non-local architecture with low inference cost, we demonstrate that it is possible to learn non-local quantum mechanical effects from simple semi-local inputs, without sacrificing the favourable scaling of semi-local DFT. \clippy outperforms state-of-the-art range-separated hybrid functionals across a broad range of main-group chemistry at the cost of semi-local DFT, demonstrating that deep learning can break the traditional accuracy-cost trade-off that has governed XC functional development for decades.

As we continue to expand the training dataset to encompass a broader range of chemical phenomena, we expect \clippy to systematically improve in both accuracy and generality.
As a first step toward extending beyond main‑group chemistry, we have incorporated minimal atomic information for 3d and 4d transition metals into the training set. As a preview, we probe \clippy on selected transition‑metal benchmarks (see Extended Data Table~\ref{fig:eval-tm}), where, even with this minimal training signal, it already performs on par with established functionals, indicating strong transferability; further gains, however, will require substantially expanding the scale and diversity of high‑accuracy data for transition‑metal chemistry.
A key challenge in this endeavor will be extending coverage to multi‑reference and strongly correlated systems, where generating accurate reference data at scale remains an obstacle that will require new scientific and computational advances to overcome.

One of the key advantages of learning an XC functional --- rather than a force field --- from high-accuracy wavefunction data is that the KS framework inherently captures the dominant energy contributions needed to generalize across unseen elements and larger systems. The XC functional represents a smaller correction term, and by embedding a minimal set of physical constraints into its design, \clippy remains robust: it generalizes with high accuracy to most thermochemistry benchmark sets and, in the worst-case scenario, defaults to the performance of standard semi-local DFT. This makes learning an accurate XC functional a compelling strategy for transferring the accuracy of wavefunction methods from small systems to the medium-large ones accessible to DFT. In turn, the learned functional can be used to generate high-quality data for larger systems, enabling the training of force fields and other models that typically require large-scale datasets generated with DFT. This creates a cascade of accuracy transfer across scales, with the potential to transform the predictive power of computational chemistry.

\printbibliography
\numgdef\resetnum{\csuse{blx@labelnumber@\therefsection}+1}
\end{refsection}

\clearpage

\begin{refsection}
\section{Methods}
Here we expand on details of the model architecture and the training data. The Supplementary Information contains further detailed information on the model (\cref{sec:supp-model}), training details (\cref{supp:training-details}), training data (\cref{sec:training-data-details}), evaluation protocols (\cref{sec:suppeval}) and additional results (\cref{sec:suppadditional}). 

\subsection{\clippy: A model for scalable non-local representation learning}
\label{sec:architecture}
\clippy's enhancement factor in Eq.~\eqref{eq:Exc-basic} is a non-local functional modeled with a deep neural network that takes as input a set of semi-local, density-dependent features $\mathbf{x}[\dens]$ from the standard meta-generalized-gradient approximation (meta-GGA) rung, which has $O(N^3)$ scaling, and which are represented on the large, irregular numerical integration grid used in DFT. 
The challenge here is to design an accurate XC functional that models intricate non-local interactions across the grid in order to achieve the accuracy that is often only attainable by more expensive functionals of a higher rung, while maintaining a computational cost comparable to functionals from the meta-GGA rung. While a naive solution with all-to-all communication across the grid would enable non-local representation learning, it is not a scalable design, since the cost of doing so on grids of the order of $10^4\sim10^6$ points quickly grows out of control. Instead, \clippy introduces a second coarse grid with far fewer points,\cite{gao_learning_2024} which acts as an intermediary layer through which the points on the finer grid can communicate.  

Extended Data Fig.~\ref{fig:architecture-diagram-nonlocal} shows the overall schematic of the neural network architecture. Starting from the input meta-GGA features, the seven semi-local inputs are log-transformed, followed by a small multilayer perceptron (MLP) that acts strictly locally on each grid point.
The MLP is applied twice, once to each spin-ordering of the transformed features, followed by an averaging operation. This yields a spin-symmetrized semi-local hidden representation that serves as input for the rest of the model. By making the hidden layer spin symmetric before feeding it through any non-local computation across the grid, we avoid having to run the more expensive part of the non-local neural network twice, saving computational cost. 

Before the spin-symmetrized features are passed into the non-local interaction model, they are projected to a lower-dimensional hidden vector. Subsequently, the coarse points collect non-local information from the fine grid, analogous to the accumulation of multipole moments. More specifically, for each coarse point, the local hidden features on the atom's integration grid are projected onto a product of radial basis functions and spherical harmonics that depend on the distance vector between the coarse and fine points, followed by an integration over the atom's grid points. After downsampling, we apply an equivariant linear mixing and a symmetric contraction\cite{batatia_mace_2022} on the coarse features to capture higher-body-order correlations. Using the same product basis of radial and spherical components, we then construct functions that when evaluated on the finer grid yield non-local hidden features on each fine grid point, which are invariant with respect to the Euclidean symmetry. This downsampling-processing-upsampling cycle is repeated through a stack of $L$ non-local layers, with each layer maintaining the hidden dimensionality through a skip connection. The use of atom-centered grids, where each grid point is associated with a single atomic center, ensures that the model scales linearly in the number of atoms.

Finally, the features from the last non-local layer are processed through a purely local MLP and projected down to a scalar value per grid point. The scalar value is passed through a scaled sigmoid activation function with a range between 0 and 2,\cite{kirkpatrick_pushing_2021} yielding a bounded enhancement factor that enforces the Lieb-Oxford lower bound.\cite{lieb_lower_1979} This is substituted into the discretized equivalent of Eq.~\eqref{eq:Exc-basic} to yield the predicted $\Exc^{\theta}[\dens]$.   

In Sec.~\ref{sub:supplmodeltheory} we show that the hidden features on the coarse grid can be interpreted as multipole moments, and that the non-local module has the expressivity to model multi-body interactions through the symmetric contraction.\cite{dusson_atomic_2021, batatia_mace_2022} While in principle the non-local module has the ability to approximate non-local interactions independently of where the coarse points are placed, we take advantage of the structure of integration grids typically used in DFT --- centered around the atomic centers --- and place the coarse-grid points on such atomic centers. For more details on the neural network architecture, see Sec.~\ref{sec:supp-model} in the Supplementary Information.

\subsection{Training data}
\label{sec:training-data}
Our training data comprise $\sim$400k reaction energies (Extended Data Tab.~\ref{tab:training-data}) computed at the CCSD(T)/CBS level of theory or higher, as detailed in Sec.~\ref{sec:training-data-details} in the Supplementary Information.
The largest subset of our training data is composed of $\sim$120k diverse total atomization energies for general molecules with up to nine non-hydrogen atoms (MSR-ACC/TAE).
A subset of structures up to five non-hydrogen atoms from this dataset consisting of a single stable molecular fragment are released as the MSR-ACC/TAE25 dataset, described in \citet{ehlert_accurate_2025} As part of MSR-ACC we also include data that goes beyond atomization energies. In particular, we include proton affinities (MSR-ACC/PA), ionization potentials (MSR-ACC/IP), and electron affinities (MSR-ACC/EA) for the molecules in the MSR-ACC/TAE dataset. 
For conformational energies, the MSR-ACC/Conf dataset includes all conformers within a 10 kcal/mol energy window of the molecules in MSR-ACC/TAE.
For intermolecular non-covalent interaction energies, the MSR-ACC/NCI includes equilibrium structures from small van-der-Waals clusters with up to six monomers.
Additionally, we include the full potential energy surface of the water dimer in the MSR-ACC/Water subset.
The MSR-ACC/Distortion subset includes the structures from MSR-ACC/TAE distorted along their normal modes.
To cover kinetics, the MSR-ACC/Reactions dataset comprises elementary steps of reactions of small organic molecules with up to 13 non-hydrogen atoms, including both transition states and points along the reaction pathways.
In addition to the molecular sets, we include atomic datasets of electron affinities (EAs) and ionization potentials (IPs) --- including double and triple IPs --- for elements up to argon as well as total atomic energies to gauge total energies.

We extend this in-house training data with a few publicly available datasets. In particular, we add the 14 linear and cyclic carbon clusters from the W4-CC dataset \cite{karton_atomization_2009}. We draw on the relatively abundant publicly available data and select five datasets from the NCIAtlas collection (D442x10, SH250x10, R739x5, HB300SPXx10, IHB100x10),\cite{rezac_noncovalent_2020,rezac_noncovalent_2020a,kriz_noncovalent_2021,kriz_noncovalent_2022,rezac_noncovalent_2022} and we also include the DES370K non-covalent interaction data computed at CCSD(T)/\ensuremath{\delta}CBS(aug-cc-pVQZ) level of theory\cite{donchev_quantum_2021} and the NCIBLIND dataset of dissociation curves.\cite{taylor_blind_2016}
The reaction dataset BH9 is included with its forward and reverse barriers as well as the full reactions,\cite{prasad_bh9_2022} and we include the MB2061 dataset of decomposition energies for artificial closed-shell mindless molecules.\cite{gasevic_chemical_2025}
Furthermore, we include small transition metal systems covering dimer bond energies TMD10\cite{liang_goldstandard_2025} and DAPD,\cite{chan_dapd_2023} and ionization potentials and spin splittings of atomic systems in 3d4dIPSS.\cite{liang_goldstandard_2025}

\paragraph{Test set subtraction:} We removed the overlap with the test sets GMTKN55\cite{goerigk_look_2017} and W4-17\cite{karton_w417_2017} based on the molecular graphs of all systems with more than two atoms, with the exceptions of W4-CC, MB2061, atomic sets, small transition metal systems, and MSR-ACC/Water.

We determine molecular graphs (with undetermined bond order) from the bond model of GFN-FF.\cite{spicher_robust_2020}
For all the subsets except the NCI ones and BH9, we removed all reactions that contain any molecule that contains any covalently connected subgraph found in any molecule in W4-17 and GMTKN55 (some molecules in W4-17 and GMTKN55 are not recognized as fully connected by GFN-FF).
For the NCI sets and the BH9 sets, we subtract GMTKN55 from the training data by removing all reactions that share the same set of molecules (defined by the GFN-FF graph) with the same stoichiometric ratios.
This prevents leakage of W4-17 into the trained model and minimizes the overlap with GMTKN55. 

After the test set subtraction, we further reserved a small subset of the data to be used for model selection.
The holdout part of MSR-ACC/TAE is released as part of MSR-ACC/TAE25\cite{ehlert_accurate_2025} (which is a subset of MSR-ACC/TAE).

\paragraph{Density features:} As explained in Sec.~\ref{sec:learning-xc-functional}, in the pre-training phase we evaluate our model at fixed densities using B3LYP\cite{becke_densityfunctional_1993,stephens_initio_1994} using the def2-QZVP basis set\cite{weigend_balanced_2005} or the ma-def2-QZVP basis set\cite{zheng_minimally_2011} if the molecule is part of a reaction that contains anions. 
Using the fixed densities, we compute the relative energy of a reaction from the B3LYP total energies by replacing the B3LYP XC energies with the XC energies predicted with our functional.
To regularize the model with respect to numerical variations on the grid, we use four distinct radial methods (Treutler, Mura–Knowles, Gauss-Chebyshev and Delley) at PySCF\cite{sun_pyscf_2018} grid level 1 during pre-training and grid level 3 during fine-tuning, and in both stages we augment training with different partitioning weights as detailed in \cref{subsupp:partition-augmentation}.

\subsection{Data availability}
The training data are summarized in Extended Data Table~\ref{tab:training-data}.
Molecular structures from the in-house generated MSR-ACC/TAE dataset consisting of a single stable molecular fragment that are not derived from the GDB9 set are released publicly as the MSR-ACC/TAE25 dataset, described in \citet{ehlert_accurate_2025}  %
The in-house generated datasets MSR-ACC/Conf, MSR-ACC/PA, MSR-ACC/IP, MSR-ACC/EA, MSR-ACC/Distortion, MSR-ACC/NCI, and MSR-ACC/Reactions are not released publicly, but detailed information on their generation protocol is given in Sec.~\ref{sec:training-data} and Sec.~\ref{sec:training-data-details}. 
Structures of MSR-ACC/Water are publicly available\cite{smith_revised_2016} and relabeled in-house.
W4-CC,\cite{karton_atomization_2009}
D442x10,\cite{rezac_noncovalent_2022}
SH250x10,\cite{kriz_noncovalent_2022}
R739x5,\cite{kriz_noncovalent_2021}
HB300SPXx10,\cite{rezac_noncovalent_2020}
IHB100x10,\cite{rezac_noncovalent_2020}
MB2061,\cite{gasevic_chemical_2025}
NCIBLIND,\cite{taylor_blind_2016}
DES370K,\cite{donchev_quantum_2021}
and BH9\cite{prasad_bh9_2022} are all publicly available.

The evaluation benchmark sets 
W4-17,\cite{karton_w417_2017} 
MOR41,\cite{dohm_comprehensive_2018}
ROST61,\cite{maurer_assessing_2021}
MOBH35,\cite{semidalas_mobh35_2022} 3dTMV,\cite{neugebauer_benchmarkquality_2023}
3d4dIPSS,\cite{liang_goldstandard_2025}
DAPD,\cite{chan_dapd_2023}
TMD10,\cite{liang_goldstandard_2025}
Wiggle150,\cite{brew_wiggle150_2025}
W1-S$_{\rm N}$2-BH,\cite{karton_w1sn2bh_2026}
and GMTKN55,\cite{goerigk_look_2017}
the dipole moment evaluation dataset,\cite{hait_how_2018}
and the geometry optimization datasets LMGB35,\cite{grimme_consistent_2015} HMGB11,\cite{grimme_consistent_2015} W4-11-GEOM,\cite{karton_w411_2011} and CCse21\cite{piccardo_semiexperimental_2015} are all publicly available.
The molecular structures for the computational cost results are described in \cref{sec:eval-sets} (Fig.~\ref{fig:si-timing-dataset}), and are collected from the following publicly available sources: Grimme,\cite{grimme_exploration_2019} S30L,\cite{sure_comprehensive_2015} HS13L,\cite{gorges_reliable_2022} and NCI16L.\cite{gorges_efficient_2023}

\subsection{Code availability}
The \clippy model and inference code are available under MIT license at \url{https://github.com/microsoft/skala}.
The repository contains the PyTorch implementation of the \clippy model and its hookups to quantum chemistry packages PySCF,\cite{sun_pyscf_2018} GPU4PySCF\cite{li_introducing_2025} and ASE.\cite{hjorthlarsen_atomic_2017}
The \clippy model is also served in Azure AI Foundry at \url{https://ai.azure.com/catalog/models/Skala}, where the SCF evaluation is implemented using Accelerated DFT\cite{ju_acceleration_2024} (inference on GPU) and GauXC.\cite{petrone_efficient_2018,williams-young_efficient_2020,williams-young_distributed_2023}

\section*{Acknowledgments}
We thank Maik Riechert, Hannes Schulz and Eray Inanc for supporting our engineering infrastructure and Kenji Takeda and the Microsoft Accelerator team for their crucial role in designing and executing our data generation campaign.
Furthermore, we are grateful for feedback and support from Marwin Segler, Frank No\'e, Jia Zhang, Bonnie Kruft, Rachel Howard, Rosa de Rosa and Bev Baker. We also thank Jan Gerit Brandenburg, Nicola Marzari and John P.\ Perdew for insightful feedback on an earlier version of this manuscript.

3D visualizations in this paper were rendered with Mitsuba 3.\cite{jakob_mitsuba_2022}

\section*{Author contributions}
Conceptualization: C.-W.H., G.L., T.V., D.P.K., S.E., K.J.H.G., J.H., R.v.d.B., P.G.-G.
Methodology: G.L., C.-W.H., T.V., D.P.K., S.E., B.M., S.-O.K., J.H., R.v.d.B., P.G.-G.
Software: T.V., C.-W.H., G.L., D.P.K., S.E., S.L., J.H., R.v.d.B., D.G., X.W., L.H., R.C.Z., Ab.K., K.J.H.G.
Validation: D.P.K., C.-W.H., T.V., G.L., S.E., J.H., R.v.d.B.
Formal analysis: C.-W.H., T.V., G.L., D.P.K., K.J.H.G., S.E., J.H., R.v.d.B.
Investigation: C.-W.H., T.V., G.L., D.P.K., S.E., J.H., R.v.d.B., P.G.-G., D.G., Y.C., D.B.W.-Y., Ab.K., K.J.H.G., M.S., W.P.B., S.B.
Resources: C.M.B., D.G.
Data generation \& curation: J.H., S.E., D.P.K., Am.K., S.L., J.G.T., K.J.H.G., C.-W.H., G.L., T.V., G.N.C.S., P.B.S., R.v.d.B., P.G.-G.
Writing -- Original Draft: G.L., C.-W.H., T.V., D.P.K., S.E., K.J.H.G., J.H., R.v.d.B., P.G.-G.
Writing -- Review \& Editing: G.L., C.-W.H., T.V., C.M.B., J.H., P.B.S., R.v.d.B., P.G.-G.
Visualization: T.V., C.-W.H., S.E., D.P.K., W.P.B., M.S., G.N.C.S., P.B.S., R.v.d.B.
Supervision: J.H., R.v.d.B., P.G.-G.
Project administration: R.S.

\section*{Competing interests}
All authors declare employment by Microsoft while engaged in the research for this manuscript.
G.L., C.-W.H., T.V., D.P.K., S.E., S.L., K.J.H.G., D.G., M.S., W.P.B., R.S., J.H., R.v.d.B., and P.G.-G. have filed a patent application for the model and training pipeline described in this article.

\printbibliography[resetnumbers=\resetnum, title=Method References]
\numgdef\resetnum{\csuse{blx@labelnumber@\therefsection}+1}
\clearpage

\renewcommand{\figurename}{Extended Data Figure}
\renewcommand{\tablename}{Extended Data Table}

\begin{figure}[ht!]
    \begin{subfigure}{\linewidth}
        \tikzset{
        roundedbox/.style={draw, rounded corners, inner sep=8pt},
        operation/.style={roundedbox, fill=gray!15},
        arrowstyle/.style={-{Stealth[length=2mm]}, rounded corners},
        arrowlabel/.style={fill=white, inner sep=1pt, font=\footnotesize\sffamily}
        }
        \hspace{-1em}
        \scalebox{0.56}{
            \input{figures/architecture/architecture_overview}
        }
    \caption{\clippy architecture overview}
    \end{subfigure}
    \begin{subfigure}{\linewidth}
        \tikzset{
        roundedbox/.style={draw, rounded corners, inner sep=8pt},
        operation/.style={roundedbox, fill=gray!15},
        arrowstyle/.style={-{Stealth[length=2mm]}, rounded corners},
        arrowlabel/.style={fill=white, inner sep=1pt, font=\footnotesize\sffamily}
        }
        \hspace{-1.7em}
        \scalebox{0.52}{
            \input{figures/architecture/non_local}
        }
        \caption{Non-local interaction model}
    \end{subfigure}
    \caption{\textbf{\clippy's architecture} (a): 
        Overview of the architecture modules. $G$ is the size of the DFT integration grid, $G'$ denotes atomic sub-grids, and $C$ is the number of coarse points (atomic centers).
        We log-transform 7 meta-GGA features, apply the same MLP to both spin-orderings, and average to generate spin-order-invariant hidden features.
        The grid is then partitioned into atomic sub-grids, after which three non-local interaction layers exchange information between grid points through coarse points at nuclear positions, as detailed in~(b).
        A final MLP produces an enhancement factor, which is reassembled via soft partitioning, multiplied by a scale function based on the local density, and integrated over the grid to obtain $\Exc^\params$.
    (b): The non-local interaction model (one layer shown; applied $\times 3$).
        For each coarse point and spherical harmonic level $\ell=0,1,2,3$, local grid features are projected onto $2\ell+1$ spherical harmonics and 16 radial basis functions via a tensor product, then integrated to form coarsened features.
        These are mixed by an equivariant linear layer and a symmetric contraction on each coarse point, capturing higher-body-order correlations.
        The mixed features become coefficients for the spherical harmonic basis and are projected back to the grid via a second tensor product, producing functions that capture non-local interactions.
        These are scaled by $\exp(-\dens)$ (total density) and combined with the input features via a skip connection, preserving the hidden dimension across layers.
        }
    \label{fig:architecture-diagram-nonlocal}
\end{figure}

\clearpage

\begin{figure}[ht!]
    \centering
    
\scriptsize
\addtolength{\tabcolsep}{-3pt}
\begin{tabularx}{\textwidth}{l *{2}{r}@{ }r@{}*{2}{r} *{2}{l}}
    \toprule
    \multirow{2}{*}{Dataset}
    & \multicolumn{4}{l}{Number of reactions}
    & Avg.\ |E|
    & \multirow{2}{*}{Elements}
    & \multirow{2}{*}{Description}
    \\

    & \multicolumn{1}{l}{\emph{Full}}
    & \multicolumn{2}{l}{\emph{Training}}
    & \multicolumn{1}{l}{\emph{Cat.}}
    & \textcolor{gray}{[kcal/mol]}
    &
    \\
    \cmidrule(lr){1-1}
    \cmidrule(lr){2-5} \cmidrule(lr){6-6} \cmidrule(lr){7-7} \cmidrule(lr){8-8}

    MSR-ACC/\\
    \hspace{1em}
    TAE
    & 121321
    & 118485
    & (97.7\%)
    & A
    & 813.1
    & H, Li--F, Na--Cl
    & Total atomization energies
    \\
    \hspace{1em}
    Conf
    & 41294
    & 40612
    & (98.3\%)
    & C
    & 1.8
    & H, Li--F, Na--Cl
    & Conformational energies
    \\
    \hspace{1em}
    PA
    & 21359
    & 20177
    & (94.5\%)
    & A
    & 222.4
    & H, Li--F, Na--Cl
    & Proton affinities
    \\
    \hspace{1em}
    IP
    & 17028
    & 16039
    & (94.2\%)
    & A
    & 162.3
    & H, Li--F, Na--Cl
    & Ionization potentials
    \\
    \hspace{1em}
    EA
    & 13666
    & 12879
    & (94.2\%)
    & A
    & 35.1
    & H, Li--F, Na--Cl
    & Electron affinities
    \\
    \hspace{1em}
    Reactions
    & 154255
    & 114349
    & (74.1\%)
    & B
    & 38.6
    & H, Li--F, Na--Cl
    & Reaction paths
    \\
    \hspace{1em}
    Water
    & 2308
    & 2193
    & (95.0\%)
    & C
    & 5.8
    & H, O
    & Water dimer structures
    \\
    \hspace{1em}
    Distortions
    & 19429
    & 18895
    & (97.3\%)
    & A
    & 6.9
    & H, Li--F, Na--Cl
    & Distorted equilibrium structures
    \\
    \hspace{1em}
    NCI
    & 3948
    & 3747
    & (94.9\%)
    & C
    & 10.2
    & H, Li--Be, C--Ar
    & Non-covalent clusters
    \\[1ex]
    Atomic/\\
    \hspace{1em}
    TOT
    & 16
    & 16
    & (100.0\%)
    & A
    & ---
    & H--He, B--Ar
    & Atomic total energies
    \\
    \hspace{1em}
    EA
    & 11
    & 11
    & (100.0\%)
    & A
    & 33.6
    & H, B--C, O--F, Na, Al--Cl
    & Atomic electron affinities
    \\
    \hspace{1em}
    IP
    & 43
    & 43
    & (100.0\%)
    & A
    & 667.2
    & He, B--Ar
    & Atomic ionization potentials
    \\
    \hspace{1em}
    TM
    & 54
    & 35
    & (64.8\%)
    & A
    & 102.3
    & H, B--F, Al--Cl, Sc--Zn, Br, Y--Cd
    & Atomic transition metals
    \\[1ex]
    NCIAtlas/\\
    \hspace{1em}
    D442x10
    & 4420
    & 4150
    & (93.9\%)
    & C
    & 1.4
    & H--He, B--Ne, P--Ar, Br--Kr, I--Xe
    & Dispersion interactions
    \\
    \hspace{1em}
    R739x5
    & 3695
    & 3263
    & (88.3\%)
    & C
    & 1.1
    & H--He, B--Ne, P--Ar, Br--Kr, I--Xe
    & Repulsive contacts
    \\
    \hspace{1em}
    HB300SPXx10
    & 3000
    & 2840
    & (94.7\%)
    & C
    & 3.2
    & H, C--F, P--Cl, Br, I
    & Hydrogen bonds
    \\
    \hspace{1em}
    IHB100x10
    & 1000
    & 896
    & (89.6\%)
    & C
    & 15.6
    & H, C--O
    & Hydrogen bonds
    \\
    \hspace{1em}
    SH250x10
    & 2500
    & 2290
    & (91.6\%)
    & C
    & 4.0
    & H, C--F, P--Cl, As--Br, I
    & Sigma-hole contacts
    \\[1ex]
    Public/\\
    \hspace{1em}
    GDB9-W1F12
    & 3366
    & 3117
    & (92.6\%)
    & A
    & 1382.4
    & H, C--F
    & Total atomization energies
    \\
    \hspace{1em}
    W4-CC
    & 14
    & 14
    & (100.0\%)
    & A
    & 745.1
    & C
    & Total atomization energies
    \\
    \hspace{1em}
    MB2061
    & 2063
    & 1960
    & (95.0\%)
    & A
    & 306.4
    & H, Li--F, Na--Cl, K--Ca, Ga--Br, Rb--Sr, In--I
    & Decomposition energies
    \\
    \hspace{1em}
    NCIBLIND
    & 80
    & 48
    & (60.0\%)
    & C
    & 2.5
    & H, C--O
    & Dissociation curves
    \\
    \hspace{1em}
    DES370k
    & 58950
    & 43074
    & (73.1\%)
    & C
    & 8.4
    & H--Li, C--Mg, S--Ca
    & Non-covalent interactions
    \\
    \hspace{1em}
    BH9
    & 1345
    & 291
    & (21.6\%)
    & B
    & 19.7
    & H, B--F, Si--Cl
    & Reaction barriers
    \\[1ex]

    \midrule
    Total
    & 475165
    & 409424
    & (86.2\%)
    &
    &
    & H--Xe
    &

    \\
    \bottomrule
\end{tabularx}

    \caption{\textbf{Training datasets.} The table shows the original number of labels and the effective count after removing overlap with the GMTKN55 and W4-17 test sets and splitting off validation sets. The category for each dataset (A, B, C) defines the subsets for the data ablation study of Sec.~\ref{sec:data-ablation}.}
    \label{tab:training-data}
\end{figure}

\clearpage
\begin{figure}[ht!]
    \begin{minipage}[b]{1.00\linewidth}
        \centering
        \input{tables/holdouts/all_holdouts_grid}
        \vspace{1.5em}
        \subcaption{Holdout error distributions}
        \label{fig:holdout-error}
    \end{minipage}
    \begin{minipage}[b]{0.4\linewidth}
        \centering
        \includegraphics[width=\linewidth]{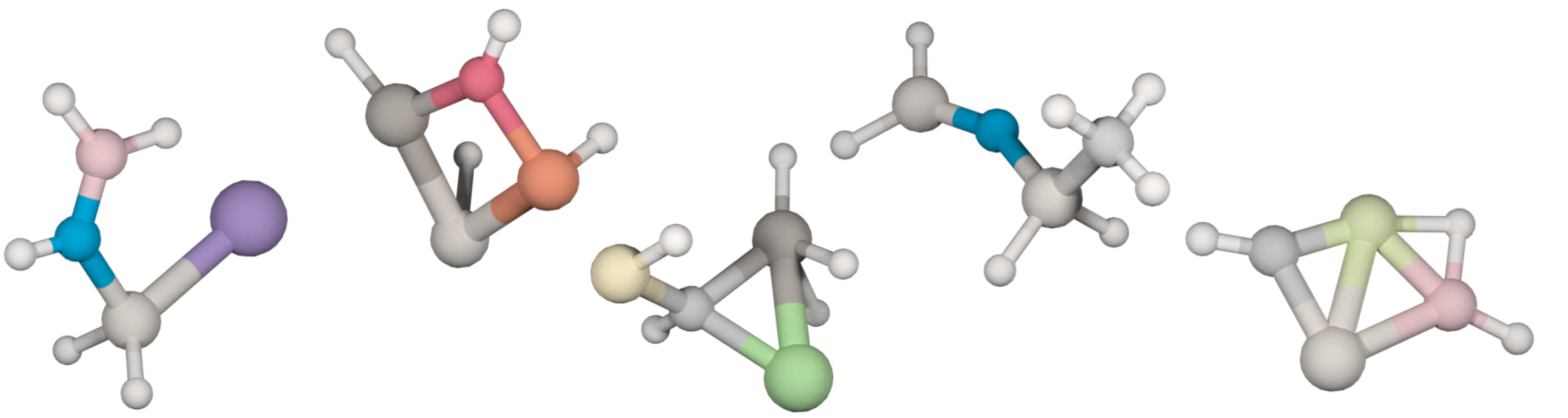}
        \vspace{1.5em}
        \subcaption{MSR-ACC/TAE25 holdout molecules}
        \label{fig:reaction-errors-tae-molecules}
    \end{minipage}
    \hspace{0.02\linewidth}
    \begin{minipage}[b]{0.55\linewidth}
        \centering
        \includegraphics[width=\linewidth]{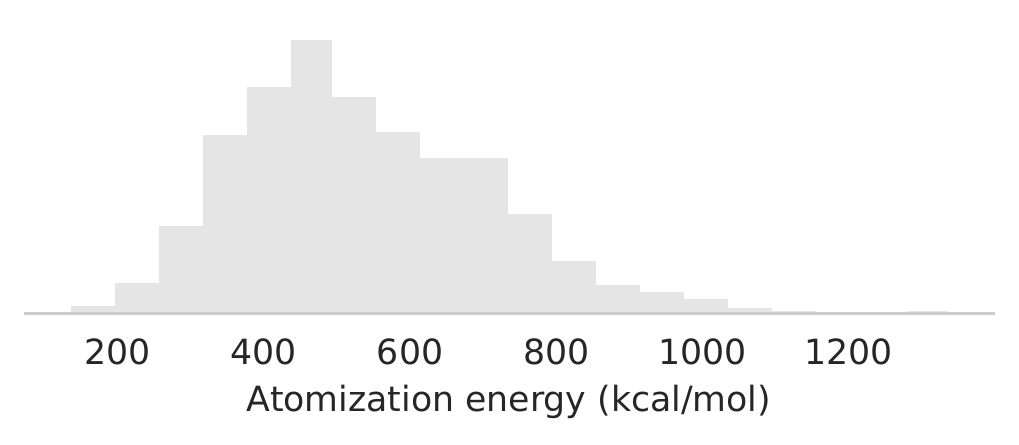}
        \subcaption{TAE distribution}
        \label{fig:reaction-errors-tae-histogram}
    \end{minipage}
    \caption{\textbf{Generalization on holdout data.}
        (a) Absolute error distributions on holdout splits of the training data, with MAE in kcal/mol shown to the left of each bar; the numbers in parentheses indicate holdout set sizes.
        Structures of all holdout sets shown in this figure are publicly available, with the caveat that Water2510 is relabeled as part of MSR-ACC/Water.
        (b) Example molecules from the MSR-ACC/TAE25 holdout set. (c) The distribution of total atomization energies in this set.
        These results confirm that \clippy generalizes well beyond its training data across diverse chemical properties.
    }
    \label{fig:holdout-generalization}
\end{figure}

\clearpage
\begin{figure}[ht!]
    \centering
    \begin{minipage}[b]{.7\linewidth}
        \centering
        \providecolor{col_even}{HTML}{EAEEF3}
\tablefontsize 
\begingroup
\ifdefined\rbs\else\newlength{\rbs}\fi\setlength{\rbs}{4pt}
\ifdefined\ras\else\newlength{\ras}\fi\setlength{\ras}{4pt}
\ifdefined\rhs\else\newlength{\rhs}\fi\setlength{\rhs}{6pt}
\setlength{\aboverulesep}{0pt}
\setlength{\belowrulesep}{0pt}
\begin{tabular}{l >{\columncolor{col_even}}r r r}
\toprule
 & Full & Single ref. & Multi ref.\rule[-0.5\rhs]{0pt}{\ht\strutbox+\rhs} \\
\midrule
revPBE & \rule{0pt}{\rbs+\ht\strutbox} 8.09 & 7.22 & 17.44 \\
r\textsuperscript2SCAN & 4.51 & 3.84 & 11.67 \\
B97M-V & 2.90 & 2.52 & 6.95 \\
B3LYP & 3.86 & 3.73 & 5.25 \\
M06-2X & 3.01 & 2.39 & 9.70 \\
\(\omega\)B97X-V & 2.59 & 2.14 & 7.39 \\
\(\omega\)B97M-V & 2.06 & 1.66 & 6.36 \\
\textbf{\clippyOnePointOne} & 1.23 & 0.92 & 4.61\rule[-\ras]{0pt}{\ras} \\
\bottomrule
\end{tabular}
\endgroup

        \vspace{0.5em}
        \subcaption{W4-17 total atomization energies}
        \label{fig:w4-17-breakdown}
    \end{minipage}
    \vspace{1.5em}
    \begin{minipage}[b]{.50\linewidth}
        \centering
        \providecolor{col_even}{HTML}{EAEEF3}
\tablefontsize 
\begingroup
\ifdefined\rbs\else\newlength{\rbs}\fi\setlength{\rbs}{4pt}
\ifdefined\ras\else\newlength{\ras}\fi\setlength{\ras}{4pt}
\ifdefined\rhs\else\newlength{\rhs}\fi\setlength{\rhs}{6pt}
\setlength{\aboverulesep}{0pt}
\setlength{\belowrulesep}{0pt}
\begin{tabular}{l >{\columncolor{col_even}}r r r r}
\toprule
 & Overall & EFA & BPN & ADO\rule[-0.5\rhs]{0pt}{\ht\strutbox+\rhs} \\
\midrule
revPBE & \rule{0pt}{\rbs+\ht\strutbox} 4.43 & 5.35 & 3.88 & 4.04 \\
r\textsuperscript2SCAN & 1.15 & 1.58 & 1.00 & 0.87 \\
B97M-V & 1.30 & 1.49 & 1.39 & 1.00 \\
B3LYP & 1.42 & 1.72 & 1.53 & 1.00 \\
M06-2X & 2.03 & 2.89 & 1.73 & 1.48 \\
\(\omega\)B97X-V & 2.44 & 3.00 & 2.54 & 1.79 \\
\(\omega\)B97M-V & 0.87 & 1.35 & 0.67 & 0.60 \\
\textbf{\clippyOnePointOne} & 0.73 & 0.75 & 0.86 & 0.59\rule[-\ras]{0pt}{\ras} \\
\bottomrule
\end{tabular}
\endgroup

        \vspace{15pt}
        \subcaption{Wiggle150 relative energies}
        \label{fig:wiggle150-breakdown}
    \end{minipage}
    \begin{minipage}[b]{.35\linewidth}
        \centering
        \includegraphics[width=0.98\linewidth]{figures/wiggle150-vib-ablation}
        \subcaption{Distortion data ablation}
        \label{fig:wiggle150-vib-ablation}
    \end{minipage}
    \hspace{0.04\linewidth}
    \caption{\textbf{Atomization and distortion energy prediction.}
        (a) W4-17 total atomization energy MAE (kcal/mol) on the full set, the 183 single-reference structures with \%TAE[(T)] $<$ 10\%, and the 17 multi-reference structures.\cite{karton_w417_2017}
        (b) Wiggle150\cite{brew_wiggle150_2025} MAE (kcal/mol) on the relative energy of the highly strained conformers of efavirenz, benzylpenicillin,  and adenosine, on the full set and per-molecule subsets .
        (c) Impact on the Wiggle150 MAE of including distorted geometries of small molecules (MSR-ACC/Distortion) in training, evaluated across five independent seeds per setting.
        Together, these results demonstrate that \clippy generalizes accurately to held-out atomization energies and that distortion training data substantially improves prediction of relative energies along vibrational modes.
    }
    \label{fig:atomization-distortion}
\end{figure}

\clearpage

\begin{figure}[ht!]
    \centering
    \begin{minipage}[t]{1.00\linewidth}
        \centering
        \begin{tikzpicture}
\pgfmathsetmacro{\pw}{8.0}
\pgfmathsetmacro{\ph}{4.0}

\draw[black!15, thin] (0, 0.400) -- (\pw, 0.400);
\node[anchor=east, font=\scriptsize, text=ibm_orange, inner sep=1pt] at (-0.10, 0.400) {GGA};
\draw[black!15, thin] (0, 2.050) -- (\pw, 2.050);
\node[anchor=east, font=\scriptsize, text=ibm_purple, inner sep=1pt] at (-0.10, 2.050) {meta-GGA};
\draw[black!15, thin] (0, 3.700) -- (\pw, 3.700);
\node[anchor=east, font=\scriptsize, text=ibm_blue, inner sep=1pt] at (-0.10, 3.700) {Hybrid};

\draw[black, thin, dashed] (6.629992141173109, 0.22) -- (6.629992141173109, \ph);
\node[anchor=east, font=\scriptsize, text=black, inner sep=1pt, xshift=-4pt] at (6.629992141173109, 1.06) {CCSD(T) reference};

\draw[black, thin] (0, 0) -- (\pw, 0);

\draw[thin, black!70] (0.914, -0.06) -- (0.914, 0);
\node[anchor=north, font=\scriptsize, inner sep=1pt] at (0.914, -0.08) {10};
\draw[thin, black!70] (2.058, -0.06) -- (2.058, 0);
\node[anchor=north, font=\scriptsize, inner sep=1pt] at (2.058, -0.08) {20};
\draw[thin, black!70] (3.201, -0.06) -- (3.201, 0);
\node[anchor=north, font=\scriptsize, inner sep=1pt] at (3.201, -0.08) {30};
\draw[thin, black!70] (4.344, -0.06) -- (4.344, 0);
\node[anchor=north, font=\scriptsize, inner sep=1pt] at (4.344, -0.08) {40};
\draw[thin, black!70] (5.487, -0.06) -- (5.487, 0);
\node[anchor=north, font=\scriptsize, inner sep=1pt] at (5.487, -0.08) {50};
\draw[thin, black!70] (6.630, -0.06) -- (6.630, 0);
\node[anchor=north, font=\scriptsize, inner sep=1pt] at (6.630, -0.08) {60};
\draw[thin, black!70] (7.773, -0.06) -- (7.773, 0);
\node[anchor=north, font=\scriptsize, inner sep=1pt] at (7.773, -0.08) {70};
\node[anchor=north, font=\scriptsize] at (4.3435, -0.28) {Measured non-local exchange (\%)};

\fill[ibm_orange] (1.075, 0.400) circle[radius=0.065];
\fill[ibm_purple] (0.892, 2.050) circle[radius=0.065];
\fill[ibm_purple] (1.040, 2.050) circle[radius=0.065];
\fill[ibm_pink] (5.915, 2.050) circle[radius=0.065];
\fill[ibm_blue] (2.355, 3.700) circle[radius=0.065];
\fill[ibm_blue] (5.041, 3.700) circle[radius=0.065];
\fill[ibm_blue] (6.950, 3.700) circle[radius=0.065];
\fill[ibm_blue] (7.053, 3.700) circle[radius=0.065];

\node[anchor=south west, rotate=60, font=\scriptsize, text=ibm_orange, inner sep=1pt] at ([xshift=0pt]1.115, 0.460) {revPBE};
\node[anchor=south west, rotate=60, font=\scriptsize, text=ibm_purple, inner sep=1pt] at ([xshift=0pt]0.932, 2.110) {r\textsuperscript{2}SCAN};
\node[anchor=south west, rotate=60, font=\scriptsize, text=ibm_purple, inner sep=1pt] at ([xshift=4pt]1.080, 2.110) {B97M-V};
\node[anchor=south west, rotate=60, font=\scriptsize, text=ibm_pink, inner sep=1pt] at ([xshift=0pt]5.955, 2.110) {\clippyOnePointOne};
\node[anchor=south west, rotate=60, font=\scriptsize, text=ibm_blue, inner sep=1pt] at ([xshift=0pt]2.395, 3.760) {B3LYP};
\node[anchor=south west, rotate=60, font=\scriptsize, text=ibm_blue, inner sep=1pt] at ([xshift=0pt]5.081, 3.760) {M06-2X};
\node[anchor=south west, rotate=60, font=\scriptsize, text=ibm_blue, inner sep=1pt] at ([xshift=-4pt]6.990, 3.760) {$\omega$B97M-V};
\node[anchor=south west, rotate=60, font=\scriptsize, text=ibm_blue, inner sep=1pt] at ([xshift=4pt]7.093, 3.760) {$\omega$B97X-V};
\end{tikzpicture}
        \vspace{0.5em}
        \subcaption{Measured non-local exchange}
        \label{fig:non-local-exchange}
    \end{minipage}
    \vspace{1.5em}\\
    \begin{minipage}[t]{1.00\linewidth}
        \centering
        
\tablefontsize
\begin{tabular}{l *{8}{r}}
    \toprule
    \multicolumn{1}{l}{Basis set}
    & revPBE
    & r$^2$SCAN
    & B97M-V
    & B3LYP
    & M06-2X
    & $\omega$B97X-V
    & $\omega$B97M-V
    & \textbf{\clippyOnePointOne}
    \\
    \cmidrule(lr){2-2}
    \cmidrule(lr){3-3}
    \cmidrule(lr){4-4}
    \cmidrule(lr){5-5}
    \cmidrule(lr){6-6}
    \cmidrule(lr){7-7}
    \cmidrule(lr){8-8}
    \cmidrule(lr){9-9}
    def2-TZVPPD
    & 12.72
    & 10.59
    & 7.25
    & 6.59
    & 2.07
    & 2.98
    & 1.90
    & 3.03
    \\
    def2-QZVPPD
    & 11.84
    & 9.78
    & 6.25
    & 5.79
    & 1.60
    & 3.31
    & 1.69
    & 2.44
    \\
    ma-def2-QZVP
    & 10.47
    & 8.77
    & 4.89
    & 4.66
    & 1.59
    & 3.92
    & 2.09
    & 1.99
    \\
    \bottomrule
\end{tabular}

        \vspace{0.5em}
        \subcaption{S$_{\rm N}$2 reaction barriers}
        \label{fig:w1-sn2-bh}
    \end{minipage}%
    \caption{\textbf{Non-local exchange and S$_{\rm N}$2 reaction barriers.}
        (a) Effective non-local exchange as quantified by the molecular ``thermometer'' of \citet{grimme_molecular_2026} for functionals across different rungs of Jacob's ladder (up to the hybrids). The vertical dashed line indicates the CCSD(T) reference value.
        (b) MAE (kcal/mol) on the W1-S$_{\rm N}$2-BH dataset\cite{karton_w1sn2bh_2026} of reaction barriers for S$_{\rm N}$2 reactions using different basis sets. These results together show that the non-local effects learned by \clippy are enough to accurately describe cases that typically need the more costly Hartree-Fock exchange.
    }
\end{figure}

\clearpage

\begin{figure}[ht!]
    \centering
    \begin{minipage}[t]{1.00\linewidth}
        \centering
        
\tablefontsize
\begin{tabular}{l *{8}{r}}
    \toprule
    & \multicolumn{3}{c}{Seen} & \multicolumn{5}{c}{Unseen} \\
    \cmidrule(lr){2-4} \cmidrule(lr){5-9}
    & 3d4dIPSS
    & TMD10
    & DAPD
    & MOR41
    & MOBH35
    & TMB11
    & 3dTMV
    & ROST59
    \\
    \midrule
        revPBE
        & 9.75%
        & 6.92%
        & 9.48%
        & 3.79%
        & 3.37%
        & 3.72%
        & 9.09%
        & 4.28%
        \\
        r$^2$SCAN
        & 12.52%
        & 8.48%
        & 13.03%
        & 3.22%
        & 2.62%
        & 2.59%
        & 9.12%
        & 3.20%
        \\
        B97M-V
        & 11.81%
        & 4.31%
        & 6.67%
        & 3.86%
        & 1.82%
        & 2.32%
        & 8.00%
        & 3.50%
        \\
        B3LYP
        & 7.69%
        & 6.36%
        & 3.54%
        & 4.84%
        & 2.54%
        & 1.93%
        & 7.34%
        & 3.21%
        \\
        M06-2X
        & 8.04%
        & 8.65%
        & 15.16%
        & 6.71%
        & 2.64%
        & 2.25%
        & 14.75%
        & 6.41%
        \\
        $\omega$B97X-V
        & 7.31%
        & 5.18%
        & 7.30%
        & 2.18%
        & 2.29%
        & 1.96%
        & 14.23%
        & 2.67%
        \\
        $\omega$B97M-V
        & 6.37%
        & 7.50%
        & 7.85%
        & 2.13%
        & 1.84%
        & 1.28%
        & 13.93%
        & 2.72%
        \\
        TPSSh
        & 8.85%
        & 6.43%
        & 5.95%
        & 2.83%
        & 2.01%
        & 2.43%
        & 8.44%
        & 2.68%
        \\
        \midrule
        Skala-1.1
        & 7.19%
        & 9.34%
        & 1.45%
        & 3.29%
        & 2.08%
        & 1.85%
        & 8.01%
        & 3.38%
        \\
    \bottomrule
\end{tabular}

    \end{minipage}
    \caption{\textbf{Transition metal reactions.}
        MAE (kcal/mol) for the benchmark sets 3d4dIPSS,\cite{liang_goldstandard_2025},
        TMD10,\cite{liang_goldstandard_2025},
        DAPD,\cite{chan_dapd_2023},
        MOR41,\cite{dohm_comprehensive_2018},
        MOBH35,\cite{semidalas_mobh35_2022} 
        3dTMV,\cite{neugebauer_benchmarkquality_2023}
and ROST59 (a subset of ROST61\cite{maurer_assessing_2021} with reaction numbers 24 and 52 dropped due to unconverged calculations). The sets (partially) used in training are clearly marked as ``Seen''. These results show that, with very minimal coverage of 3d and 4d systems in training, \clippy already performs on par with some of the best hybrid and meta-GGA functionals.
}
    \label{fig:eval-tm}
\end{figure}

\end{refsection}

\clearpage
\appendix
\renewcommand{\figurename}{Figure}
\renewcommand{\tablename}{Table}

\renewcommand\thepart{}
\renewcommand\partname{}
\part{\texorpdfstring{
    \begin{center} \Large
        Supplementary information: \\
        Accurate and scalable exchange-correlation with deep learning
    \end{center}
}{Supplementary information}}
\etocsettocstyle{\section*{Table of contents}\vspace{.5em}}{}
\setlength{\cftsecindent}{0pt}
\setlength{\cftsubsecindent}{0pt}
\setlength{\cftsecnumwidth}{2em}
\setlength{\cftsubsecnumwidth}{3.5em}
\setlength{\cftbeforesubsecskip}{2pt}
\localtableofcontents

\clearpage

\begin{refsection}
    
\section{Modeling the exchange-correlation functional}
\label{sec:supp-model}

In this section we summarize how we model the exchange-correlation functional. We describe the input features and the architecture.

\subsection{In theory}
The foundation of DFT is built on the insight that the ground-state energy of a many-electron system can, in principle, be expressed as a functional of the electron density alone. As explained in Sec.~\ref{sec:learning-xc-functional} in the main body of the paper, such a functional of the electron density has an unknown component, the exchange-correlation part.
Following the Kohn-Sham formalism,\cite{kohn_selfconsistent_1965} the ground-state energy of a many-electron system in a static potential $v$ can be written as \begin{equation}
E = \min_{\dens} E_{\text{tot}}[\dens], \quad E_{\text{tot}}[\dens] = \int v(r)\dens(r)\,dr + \frac{1}{2} \int \int\frac{\dens(r) \dens(r')}{\abs{r-r'}} dr dr' + T_{s}[\dens] + \Exc[\dens],\label{eq:TotEnergyDFT}
\end{equation} 
where $v(r)$ in the first term is the external potential due to the nuclei, the second term is the Hartree electrostatic energy, $T_{s}[\dens]$ is the kinetic energy of a system of non-interacting electrons with density $\dens$ and $\Exc[\dens]$ is the exchange-correlation energy of an interacting system with density $\dens$.~\cite{kohn_selfconsistent_1965}
The expression of $\Exc$ is unknown and the central challenge then becomes to find an accurate description for it. $\Exc$ is a functional of the electron density, meaning that we can define it as $\Exc: L^1(\R^3) \rightarrow \R$. $\Exc$ takes a density $\dens$ as input and it outputs a scalar $\Exc[\dens]$. There exist many different ways to produce an approximation. Many ML and traditional functionals represent the $\Exc$ as an integral of an energy density, as follows:
\begin{equation}\label{eq:exc-fun}
 \Exc^\theta[\dens] = - \frac{3}{4} \left(\frac{6}{\pi}\right)^{\frac{1}{3}}\int \left(\dens^{(\uparrow)}(r)^{4/3} + \dens^{(\downarrow)}(r)^{4/3}\right) f_\params[\mathbf{x}[\dens]](r) \,dr,   
\end{equation}
where $f_\params$ is a learnable function of a set of features $\mathbf{x}[\dens]$ called the \textit{enhancement factor}. When $f_\params=1$, the remaining terms reduce to
the Local Density Approximation (LDA) exchange functional.\cite{dirac_note_1930} 
In \clippy, the enhancement factor $f_\params$ is parameterized by a deep neural network, whose architecture is explained in detail in Sec.~\ref{sec:neural-network-architecture}.
This particular form of $\Exc$, written as a learnable enhancement factor times the LDA exchange energy density, is mainly designed to
make it easier to enforce properties that the exact $\Exc$ functional is known to satisfy, such as the high-density uniform coordinate scaling, size consistency, and the Lieb-Oxford lower bound.\cite{lieb_lower_1979}

\subsection{In practice} Theoretically, the density $\dens$ is a function $\R^3 \rightarrow \R$. However, in practice, we work with a discretized version. Similar to Ref.~\citenum{dick_machine_2020,kasim_learning_2021,cuierrier_constructing_2021,kirkpatrick_pushing_2021,kanungo_learning_2024}, 
we choose to discretize the density features by evaluating them on a set of points $\{ r_i \in \R^3,\,i=1, \dots, G\}$ that are defined by a classical integration grid. An integration grid is a set of points $r_i \in \R^3$ (effectively, a point cloud in $\R^3)$ and associated weights $w_i \in \R$ used to numerically approximate spatial integrals, such as  the exchange-correlation energy and its potential. 
We refer to Ref.~\citenum{lebedev_quadrature_1999, becke_multicenter_1988, treutler_efficient_1995} for details on such integration grids. This representation of the density evaluated on a point cloud in $\R^3$ has the advantage of being independent of the basis set.

Following the discretization, $\Exc$ is therefore approximated as \begin{equation}\label{eq:discretized-exc-fun}
    \Exc[\dens] \approx - \frac{3}{4} \left(\frac{6}{\pi}\right)^{\frac{1}{3}}\sum_{i=1}^G \left(\dens^{(\uparrow)}(r_i)^{4/3} + \dens^{(\downarrow)}(r_i)^{4/3}\right)f_\params [\mathbf{x}[\dens]](r_i) w_i.
\end{equation}
In our setting, we pick the following set of features $\mathbf{x}[\dens]$: \begin{equation}\label{eq:semilocal-features}
    \mathbf{x}[\dens](r_i) =    
    \left[\dens^{(\uparrow)}(r_i), \dens^{(\downarrow)}(r_i), \norm*{\nabla \dens^{(\uparrow)}(r_i)}^2, \norm*{\nabla \dens^{(\downarrow)}(r_i)}^2, \tau^{(\uparrow)}(r_i), \tau^{(\downarrow)}(r_i), \norm*{\nabla \dens^{(\uparrow)}(r_i) + \nabla \dens^{(\downarrow)}(r_i)}^2\right],
\end{equation}
where $\tau$ denotes the Kohn-Sham kinetic energy density and the $\uparrow$ and $\downarrow$ denote the two spin channels. Such features are standard semi-local features used in meta-GGA functionals.\cite{perdew_jacobs_2001}
Effectively, the input $\mathbf{x}[\dens]$ of the neural network is then a tensor in $\R^{G\times 7}$, where $G$ typically depends on system size. 
These input features are called semi-local, because they only collect information at each given grid point. It is known that the exact functional cannot be captured with an enhancement factor that is just a function of semi-local features. It must also have a non-local dependence on the density. 

Analyzing the expression of the enhancement factor $f_\params[\mathbf{x}[\dens]]$, it is clear that there are two separate strategies to incorporate such non-local information on the density $\dens$: \begin{itemize}
    \item including extra hand-designed features in the set $\mathbf{x}[\dens]$ that capture in each grid point information from density features at other distant points, as it is done by adding exchange-like features\cite{kirkpatrick_pushing_2021} or convolved features;\cite{nagai_completing_2020,bystrom_cider_2022} this is in the spirit of climbing Jacob's ladder.
    \item keeping the set of features as in Eq.~\eqref{eq:semilocal-features} and allowing the model $f_\params$ to \textit{learn} longer range dependencies and mix information across different points. 
\end{itemize}
We take on the second approach, which is a step away from traditional DFT approaches built on Jacob's ladder or other hand-designed features and gears toward inferring non-locality through data and the model.
Among non-local effects, dispersion is very long-range and traditional functionals do not model it directly.
Instead, post-correction is typically applied. 
As we train on B3LYP densities, we train with its D3 correction\cite{grimme_consistent_2010,grimme_effect_2011} as part of the total energy, and focus on learning the other shorter-range non-local effects.

\subsection{Neural network architecture}\label{sec:neural-network-architecture}
In this section, we give an overview of the functional architecture used to parameterize the enhancement factor $f_\params$ in Eq.~\eqref{eq:exc-fun}. 
We process the integration grid on a per-atom basis: each atom $j$ has its own set of grid points $\{r_k^{(j)}\}$ with associated weights $\{w_k^{(j)}\}$, and the model produces an atom-specific energy density that is summed to obtain the total exchange-correlation energy.
We use two indexing conventions depending on context. When describing the model's internal operations (parts one through three below), which act identically and independently on each atom's grid, we use a single index $i$ to denote a grid point, leaving the atom index implicit (e.g., $h_i$, $x_i$). When the atom identity matters---for instance, in the non-local layer (\cref{sec:non-local}) where grid points communicate with their associated atom's coarse point---we write the atom index $j$ explicitly as a superscript, e.g., $\dens_i^{(j)}$ for the density at grid point $i$ of atom $j$.
The 7 semi-local features introduced in Eq.~\eqref{eq:semilocal-features} are first processed as follows:
\begin{align}
    x^{(\uparrow,\,\downarrow)}_i = 
    \log\left(
    \left[\dens^{(\uparrow)}_i, \dens^{(\downarrow)}_i, \norm*{\nabla \dens^{(\uparrow)}_i}^2, \norm*{\nabla \dens^{(\downarrow)}_i}^2, \tau^{(\uparrow)}_i, \tau^{(\downarrow)}_i, \norm*{\nabla \dens^{(\uparrow)}_i + \nabla \dens^{(\downarrow)}_i}^2\right]
    + \epsilon
    \right)
\end{align}
where the arrows refer to the respective spin channels, $(\uparrow,\,\downarrow)$ in $x_i$, emphasizes the ordering of the spin in the input vector and $\epsilon$ is a small constant to ensure numerical stability that in our setting is equal to $10^{-5}$. 

Our architecture mainly comprises three parts. 
\paragraph{Part one.} The first part is an input representation extractor
\begin{align}
    f_{\text{repr}}(x) = \sigma\left(W_2 \,\sigma\left(W_1 x + b_1\right) + b_2\right),
\end{align}
where $W_1$ and $b_1$ represent the first linear layer that projects the input vector $x$ onto a higher-dimensional space $\R^{D_\text{hid}}$, followed by a Swish activation function \cite{ramachandran_searching_2017}  $\sigma$ and another fully-connected layer with parameters $W_2$ and $b_2$. 
We evaluate the representation model twice on each grid point, changing the ordering of the spin channels in the input features, and then average the two, in order to obtain a feature vector that is invariant to the ordering of spin channels:
\begin{align}
    h_i = \frac{f_\text{repr}\left(x^{(\uparrow\,,\downarrow)}_i\right)+f_\text{repr}\left(x^{(\downarrow\,,\uparrow)}_i\right)}{2}.
\end{align}

\paragraph{Part two.} The second part processes these features through a stack of $L$ non-local layers. Each non-local layer updates the feature vector by incorporating non-local information via equivariant message passing through the atom's coarse point:
\begin{align}\label{eq:non-local-layer}
    h_i^{(l+1)} = f_\text{nonl}^{(l)}(h_i^{(l)}, \{R_j\}), \qquad l=0,\ldots,L-1,
\end{align}
where $h_i^{(0)} = h_i$, $\{R_j\}$ are the atom coordinates used as coarse points, and each $f_\text{nonl}^{(l)}$ has independent learnable parameters. The output of each non-local layer remains $D_\text{hid}$-dimensional, enabling the layers to be composed.
We elaborate on the structure of $f_\text{nonl}$ in detail in Sec.~\ref{sec:non-local}.

\paragraph{Part three.} The third component of the architecture is an output model that maps the feature vector $h_i^{(L)}$ from the final non-local layer through an MLP to produce the enhancement factor
\begin{align}
    h_{\text{enh}, i} = \sigma_{\text{out}}\left(W_4 \,\sigma\left(W_3 \, h_i^{(L)} + b_3\right) + b_4\right).
\end{align}
Here $W_3$ is $D_\text{hid}\times D_\text{hid}$ and $W_4$ is $1\times D_\text{hid}$, mapping to a scalar. 
The biases are all conformable to their corresponding weights. The last activation function of the output model is a scaled sigmoid function centered around one: $\sigma_\text{out}(x)= \frac{2}{1 + \exp(-x/2)}$.
This ensures that if the logit is close to zero, the final integration is close to the LDA exchange, as we multiply the enhancement factor with the LDA exchange energy density to obtain the final exchange-correlation energy
\begin{align}
    \Exc[\dens] =  - \frac{3}{4} \left(\frac{6}{\pi}\right)^{\frac{1}{3}} \sum_{i=1}^{G} h_{\text{enh},i} \left(\dens_i^{(\uparrow)\, 4/3} + \dens_i^{(\downarrow)\, 4/3}\right)w_i 
\end{align}

where $w_i$ are the merged integration grid weights and the sum runs over all grid points, as in Eq.~\eqref{eq:discretized-exc-fun}. The $f_\params [\mathbf{x}[\dens]](r_i)$ in Eq.~\eqref{eq:discretized-exc-fun} is then $h_{\text{enh},i}$ and the parameters $\params$ denote all the learnable parameters in the architecture described above. 

In all our experiments, we set $D_\text{hid}=256$, $D_\text{nonl}=16$, and $L=3$.

\subsection{Non-local interaction through coarse points}
\label{sec:non-local}
In this section, we expand on the structure of a single non-local layer $f_\text{nonl}^{(l)}$, as sketched in Eq.~\eqref{eq:non-local-layer}. The layer takes $h_i^{(l)}$ as input and produces $h_i^{(l+1)}$ as output; for brevity we drop the layer superscript on all intermediate quantities.
Due to the large number of integration grid points, it is computationally prohibitive to let all the grid points communicate with each other. In this section, we introduce a mechanism to pass on non-local information with the aid of some chosen helper nodes referred to as coarse points.  The information coarsening step is akin to accumulating multipole moments. 
We will discuss this relation formally in \cref{sub:supplmodeltheory}.

Since the integration grid is partitioned into atom-centered subgrids, each grid point $r_i$ is associated with a particular atom. The coarse points $\{R_j\}$ are placed at the atomic centers. In each non-local layer, each grid point communicates with the coarse point of its associated atom: downsampling aggregates information from an atom's grid points onto its coarse point, and upsampling sends the processed coarse features back to the same atom's grid points. Cross-atom information is carried by the density features themselves, which depend on the full molecular density. We use the term ``downsampling'' to refer to the operations that send messages from the integration grid points to the coarse points, which reduces the dimensionality.
We call it ``upsampling'' when sending messages back from the coarse points to the integration grid points, which increases the dimensionality. 

\paragraph{Pre-downsampling transform:} As the non-local interaction is the computational bottleneck, we first project the local features onto a lower-dimensional vector which is cheaper to manipulate. We define 
\begin{align}
    h_{\text{pre-down},i} = \sigma(W_\text{pre-down} h_i^{(l)} + b_\text{pre-down}),
\end{align}
where the weight $W_\text{pre-down}$ is $D_\text{nonl} \times D_\text{hid}$ (the hidden feature dimension; see \cref{sec:neural-network-architecture}).
We maintain the dimensionality $D_\text{nonl}=16$ throughout the non-local component. 

\paragraph{Downsampling:} Let $R_j$ be the coordinates of a coarse point. We define the coarsened feature of order $\ell$ in channel $c$ as a $2\ell+1$-dimensional vector
\begin{align}
    H_{j\ell c} = \sum_{k \in \mathcal{G}_j} \phi_c(\norm{r_{kj}}) Y_\ell(\widehat{r_{kj}}) \sum_{c'} W_{\text{down}, \ell cc'} h_{\text{pre-down},kc'}\, \tilde{w}_k
    \label{eq:downsampling}
\end{align}
where $r_{kj}=r_k - R_j$, $\widehat{r} = r/\norm{r}$, $Y_\ell$ are the spherical harmonics, $\phi_c$ is a radial basis function described in Eq.~\eqref{eq:radial-basis-function}, $\mathcal{G}_j$ denotes the set of grid points associated with atom $j$, and $\tilde{w}_k$ are the per-atom integration weights, as opposed to the merged weights $w_i$ used in the final energy integration. 
This expansion first projects the local scalar feature $h_{\text{pre-down},kc'}$ onto the product basis $\phi_c Y_\ell$ (with a learnable weight per tensor order $\ell$, which mixes the channels), and is followed by an aggregation step that sums up the messages sent by all integration grid points $k$ to the coarse point $j$. The resulting feature $H_{j\ell c}$ is a feature on the coarse point $j$ which transforms equivariantly according to the rotation of the input grid points and coarse points. 
The spherical tensor order $\ell$ ranges from $0$ to $\ell_\text{max}$.
We use $\ell_\text{max}=3$ in the paper.

\paragraph{Coarse-point mixing and symmetric contraction:}
After the downsampling step, we apply an equivariant linear transformation on the coarse features to mix channels within each tensor order $\ell$:
\begin{align}
    H'_{j\ell c} = \sum_{c'} W_{\text{mix}, \ell cc'} H_{j\ell c'},
\end{align}
where $W_{\text{mix},\ell}$ is a $D_\text{nonl} \times D_\text{nonl}$ learnable weight matrix for each $\ell$, initialized to the identity.
We then apply a symmetric contraction \cite{batatia_mace_2022} of correlation order $\nu$ to the coarse features, which generates higher-body-order interactions by taking $\nu$-fold symmetric tensor products of $H'_{j}$. Unlike the original formulation in MACE, which uses element-dependent weights to distinguish atom types in force field applications, our symmetric contraction uses shared weights across all coarse points, since the density functional depends only on the electron density and not on the nuclear charges.
We use $\nu=3$ in our experiments.

\paragraph{Upsampling:} Next, we send the message back from the coarse point to its associated grid points via
\begin{align}
    h'^{(j)}_{ic} = \sum_\ell \phi_c(\norm{r_{ij}}) \left\langle Y_\ell(\widehat{r_{ij}}) , \sum_{c'} W_{\text{up}, \ell cc'} H_{j\ell c'} \right\rangle
    \label{eq:upsampling}
\end{align}
where $r_{ij}=r_i^{(j)} - R_{j}$, $r_i^{(j)}$ is a grid point of atom $j$, and the upsampling uses the same radial basis functions $\phi_c$ as the downsampling step.

\paragraph{Post-upsampling transform and skip connection:}

After the message is sent back, we postprocess it and combine with the input features via a skip connection:
\begin{align}\label{eq:post-up}
    h_{\text{post-up},i} = \sigma(W_\text{post-up} h'_i + b_\text{post-up})
\end{align}
\begin{align}\label{eq:skip-connection}
    h_i^{(l+1)} = \sigma\left(W_\text{cat}\left[h_i^{(l)},\, \exp(-\dens_i)\, h_{\text{post-up},i}\right] + b_\text{cat}\right)
\end{align}
where $W_\text{post-up}$ is $D_\text{nonl}\times D_\text{nonl}$, $W_\text{cat}$ is $D_\text{hid}\times(D_\text{hid}+D_\text{nonl})$, and $\dens_i = \dens_i^{(\uparrow)} + \dens_i^{(\downarrow)}$ is the total density. The factor $\exp(-\dens_i)$ suppresses non-local effects on grid points with higher density values, such as regions near the nuclei, as an inductive bias. The concatenation and projection back to $D_\text{hid}$ enable the non-local layers to be composed iteratively.

\paragraph{Radial basis function:}
We use the following second-moment Gaussian radial basis function in the message passing involved in down- and up-sampling
\begin{align}\label{eq:radial-basis-function}
    \phi_c(r) = \frac{2}{\dim\cdot(2\pi s_c^2)^{\frac{\dim}{2}}}\frac{r^2}{2s_c^2} \exp\left(-\frac{r^2}{2s_c^2}\right)
\end{align}
where $\dim=3$ and $s_c$ are 16 different scale coefficients evenly spaced between \qty{0.3023}{\bohr} and \qty{2.192}{\bohr}, chosen such that two standard deviations of the Gaussians would reach the smallest and largest covalent radius estimates from Pyykko and Atsumi.\cite{pyykko_molecular_2009}
The squared term $r^2$ is chosen to suppress the influence of the near-core features, so that the learned coarsened features focus more on the bonding area.
The Gaussian decay naturally suppresses long-range contributions. Since each atom's grid points are processed independently, the model scales linearly in the number of atoms.

\subsection{Atomic partition of the energy density}\label{sec:atomic-partition}

In this section, we explain how processing the integration grid on a per-atom basis gives rise to a natural atomic partition of the exchange-correlation energy density.

In the downsampling step (Eq.~\eqref{eq:downsampling}), we sum over the grid points $\mathcal{G}_j$ of atom $j$ using the per-atom weights $\tilde{w}_k$:
\begin{align}
    H(R_j) \leftarrow \sum_{k \in \mathcal{G}_j} f(r_k^{(j)} - R_j)\, h(\mathbf{x}(r_k^{(j)}))\, \tilde{w}_k \approx \int f(r - R_j)\, h(\mathbf{x}(r))\, dr,
\end{align}
where we suppress the channel and spherical harmonic indices for clarity.

Similarly, the upsampling step (Eq.~\eqref{eq:upsampling}) sends the processed coarse features back to the grid points of the same atom:
\begin{align}
    h'(r_i^{(j)}; R_j) = f(r_i^{(j)} - R_j)\, H(R_j),
\end{align}
where each grid point communicates only with its associated atom's coarse point, and we write $R_j$ explicitly to emphasize that the upsampled features depend only on atom $j$ and not on any other atoms.

Since the upsampled features $h'$ depend only on a single atom $j$, the model effectively learns an atom-dependent energy density. These atom-specific contributions are combined in the final integration through the merged grid weights, which are given by
\begin{align}\label{eq:merged-weights}
    w_i^{(j)} = \tilde{w}_i^{(j)}\, \pi(r_i^{(j)}, R_j; \{R_{j'}\}),
\end{align}
where $\pi$ is a partition-of-unity function satisfying $\sum_j \pi(r, R_j; \{R_{j'}\}) = 1$ for all $r$. 
Substituting into the exchange-correlation energy (Eq.~\eqref{eq:discretized-exc-fun}), we obtain
\begin{align}
    \Exc[\dens] &= \sum_{j} \sum_{i} g(h'(r_i^{(j)}; R_j),\, h(\mathbf{x}(r_i^{(j)})))\, \tilde{w}_i^{(j)}\, \pi(r_i^{(j)}, R_j; \{R_{j'}\}) \nonumber\\
    &\approx \int \sum_{j} g(h'(r; R_j),\, h(\mathbf{x}(r)))\, \pi(r, R_j; \{R_{j'}\})\, dr,
\end{align}
where $g$ denotes the composition of the output model with the LDA prefactor.
Each atom contributes a learned energy density $g$, weighted by the partition function $\pi$.
In practice, we train with partition-scheme augmentation to ensure the model does not depend on a particular choice of $\pi$; this is detailed in \cref{subsupp:partition-augmentation}.

\subsection{Intuition on the structure of the non-local layer}\label{sub:supplmodeltheory}

This section is meant as a theoretical motivation for the structure of the non-local component of our functional. The main role of the non-local component is to capture interactions between features at different grid points, through helper nodes called coarse points.  
We provide an intuition about:
\begin{enumerate}
    \item how a single downsampling--upsampling pass through a coarse point can approximate any two-body interaction between features on the grid, independent of the coarse point chosen for computational convenience;
    \item how the symmetric contraction (\cref{sec:non-local}) extends this to multi-body interactions, and how the expressivity of our functional could be systematically increased to approximate arbitrary target functionals to any desired accuracy.
\end{enumerate}
We consider $(\{\phi_c,\, Y_{\ell}^m\})$ a basis set of $L^2(\R^3)$. We take two copies of it, indexed by $(c, \ell, m)$ and $(c', \ell', m')$ and consider a product basis of $L^2(\R^3 \times \R^3)$.
A given function $\kappa\in L^2(\R^3 \times \R^3)$ that is globally rotationally invariant, i.e. $\kappa(Q r_i, Q r_k) = \kappa(r_i, r_k)$ for any $Q\in \text{SO(3)}$, can be expressed (this is shown in Theorem \ref{thm:two-body-expansion} later in this section) as
\begin{align}\label{eq-two-body-kernel}
\kappa(r_i, r_k) %
    &= \sum_{c}\sum_{\ell} \phi_{c}(\norm{r_i}) \left\langle Y_{\ell}(\widehat{r_i}),\, 
   \sum_{c'} C_{c, \ell,  c'} \phi_{c'}(\norm{r_k})Y_{\ell}(\widehat{r_k}) \right\rangle,
\end{align}
where $C_{c, \ell, c'}$ is a set of coefficients that depend on the function $\kappa$ and the basis chosen and $Y_\ell$ is a vector that contains $Y_{\ell}^m$ for $-\ell \leq m \leq \ell$.
Using this expression, a convolved feature $h$ of the form 
\begin{equation*}
    h(r_i) := \int \tilde{h}(r_k) \kappa(r_i, r_k)\, dr_k
\end{equation*}
can be written as
 \begin{equation}\label{eq:two-body-rep}
    h(r_i) = \sum_{c}\sum_{\ell} \phi_{c}(\norm{r_i}) \left\langle Y_{\ell}(\widehat{r_i}),\, 
   \sum_{c'} C_{c, \ell,  c'} \int \tilde{h}(r_k)\phi_{c'}(\norm{r_k})Y_{\ell}(\widehat{r_k})  \,dr_k\right\rangle.
\end{equation}

We can now draw a parallel between this expression and the structure of the non-local layer, 
which shows that a single downsampling--upsampling pass can approximate two-body interactions between features on the grid. Combining the downsampling, upsampling and post-upsampling parts described in Eqs.~\eqref{eq:downsampling}, \eqref{eq:upsampling}, and~\eqref{eq:post-up}, we obtain the following expression with some rearrangement
\begin{align}
      h_{\text{post-up},i} &= \sigma(W_\text{post-up} h'_i + b_\text{post-up}) \\  h'^{(j)}_{ic} &= \sum_\ell \phi_c(\norm{r_{ij}}) \left\langle Y_\ell(\widehat{r_{ij}}) , \sum_{c'} W_{\text{up}, \ell cc'}  \sum_{k \in \mathcal{G}_{j}} \phi_{c'}(\norm{r_{kj}}) Y_\ell(\widehat{r_{kj}})\,\, \tilde{h}_{k\ell c'} \tilde{w}_k\right\rangle, \label{eq:our-layer}
\end{align}
where $j$ is the atom associated with grid point $i$ and we redefined part of the downsampled $H_{j \ell c'}$ in Eq.~\eqref{eq:downsampling} as $\tilde{h}_{k\ell c'}$ to make the dependence on $\phi_{c'}$ and $Y_\ell$ more explicit. The $\sum_{k \in \mathcal{G}_{j}}$ approximates the integral in Eq.~\eqref{eq:two-body-rep}.

Comparing the expressions of $h_{\text{post-up}, i}$ in Eq.~\eqref{eq:our-layer} and $h(r_i)$ in Eq.~\eqref{eq:two-body-rep}, we note a similar pattern. In our case, the coefficients $C_{c, \ell, c'}$ are part of the learnable parameters, the function $\tilde{h}$ also has a learnable component. Moreover, rather than expanding around the origin, we expand around the atom's center $R_{j}$. In theory, one can expand around any position, but the fidelity of the approximation with a finite basis may depend on the choice. 
As atomic density features are expected to be symmetric around the atomic centers, atomic centers are a natural choice for the expansion. 
The final aggregation on $c$ in Eq.~\eqref{eq:two-body-rep} in our case is generalized by the linear mixing with weights $W_{\text{post-up}}$.

The above analysis considers a single downsampling--upsampling pass, which captures two-body interactions. The symmetric contraction applied between downsampling and upsampling (see \cref{sec:non-local}) extends this to higher-body-order interactions: by taking symmetric tensor products of the coarsened features $H_j$ with itself up to correlation order $\nu$, the layer can represent $(\nu+1)$-body interactions.\cite{batatia_mace_2022,dusson_atomic_2021} In our experiments, we use $\nu=3$, enabling up to four-body interactions within each non-local layer.

We now provide the result that was taken for granted at the beginning of the discussion:
\begin{theorem}\label{thm:two-body-expansion}
Let $\kappa$ be a function in $L^2(\R^{3} \times \R^3)$ capturing the 2-body interaction between 3D coordinates $r_1$ and $r_2$. 
We assume $\kappa$ is globally rotationally invariant, that is for any $Q\in SO(3)$, we have
\begin{align}
    \kappa(Qr_1, Qr_2) = \kappa(r_1, r_2).
\end{align}
Assume $\{\phi_c, Y_{\ell}^m\}$ is a basis set of $L^2(\R^3)$ and we consider 2 copies of it, indexed by $(c_1, \ell_1, m_1)$ and $(c_2, \ell_2, m_2)$ to form a product basis of $L^2(\R^{3} \times \R^3)$.
Then we can expand $\kappa$ around the origin as follows:
\begin{align}
     \kappa(r_1,  r_2) %
    &= \sum_{c_1, \ell} \phi_{c_1}(\norm{r_1}) \left\langle Y_{\ell}(\widehat{r_1}),\, 
   \sum_{c_2} C_{c_1, \ell,  c_2} \phi_{c_2}(\norm{r_2})Y_{\ell}(\widehat{r_2}) \right\rangle
\end{align}
where  $C_{c_1, \ell,  c_2}$ is a set of coefficients that depend on $\kappa$ and the basis set chosen and $Y_\ell$ is the vector form of  $Y_\ell^m$ for $-\ell \leq m \leq \ell$.
\end{theorem}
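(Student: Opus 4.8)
The plan is to expand $\kappa$ in the obvious product basis and then let the global rotational invariance collapse the expansion into the claimed form, using Schur's lemma (equivalently, the spherical-harmonic addition theorem). First I would use that, since $\{\phi_c Y_\ell^m\}$ is a basis of $L^2(\R^3)$, the tensor products $(\phi_{c_1} Y_{\ell_1}^{m_1})\otimes(\phi_{c_2} Y_{\ell_2}^{m_2})$ form a basis of $L^2(\R^3\times\R^3)$, so there are unique coefficients $A_{c_1\ell_1 m_1 c_2\ell_2 m_2}$ with
\[
\kappa(r_1,r_2) = \sum A_{c_1\ell_1 m_1 c_2\ell_2 m_2}\,\phi_{c_1}(\norm{r_1})Y_{\ell_1}^{m_1}(\widehat{r_1})\,\phi_{c_2}(\norm{r_2})Y_{\ell_2}^{m_2}(\widehat{r_2}),
\]
the series converging in $L^2$.

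Next I would exploit invariance. For $Q\in SO(3)$ the radial factors are unchanged, $\phi_c(\norm{Qr}) = \phi_c(\norm{r})$, while the angular factors transform by the Wigner matrix of the spin-$\ell$ representation, $Y_\ell^m(\widehat{Qr}) = \sum_{m'} D^\ell_{m'm}(Q)\,Y_\ell^{m'}(\widehat{r})$; working with \emph{real} spherical harmonics makes $D^\ell(Q)$ real orthogonal. Substituting into the hypothesis $\kappa(Qr_1,Qr_2) = \kappa(r_1,r_2)$ and matching coefficients against the product basis (legitimate since the span of the basis is dense and the identity holds for a.e.\ $(r_1,r_2)$, for every $Q$) yields, for each fixed $c_1,\ell_1,c_2,\ell_2$, the matrix identity $A = D^{\ell_1}(Q)\,A\,D^{\ell_2}(Q)^{\!\top}$, where $A$ is read off as a $(2\ell_1+1)\times(2\ell_2+1)$ matrix in the magnetic indices $m_1,m_2$. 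Orthogonality of $D^{\ell_2}$ turns this into $A\,D^{\ell_2}(Q) = D^{\ell_1}(Q)\,A$ for all $Q$, i.e.\ $A$ intertwines the irreducibles $\ell_2$ and $\ell_1$.

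Then Schur's lemma finishes it: the real representations $D^\ell$ of $SO(3)$ are absolutely irreducible and pairwise inequivalent, so $A$ vanishes unless $\ell_1=\ell_2$, and when $\ell_1=\ell_2=:\ell$ it is a scalar multiple of the identity. Absorbing that scalar into a coefficient $C_{c_1,\ell,c_2}$ gives $A_{c_1\ell_1 m_1 c_2\ell_2 m_2} = C_{c_1,\ell,c_2}\,\delta_{\ell_1\ell_2}\,\delta_{m_1 m_2}$. Substituting back and recognizing $\sum_m Y_\ell^m(\widehat{r_1})Y_\ell^m(\widehat{r_2}) = \langle Y_\ell(\widehat{r_1}),Y_\ell(\widehat{r_2})\rangle$ reproduces exactly
\[
\kappa(r_1,r_2) = \sum_{c_1,\ell}\phi_{c_1}(\norm{r_1})\left\langle Y_\ell(\widehat{r_1}),\ \sum_{c_2} C_{c_1,\ell,c_2}\,\phi_{c_2}(\norm{r_2})Y_\ell(\widehat{r_2})\right\rangle.
\]

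The one step needing care is the last one, the representation-theoretic bookkeeping: the exact indexing/phase convention for the Wigner matrices, and, if one prefers complex spherical harmonics, the appearance of complex conjugation (the intertwiner then links $\ell$ with its conjugate, which for $SO(3)$ is again $\ell$, and one gets $\sum_m Y_\ell^m Y_\ell^{m*}$). Staying with real spherical harmonics throughout keeps $D^\ell$ real orthogonal and Schur's lemma clean. An entirely equivalent and convention-free route, which I would mention as a sanity check, is to note that a rotation-invariant function on $S^2\times S^2$ depends only on $\widehat{r_1}\cdot\widehat{r_2}$, expand that dependence in Legendre polynomials $P_\ell(\widehat{r_1}\cdot\widehat{r_2})$, and invoke the addition theorem $P_\ell(\widehat{r_1}\cdot\widehat{r_2}) \propto \sum_m Y_\ell^m(\widehat{r_1})Y_\ell^m(\widehat{r_2})$ to recover the angular part, the radial dependence being carried entirely by the $\phi_c$ expansion.
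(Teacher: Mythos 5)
Your proposal is correct and follows the same skeleton as the paper's proof --- expand $\kappa$ in the product basis, use rotational invariance to force $\ell_1=\ell_2$ and couple the magnetic indices, then recognize the resulting $\sum_m Y_\ell^m(\widehat{r_1})Y_\ell^m(\widehat{r_2})$ as the inner product $\langle Y_\ell(\widehat{r_1}),Y_\ell(\widehat{r_2})\rangle$ --- but the middle step is handled by a genuinely different lemma. The paper averages the invariance relation over $SO(3)$, pushes the Haar integral through the expansion, and evaluates $\int D^{\ell_2}_{k_2 m_2}(Q)D^{\ell_1}_{k_1 m_1}(Q)\,dQ$ with the explicit Wigner-matrix orthogonality formula (Varshalovich, Lemma 4.11.1), which yields $\delta_{\ell_1\ell_2}\delta_{m_1,-m_2}\delta_{k_1,-k_2}$ and hence the $\sum_k Y_\ell^{k}\,Y_\ell^{-k}$ pairing characteristic of complex harmonics. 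You instead match coefficients pointwise in $Q$ to obtain the intertwining relation $A\,D^{\ell_2}(Q)=D^{\ell_1}(Q)\,A$ and invoke Schur's lemma for the (absolutely irreducible, pairwise inequivalent) real representations, getting $A=C\,\delta_{\ell_1\ell_2}\delta_{m_1m_2}$ directly. The two are of course two faces of the same representation-theoretic fact --- the orthogonality of matrix coefficients is usually proved via averaging plus Schur --- but your route avoids the table formula and its phase/conjugation conventions (provided one sticks to real spherical harmonics, as you note), and your Legendre/addition-theorem remark gives a useful convention-free sanity check; the paper's averaging argument, by contrast, handles the normalization bookkeeping mechanically in a single known integral and needs no discussion of absolute irreducibility over $\R$. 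Both arguments share the same mild looseness about the basis being orthonormal (needed for unique coefficient matching), so neither is weaker on that point.
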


\begin{proof}
By the Hilbert space structure of $L^2(\R^{3}\times \R^3)$, we can decompose $\kappa$ as
\begin{align*}
    \kappa(r_1, r_2)
    = \sum_{\substack{c_1, \ell_1, m_1 \\ c_2, \ell_2, m_2}}  C_{c_1,\ell_1,m_1, c_2,\ell_2, m_2} \prod_{i=1}^2 \phi_{c_i}(\norm{r_i}) Y_{\ell_i}^{m_i}(\widehat{r_i}),
\end{align*}
for a certain set of coefficients $C_{c_1,\ell_1,m_1, c_2,\ell_2, m_2}$, where $-\ell_1 \leq m_1 \leq \ell_1$ and $-\ell_2 \leq m_2 \leq \ell_2$. 
Using the rotational invariance property, we have
\begin{align*}
    \kappa(r_1, r_2) = \int \kappa(r_1, r_2) \,dQ
     =  \sum_{\substack{c_1, \ell_1, m_1 \\ c_2, \ell_2, m_2}} 
    C_{c_1,\ell_1,m_1,c_2,\ell_2,m_2} 
    \prod_{i=1}^2 \phi_{c_i}(\norm{r_i}) 
    \int \prod_{j=1}^2 Y_{\ell_j}^{m_j}(Q\widehat{r_j}) \,dQ
\end{align*}
where $dQ$ denotes the uniform measure over $SO(3)$. 
Using the equivariance property of spherical harmonics (Lemma 11 in \citet{dusson_atomic_2021}), we obtain
\begin{align*}
    \kappa(r_1, r_2) 
    &= \sum_{\substack{c_1, \ell_1, m_1 \\ c_2, \ell_2, m_2}}  C_{c_1,\ell_1,m_1, c_2,\ell_2, m_2} \prod_{i=1}^2 \phi_{c_i}(\norm{r_i}) \int \prod_{j=1}^2 \sum_{k_j=-\ell_j}^{\ell_j}Y_{\ell_j}^{k_j}(\widehat{r_j}) D_{k_j m_j}^{\ell_j}(Q) \,dQ \\
    &= \sum_{\substack{c_1, \ell_1, m_1 \\ c_2, \ell_2, m_2}} C_{c_1,\ell_1,m_1, c_2,\ell_2, m_2} \prod_{i=1}^2 \phi_{c_i}(\norm{r_i}) 
    \sum_{k_1, k_2}
    \prod_{j=1}^2 Y_{\ell_j}^{k_j}(\widehat{r_j})
    \int D_{k_2 m_2}^{\ell_2}(Q)D_{k_1 m_1}^{\ell_1}(Q) \,dQ,
\end{align*}
where $D_{k_i, m_i}^{\ell_i}$ are the Wigner D-matrices.
Using Eq.~(2) in Lemma 4.11.1 in \citet{varshalovich_quantum_1988}, we evaluate the integral of the product of Wigner matrices to be $\delta_{\ell_1, \ell_2} \delta_{m_1, -m_2}\delta_{k_1, -k_2}$, up to some constants that will be absorbed into the coefficients $C$ without renaming. 
Hence, we can reduce the indices to just $c_1, c_2, \ell, m$, and $k$ and obtain
\begin{equation*}
 \kappa(r_1, r_2)  = \sum_{c_1, c_2, \ell, m} C_{c_1,c_2, \ell, m} \prod_{i=1}^2 \phi_{c_i}(\norm{r_i}) 
    \sum_{k}
     Y_{\ell}^{k}(\widehat{r_2})  Y_{\ell}^{-k}(\widehat{r_1}).
\end{equation*}
Note that the summation over $m$ is independent of the basis functions $\phi_{c_i}$ and $Y^{k}_{\ell}$.
This means we can further simplify the expansion by absorbing the summation over $m$ into a newly defined $C_{c_1, c_2, \ell}$
\begin{align*}
   \kappa(r_1, r_2)  = \sum_{c_1, c_2, \ell} C_{c_1,c_2, \ell} \prod_{i=1}^2 \phi_{c_i}(\norm{r_i}) \sum_k
     Y_{\ell}^{k}(\widehat{r_2})  Y_{\ell}^{-k}(\widehat{r_1}).
\end{align*}
which we rearrange to be
\begin{align*}
   \kappa(r_1, r_2)  
    &= \sum_{c_1, c_2, \ell}C_{c_1, \ell,  c_2} \phi_{c_1}(\norm{r_1}) \left\langle Y_{\ell}(\widehat{r_1}),\, 
    \phi_{c_2}(\norm{r_2})Y_{\ell}(\widehat{r_2}) \right\rangle \\
    &= \sum_{c_1,  \ell} \phi_{c_1}(\norm{r_1}) \left\langle Y_{\ell}(\widehat{r_1}),\, 
   \sum_{c_2} C_{c_1, \ell,  c_2} \phi_{c_2}(\norm{r_2})Y_{\ell}(\widehat{r_2}) \right\rangle, 
\end{align*}
where $Y_\ell$ is the vector containing $Y^k_\ell$ for $-\ell \leq k \leq \ell$, obtaining the desired expression.
\end{proof}

\paragraph{Extension to $N$-body interaction.}  Theoretically, one could generalize the theorem above to $N$-body interactions and adapt the non-local layer accordingly.\cite{dusson_atomic_2021, batatia_mace_2022}

\subsection{Related work}

The theoretical foundation developed in this section is firmly based on the framework of the \emph{atomic cluster expansion} (ACE),\cite{drautz_atomic_,dusson_atomic_2021} which enables the systematic construction of a complete descriptor of the atomic environment. Specifically, the expressivity of message passing can be enhanced by increasing the basis set size parameters $\ell$ and $c$, as well as the correlation order of the interaction ($\nu=3$ in our case).\cite{batatia_mace_2022}
In the ACE framework, the atomic center directly receives the message. In contrast, our approach accumulates spherical tensor features at coarse points, serving both as a computational aid and as a means to enrich modeling capacity. These aggregated features are then transmitted back to the integration grid points. This mechanism is reminiscent of numerical techniques such as the fast multipole method (FMM),\cite{greengard_fast_1987} where, for example, the Laurent expansion $\phi_c$ can be used to approximate the long-range two-body Coulomb kernel \( f(r_1, r_2) = 1/\|r_1 - r_2\| \).
A similar strategy is employed in \citet{gao_learning_2024}, where nuclei-centric descriptors of the total density are constructed. Their work further processes the coarsened features using message-passing layers within a neural force field framework,\cite{batzner_e3equivariant_2022} enabling direct graph-level readout. However, in our preliminary experiments, we observed that this approach can lead to significant overfitting when evaluated on the Diet GMTKN55 benchmark.

Other studies have also leveraged coarsened features in various contexts, such as modeling protein binding sites\cite{sestak_vnegnn_2024} and compressing input signals in fluid dynamics simulations.\cite{alkin_universal_2024}

From a theoretical standpoint, learning the enhancement factor corresponds to learning a mapping between functional spaces. In recent machine-learning literature, neural operators have emerged as powerful tools for this purpose.\cite{kovachki_neural_2023} These models learn operators that map input functions to output functions and can be evaluated with arbitrary discretization of the domain.
In particular, our non-local layer shares similarities with low-rank kernel neural operators,\cite{kovachki_neural_2023} which retain rich functional expressivity while reducing computational cost.
We tailored the non-local layer design to better suit the structure of the atomic grids, as the irregularity limits the applicability of most standard techniques from the neural operator literature.

\section{Training details}
\label{supp:training-details}
\subsection{Training objective}
\label{sec:training-objective}
The objective of training is to minimize a regression loss on reaction energies, which linearly combines molecular total energies. 
The total energy of a molecule $M$ is computed as
\begin{align}
    E_\text{tot}^\theta[M] = E_{\text{tot}-\text{xc}}[M] + \Exc^\theta[\dens]
\end{align}
where $E_{\text{tot}-\text{xc}}$ is the total energy minus the exchange-correlation component, calculated using the B3LYP functional.\cite{becke_densityfunctional_1993,stephens_initio_1994}
This is precomputed. 
Notably, this quantity contains the D3 dispersion energy. 
We calculate the reaction energy as a weighted sum of the stoichiometric coefficients $\{c_M\}$ for a reaction
\begin{align}
    \Delta E = \sum_{M} c_M E_\text{tot}[M].
\end{align}

The final loss is then defined as a weighted mean squared error
\begin{align}
    \E\left[
    \frac{\left|\Delta E-\Delta E^\text{ref}\right|^2}{\qty{e-4}{\hartree} + |\Delta E^\text{ref}|}
    \right]
    \label{eq:weighted_MSE_loss}
\end{align}
where $\Delta E^\text{ref}$ are the reference reaction energies and the expectation is taken over reactions sampled via the hierarchical procedure described in Sec.~\ref{subsupp:dataset-sampling}.

\subsection{Dataset sampling and model selection}
\label{subsupp:dataset-sampling}
The training data comprise multiple datasets spanning diverse chemical properties (see Sec.~\ref{sec:training-data}).
Because the datasets vary widely in size---from tens of reactions (e.g., transition-metal benchmarks) to over a hundred thousand (e.g., MSR-ACC/TAE)---uniform sampling would cause training to be dominated by the largest datasets and underrepresent smaller but chemically important ones.

To address this, we adopt a two-level sampling procedure.
At each training step, we first sample a dataset $i$ with probability $p_i$, and then sample a reaction uniformly from within that dataset.

\paragraph{Initial probabilities}
The initial sampling probabilities are set as $p_i \propto \alpha_{c(i)} \cdot |D_i|$, where $|D_i|$ is the number of reactions in dataset $i$, $c(i)$ denotes the category of dataset $i$, and $\alpha_c$ is a per-category weight.
We group the datasets into nine categories---total atomization energies (including non-GDB9-derived MSR-ACC/TAE25 structures and w4-cc), GDB9 structures (including GDB9-W1-F12 and our derived version), barrier heights (MSR-ACC/Reactions and BH9), thermochemistry (MSR-ACC/Conf, IP, PA, EA and MB2061), non-covalent interactions, non-covalent interaction potential energy surfaces (MSR-ACC/Water and DES370K), distortion data, main-group atomic sets, and transition metals---and compute the category weights $\{\alpha_c\}$ by solving for the values that yield prescribed target sampling proportions across categories.

\paragraph{Relative excess loss.}
To monitor training progress across all datasets,
we introduce the \emph{relative excess loss} (REL).
Most training datasets are split into a training portion and a held-out validation portion; datasets that are too small to split are used in their entirety for both training and validation.
Every 25{,}000 training steps, we evaluate the current model on each holdout set and compute the mean absolute error (MAE) $\ell_i$ in kcal/mol.
The per-dataset REL is defined as
\begin{align}
    \mathcal{R}_i = \frac{\max(\ell_i - t_i,\; 0)}{n_i},
    \label{eq:rel}
\end{align}
where $t_i$ and $n_i$ are precomputed constants specific to each holdout set $i$.
$t_i$ is the minimum MAE among a set of baseline density functionals evaluated on the same holdout set, and $n_i$ is the difference between the median and minimum baseline MAEs.
Intuitively, $\mathcal{R}_i = 0$ means \clippy{} is at least as accurate as the best baseline on dataset $i$, and $\mathcal{R}_i = 1$ means it performs at the median baseline level.
The total REL, $\mathcal{R} = \sum_i \mathcal{R}_i$, counts (in a soft sense) the number of datasets on which the model has not yet reached the best-baseline accuracy, providing a diagnostic of whether the model has sufficiently fitted the training data across all domains.

\paragraph{Adaptive sampling.}
After each validation, we update the dataset sampling probabilities using a multiplicative exponential update rule:
\begin{align}
    p_i \leftarrow p_i \cdot \exp\!\left(\eta\, \mathcal{R}_i\right),
    \label{eq:sampling-update}
\end{align}
where $\eta > 0$ is a step size, followed by renormalization $p_i \leftarrow p_i / \sum_j p_j$.
This steers training toward the model's weakest categories: datasets where $\mathcal{R}_i > 0$ receive increased sampling probability, while datasets where $\mathcal{R}_i = 0$, meaning the model already surpasses the best baseline, are automatically downweighted, redirecting computational resources to categories that still need improvement.
This mechanism also provides a natural way of handling uncertainty in the reference energies.
Since the training labels are computed with high-level wavefunction methods that carry their own finite accuracy, continuing to minimize the loss on a dataset where the model already outperforms the best baseline functional risks overfitting to noise in the reference data.
By clamping the per-dataset loss contribution at the best-baseline level, the REL-based update effectively treats the best-baseline accuracy as a soft target, avoiding diminishing returns on well-learned categories and focusing capacity where genuine improvements are still possible.

In practice we set $\eta=0.025$ and update the sampling weights every 25,000 steps.

\subsection{Parameter initialization}
The weights of the local parts of the network are initialized according to the Xavier initialization scheme\cite{glorot_understanding_} using uniform distribution and the biases are set to 0. 
The parameters of the non-local layer are initialized with a modified Xavier scheme which also takes into account that tensor features of the same order share weights.\cite{geiger_e3nn_2022}

\subsection{Optimization}
\label{subsupp:optimization}
We optimize model parameters using a combined Muon-Adam strategy.
Model parameters are split into two groups based on their dimensionality: all hidden weight matrices (parameters with $\geq 2$ dimensions, excluding the final linear layer) are optimized with the Muon optimizer, while low-dimensional parameters (biases) and the final output layer are optimized with Adam.

Muon runs standard SGD with momentum and then replaces each parameter update with the nearest orthogonal matrix.\cite{_muon_,liu_muon_2025}
Adam parameters use betas $(0.9, 0.95)$ and $\epsilon = 10^{-10}$.

Both parameter groups follow independent cosine learning rate schedules with linear warmup, defined as
\begin{align}
\text{lr}(t) = \begin{cases}
    \text{lr}_{\max} \times \dfrac{t}{t_{\text{warmup}}},& \text{if } t \leq t_{\text{warmup}},\\[6pt]
    \text{lr}_{\min} + \dfrac{\text{lr}_{\max} - \text{lr}_{\min}}{2} \left(1 + \cos\left(\dfrac{t - t_{\text{warmup}}}{T - t_{\text{warmup}}}\pi\right)\right), & \text{otherwise},
\end{cases}
\end{align}
where $T$ is the total number of training steps.
The Muon and Adam groups use separate peak learning rates.
Optimizer hyperparameters are summarized in Table~\ref{tab:hyperparameters}.

\begin{table}[t!]
    \centering
    \caption{Optimizer hyperparameters.}
    \tablefontsize
\begin{tabular}{ll}
\toprule
Hyperparameter & Value \\
\midrule
Maximum learning rate (Muon)  &  0.0007 \\
Maximum learning rate (Adam)  &  0.00015 \\
Muon momentum                 &  0.95 \\
Adam betas                    &  (0.9, 0.95) \\
Warmup time                   &  50,000 \\
Num.\ steps                   &  1,000,000 \\
Gradient clipping threshold   &  0.0001 \\
EMA decay (pre-training)      &  0.9999 \\
EMA decay (fine-tuning)       &  0.999 \\
\bottomrule
\end{tabular}

    \label{tab:hyperparameters}
\end{table}

All ablation models were run on 8 A100 GPUs, amounting to a total minibatch size of 8, in a distributed data parallel manner (one reaction per GPU).
All models were trained for 1,000,000 steps.

Furthermore, we perform exponential moving average (EMA) of model weights with a decay factor of 0.9999 to improve convergence.

\subsection{Self-consistent fine-tuning}
\label{subsupp:SCF-finetuning}
Empirically, we find the functionals trained on B3LYP features already yield decent performance when evaluated self-consistently, but sometimes there can be a gap in their predictive power compared to evaluating on B3LYP densities. 
To reduce this gap, we introduce a self-consistent fine-tuning strategy to transfer the predictive performance.

Concretely, recall that to evaluate the functional we solve the following minimization problem
\begin{align}
    E_\text{tot} = \min_C E_\text{tot}^\theta[C]
\end{align}
with $C$ being a $B\times N$ matrix ($B$ and $N$ stand for the number of basis functions and number of electrons) with orthonormal columns, representing the occupied orbitals.
By the envelope theorem (the first-order stationarity condition), we have that 
\begin{align}
    \nabla_\theta E_\text{tot} = \nabla_\theta E_\text{tot}^\theta[C^*]
\end{align}
where $C^* = \arg\min_C E_\text{tot}^\theta[C]$.
That is, to make proper model updates, we simply need to train on the ground state density matrices found self-consistently or via any minimization procedure. 
Therefore, at the fine-tuning stage, we take the model weights trained previously and fine-tune on self-consistent densities produced by the model itself.
We use the same weighted loss as in Eq.~\eqref{eq:weighted_MSE_loss}. 
We reinitialize the optimizer state, use a constant global learning rate of $10^{-5}$ for both Muon and Adam, and apply per-parameter RMS gradient clipping with threshold $10^{-4}$.
We apply EMA with a decay factor of 0.999.
We fine-tune with a minibatch size of 1 on 1 A100 GPU for 20,000 steps monitoring the density error as explained in Sec.~\ref{sec:densities} of the main paper. The adaptive sampling probabilities $p_i$ defined in Sec.~\ref{subsupp:dataset-sampling} are fixed at this stage and set equal to the sampling probabilities found at the $10^6$ step in training stage.

To produce self-consistent densities, we resort to a simpler SCF procedure:
we use the PySCF loop along with DIIS (Direct Inversion in the Iterative Subspace)\cite{pulay_convergence_1980,pulay_improved_1982} on the latest 8 updates. 
We use the Treutler-Ahlrichs radial grids and grid level 3 for integration.
Densities are represented in the def2-QZVP basis set.\cite{weigend_gaussian_2003} 
We set the density threshold to be $10^{-8}$, and discard grid points with initial density values lower than the threshold. 
Density fitting is enabled.
The procedure starts from the B3LYP initial guess and runs until energy change is lower than \qty{5e-6}{\hartree} and orbital gradient norm is smaller than \qty{1e-3}{\hartree}.
The algorithm is terminated after 40 steps if not converged. If not converged, the lowest energy found during the SCF cycles is used.

\section{Training data: computational details}
\label{sec:training-data-details}
This section specifies in detail the computational protocols that were used to produce our training data.

\subsection{MSR-ACC/TAE dataset}
In this section, we provide a summary of the MSR-ACC/TAE set.
The protocol for generating it is specified in detail in a separate publication.\cite{ehlert_accurate_2025}
The molecular structures were obtained by enumerating all plausible covalent graphs of closed-shell charge-neutral molecules with up to 4 non-hydrogen atoms up to argon excluding the rare-gas elements and by sampling from graphs with up to 5 such atoms, then determining the 3D structure by a cascade of geometry relaxation steps with increasingly accurate methods, B3LYP-D3(BJ)/def2-TZVPP being the final one.
We only keep the lowest-energy conformer for each graph.
Molecular graphs (with undetermined bond order) are obtained from the bond model of GFN-FF.\cite{spicher_robust_2020}
Molecules for which B3LYP/def2-TZVPP predicts the triplet to be lower in energy than the singlet are excluded.
The molecules are labeled with total atomization energies obtained by the W1-F12 protocol,\cite{karton_explicitly_2012} which reproduces CCSD(T)/CBS at benchmark accuracy.
Molecules for which the relative contribution of the perturbative triple excitations to the atomization energy (\%TAE[(T)]) is larger than 6\% (a sign of significant multireferential character) are excluded.
All resulting stable equilibrium structures consisting of a single covalently bound fragment (90.7\%) can be found in the released MSR-ACC/TAE25 dataset.\cite{ehlert_accurate_2025}
This released dataset is combined with the remainder of structures consisting of two or more covalent fragments to form the MSR-ACC/TAE training dataset.

In addition to the structures generated according to the graph-based protocol, MSR-ACC/TAE was extended to larger systems by using structures from GDB9,\cite{karton_highly_2025} changing the element composition by replacing second row elements with third row elements and H with \{Li, Na, F, Cl\}, followed by geometry relaxation with GFN2-xTB and B3LYP-D3(BJ)/def2-TZVPP.
The resulting $\sim$41k structures were used to form total atomization reactions and were labeled with the W1w protocol.\cite{karton_w4_2006,karton_explicitly_2012}

\subsection{MSR-ACC/IP, /EA, /PA, /Conf, /NCI, /Water, /Distortion and /Reactions datasets}

The reactions in all remaining datasets in MSR-ACC are labeled with a slightly modified W1w protocol,\cite{karton_w4_2006,karton_explicitly_2012} an earlier version of W1-F12 with a similar accuracy and somewhat higher computational cost, using the MRCC software package.\cite{mester_overview_2025}
In W1w, the Hartree-Fock (HF) component of the total energy is extrapolated to the complete basis-set limit (CBS) from the jul-cc-pV(T+d)Z and jul-cc-pV(Q+d)Z basis sets \cite{peterson_systematically_2008} using the $E(L) = E_\infty + A/L^\alpha$ two-point extrapolation formula, with $\alpha= 5$.
The valence CCSD correlation energy is extrapolated from the same basis sets with an extrapolation exponent of $\alpha=3.22$.
The valence perturbative triple excitations are extrapolated from the jul-cc-pV(D+d)Z and jul-cc-pV(T+d)Z basis with an extrapolation exponent of $\alpha = 3.22$.
The CCSD(T) inner-shell contribution is calculated with the cc-pwCVTZ basis set (modification of the original W1w which used a custom basis set).
Like in MSR-ACC/TAE, we exclude reactions where any structure has \%TAE[(T)] exceeding 6\%.

The reactions in MSR-ACC/IP, MSR-ACC/EA, and /PA were obtained by removing an electron, adding an electron, and adding a uniformly sampled proton, respectively, from uniformly sampled molecules from MSR-ACC/TAE.
The reactions in MSR-ACC/Conf are conformational changes up to 10 kcal/mol in molecules from MSR-ACC/TAE generated with the CREST program,\cite{pracht_automated_2020} with structures relaxed at the B3LYP/def2-TZVPP level.
The reactions in MSR-ACC/NCI were obtained by adding up to six fragments with one non-hydrogen atom and optimizing the resulting cluster; clusters with topology changes, e.\,g. due to chemical reactions, were discarded.
The structures in MSR-ACC/Water represent the water dimer potential energy surface; they are taken from \citet{smith_revised_2016} and relabeled.
The reactions in MSR-ACC/Distortion were obtained by distorting equilibrium structures from the MSR-ACC/TAE subset based on their vibrational normal modes, the energy difference between the equilibrium structure and the distorted one was used as reaction for describing the vibrational modes.
The reactions in MSR-ACC/Reactions are barrier heights of elementary reactions of organic molecules with up to eight atoms.
The initial structures for this dataset were sampled from the public datasets Transition1x,\cite{schreiner_transition1x_2022} RFD1-CNHO,\cite{zhao_comprehensive_2023} and 3DReact,\cite{vangerwen_3dreact_2024} and were then recombined through an in-house reaction exploration engine similar in function to Chemoton\cite{unsleber_chemoton_2022} or AutodE.\cite{young_autode_2021}
The resulting reaction paths were refined by intrinsic reaction coordinate (IRC) optimization with TPSSh\cite{staroverov_comparative_2003}/def2-SVP.

\subsection{Atomic datasets}

Total energies, ionization potentials up to triple ionization, and electron affinities of atoms up to argon excluding Li and Be were calculated at CCSD(T)/CBS by extrapolating the HF component with the two-point formula of \citet{karton_comment_2006} and the CCSD(T) correlation component with the cubic power law formula from the aug-cc-pCVQZ and aug-cc-pCV5Z basis sets.

\subsection{Density features on the grid}\label{subsec:density-features-grid}
To regularize the trained model with respect to numerical variations on the grid, we employ a data augmentation strategy. First, the density is computed on four distinct integration grids, using level 1 as defined in PySCF\cite{sun_pyscf_2018} with four different radial integration schemes: Treutler, Mura–Knowles, Gauss-Chebyshev and Delley.

\subsection{Augmentation of the space-partitioning schemes during training}\label{subsupp:partition-augmentation}
As explained in \cref{sec:atomic-partition}, we decompose the XC energy into atomic contributions using a space-partitioning scheme.
The molecular $\Exc$ should in principle be invariant the choice of the space-partitioning scheme; therefore, to prevent the model from overfitting to any single scheme we randomize it during training: at each training step, a
partitioning scheme is sampled uniformly at random from three options: (i) the standard Becke partition with Treutler-Ahlrichs radii
adjustment;\cite{becke_multicenter_1988,treutler_efficient_1995} (ii) the
Stratmann-Scuseria-Frisch (SSF) scheme;\cite{stratmann_achieving_1996} and (iii)~the Laqua-Kussmann-Ochsenfeld (LKO) scheme,\cite{laqua_improved_2018} which applies the Becke polynomial but caps
interatomic distances at a cutoff radius, enabling linear-scaling cost for large molecules. The sampled scheme is applied consistently to all molecules in a reaction. This augmentation exposes the model to different spatial weight distributions during training, reducing overfitting to artifacts of a single
partitioning scheme.

We use the same basis set and integration grid for molecules in the same reaction, to account for error cancellation.
During training we evaluate our functional at fixed densities using B3LYP\cite{becke_densityfunctional_1993,stephens_initio_1994} in the def2-QZVP basis set\cite{weigend_gaussian_2003} or the ma-def2-QZVP basis set\cite{zheng_minimally_2011} for all reactions containing anions. Based on the fixed density we compute the relative energy of a reaction from the B3LYP total energy without the XC energy.

\section{Evaluation protocols}
\label{sec:suppeval}

In this section, we describe the evaluation settings for the four benchmark categories conducted in this paper: (1) reaction energies, (2) dipole moments, (3) geometry optimization, and (4) computational cost.

\subsection{Evaluation sets}
\label{sec:eval-sets}
\paragraph{Reaction energies} 
Our primary evaluation set is the GMTKN55\cite{goerigk_look_2017} datasets, the de facto standard benchmark for electronic structure methods covering five categories: basic properties, thermochemistry, kinetics, intermolecular non-covalent interactions, and conformational energies (or intramolecular non-covalent interactions). Additionally, we test \clippy on the atomization energy test set W4-17,\cite{karton_w417_2017} on the Wiggle150 dataset of strained conformations of adenosine, benzylpenicillin, and efavirenz,\cite{brew_wiggle150_2025} on the W1-SN2-BH dataset of reaction barriers for S$_{\rm N}$2 reactions,\cite{karton_w1sn2bh_2026} and on selected datasets containing transition metals: MOR41,\cite{dohm_comprehensive_2018}
ROST61,\cite{maurer_assessing_2021}
MOBH35,\cite{semidalas_mobh35_2022} 3dTMV,\cite{neugebauer_benchmarkquality_2023}
3d4dIPSS,\cite{liang_goldstandard_2025}
DAPD,\cite{chan_dapd_2023}
TMD10.\cite{liang_goldstandard_2025}
For the ablation studies on the model architecture and dataset composition in Sec.~\ref{sec:local-ablation} and Sec.~\ref{sec:data-ablation}, we instead evaluate on the representative subset Diet GMTKN55,\cite{gould_diet_2018} containing 100 reactions, to reduce the computational cost of these ablations.
Additionally, we evaluate on the MSR-ACC/TAE25 holdout set,\cite{ehlert_accurate_2025} GDB9(W1-F12),\cite{karton_highly_2025} Mindless-24,\cite{gasevic_chemical_2025} DES370K,\cite{donchev_quantum_2021} Water2510,\cite{smith_revised_2016} IHB100x10\cite{rezac_noncovalent_2020}, D442x10,\cite{rezac_noncovalent_2022} SH250x10,\cite{kriz_noncovalent_2022} and R739x5\cite{kriz_noncovalent_2021} datasets.

\paragraph{Dipole moments}
To confirm the quality of \clippy SCF electron densities we compare its dipole moments against a well-established benchmark set.\cite{hait_how_2018} We use the recommended computational protocol\cite{hait_how_2018} with the aug-pc-4 basis. We exclude $\text{CH}_2$ because it was the only molecule for which we could not reproduce the reported dipole errors of the baseline functionals.

\paragraph{Geometry optimization benchmark sets}
Furthermore, we evaluate our functional on (semi-)experimental geometry datasets, including light main group bond lengths (LMGB35),\cite{grimme_consistent_2015} 
heavy main group bond lengths (HMGB11),\cite{grimme_consistent_2015} 
and the bond lengths and bond angles of 21 small molecules (CCse21).\cite{piccardo_semiexperimental_2015} We also evaluate our functional on the W4-11-geom dataset.\cite{karton_w411_2011,spackman_basis_2016}

\paragraph{Computational cost benchmark set}
The computational cost results were produced using a dataset of 38 molecules shown in Fig.~\ref{fig:si-timing-dataset}. The systems are collected from Grimme,\cite{grimme_exploration_2019} S30L,\cite{sure_comprehensive_2015} HS13L,\cite{gorges_reliable_2022} and NCI16L.\cite{gorges_efficient_2023}

\begin{figure}[t] %
    \centering
    \includegraphics[width=0.75\linewidth]{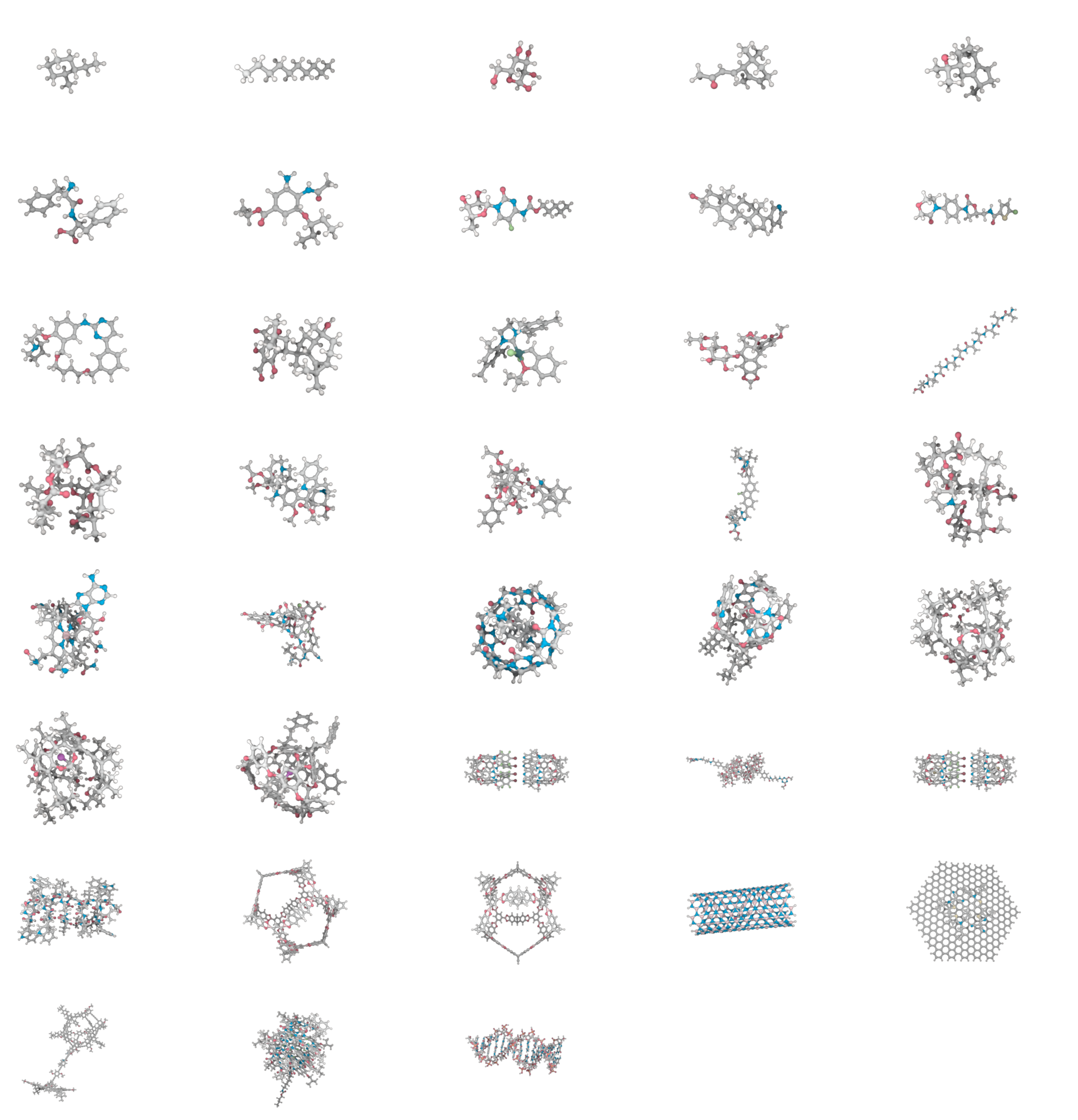}
    \caption{Systems used for evaluating the computational cost of the \clippy functional}
    \label{fig:si-timing-dataset}
\end{figure}

\subsection{Baselines and dispersion correction settings}
We compare our functional with several traditional baseline methods, including (meta-)GGA functionals, revPBE,\cite{zhang_comment_1998} B97M-V,\cite{mardirossian_mapping_2015} and r\textsuperscript{2}SCAN,\cite{furness_accurate_2020,ehlert_r2scand4_2021} and (range-separated) hybrid functionals, B3LYP,\cite{becke_densityfunctional_1993,stephens_initio_1994} M06-2X,\cite{zhao_m06_2008} \(\omega\)B97X-V,\cite{mardirossian_ob97xv_2014} and \(\omega\)B97M-V.\cite{mardirossian_ob97mv_2016} revPBE, r\textsuperscript{2}SCAN, B3LYP were evaluated with a D3(BJ) correction, while M06-2X was evaluated with a D3(0) correction. \clippy was evaluated using a D3(BJ) correction with B3LYP parameters.
B97M-V, \(\omega\)B97X-V, and  \(\omega\)B97M-V have a VV10\cite{vydrov_nonlocal_2010} correction, as indicated with ``-V''. 
D3 dispersion corrections\cite{grimme_consistent_2010,grimme_effect_2011} are included using the simple-dftd3 package.\cite{ehlert_simple_2024}

\subsection{Basis sets}
For reaction energy benchmarks, we use the def2-QZVP\cite{weigend_gaussian_2003, weigend_balanced_2005} basis set or the ma-def2-QZVP\cite{zheng_minimally_2011} basis set for reactions containing anions.
For the dipole moment benchmark, we use the aug-pc-4 basis.\cite{ambroise_probing_2019}
For geometry optimization, we use the smaller def2-TZVP basis set.\cite{weigend_balanced_2005}
We use density fitting for all evaluations in combination with the def2-universal-jkfit auxiliary basis set.\cite{weigend_hartree_2008}

\subsection{SCF retry protocol}
\label{sec:scf-retry}
As expected for an ML functional, Skala's SCF convergence is slightly more challenging than for traditional functionals, at least when using PySCF.\cite{sun_pyscf_2018} For this reason, we find it important to give all the specifics of our SCF settings and report statistics for \clippy compared to baseline functionals. 

We evaluate our functional using PySCF\cite{sun_pyscf_2018,sun_recent_2020,li_introducing_2025} with appropriate Coulomb fitting basis set.\cite{weigend_hartree_2008}
In the evaluation process, we start SCF computations from the Minimal Atomic Orbital basis (MINAO)\cite{vanlenthe_starting_2006,almlof_principles_1982} initial guess. We use a Treutler\cite{treutler_efficient_1995} grid with level 3.
We note that for dipole moment prediction, this deviates from the grid setting of Ref.~\citenum{hait_how_2018}.
For reaction energy benchmarks and dipole computations, the SCF is run with a convergence tolerance on total energy change equal to \qty{5e-6}{\hartree} and a convergence tolerance on the orbital gradient norm equal to \qty{e-3}{\hartree}. The SCF is run for a maximum number of 60 steps. If convergence criteria are not met within 60 steps, we have a retry logic in place that works as follows. Each of the following attempts is tried, in case of failure of the previous one:
\begin{enumerate}
    \item DIIS\cite{pulay_convergence_1980, pulay_improved_1982} with a window size of 8 is used with no damping or level shift;
    \item Damping is set to 0.5 with DIIS\cite{pulay_convergence_1980, pulay_improved_1982} start cycle set to 7;
    \item The level shift parameter is incrementally set to 0.1, 0.3, and 0.5 if there is an iteration of the previous attempt with HOMO-LUMO gap (both in the spin-up and spin-down channel) below \qty{0.1}{\hartree}. This means that if an attempt fails to converge, but it has HOMO-LUMO gap (both in the spin-up and spin-down channel) bigger than \qty{0.1}{\hartree} at every step, no further attempts of increasing the level shift are made;
    \item As the last attempt, we enable the second-order solver from PySCF.\cite{sun_general_2017, seeger_selfconsistent_1977}
\end{enumerate}
If none of the attempts mentioned above manages to reach convergence within 60 steps or converges to a non-Aufbau higher energy state, it is marked as unconverged and manually converged either using a gradient descent algorithm, detailed in Sec.~\ref{sec:grad-descent}, or by manually tuning the initial guess after inspecting the convergence behavior and the Kohn-Sham orbitals.
Results on how often each attempt described in the SCF retry protocol was utilized when evaluating on W4-17, GMTKN55 and dipoles are reported in Table~\ref{tab:conv-analysis-summary}.

\begin{table}[t]
    \centering
    \caption{Analysis of the retry protocol for SCF convergence described in Sec.~\ref{sec:scf-retry} for our functional \clippy and baseline functionals. The percentages are over the evaluations on W4-17 (211 structures), GMTKN55 (2372 structures) and dipoles (151 structures), for a total of 2734 structures. In the last row we show the convergence statistics for \clippy in combination with the def2-TZVP basis set.}
    \tablefontsize 
\begin{tabular}{llllllll}
\toprule
Functional & First SCF & Damp+DIIS & Lev. shift 0.1 & Lev. shift 0.3 & Lev. shift 0.5 & Newton & Manual \\
\midrule
revPBE & 99.93\% & 0.04\% & 0.00\% & 0.00\% & 0.00\% & 0.00\% & 0.03\% \\
r\textsuperscript2SCAN & 99.93\% & 0.00\% & 0.04\% & 0.00\% & 0.00\% & 0.00\% & 0.03\% \\
B97M-V & 99.89\% & 0.07\% & 0.00\% & 0.00\% & 0.00\% & 0.00\% & 0.04\% \\
B3LYP & 100.00\% & 0.00\% & 0.00\% & 0.00\% & 0.00\% & 0.00\% & 0.00\% \\
M06-2X & 99.89\% & 0.00\% & 0.00\% & 0.00\% & 0.00\% & 0.00\% & 0.11\% \\
\(\omega\)B97X-V & 100.00\% & 0.00\% & 0.00\% & 0.00\% & 0.00\% & 0.00\% & 0.00\% \\
\(\omega\)B97M-V & 99.89\% & 0.00\% & 0.00\% & 0.00\% & 0.00\% & 0.00\% & 0.11\% \\
\clippyOnePointOne & 98.87\% & 0.18\% & 0.33\% & 0.22\% & 0.00\% & 0.22\% & 0.18\% \\
\midrule
\clippyOnePointOne (TZVP) & 99.38\% & 0.15\% & 0.29\% & 0.00\% & 0.00\% & 0.07\% & 0.11\% \\
\bottomrule
\end{tabular}

    \label{tab:conv-analysis-summary}
\end{table}

For the data and model ablation in \cref{fig:model-data-ablation}, we followed the same convergence protocol. For one model trained on data group A, SCF convergence was not reached for $^2\text{H}_6 \text{N}_2^+$ in Diet GMTKN55; all other models converged without issues. For this model only, we evaluated the functional on the B3LYP density for that compound and the other compounds in the corresponding reaction $^2\text{H}_6\text{N}_2^+ \rightarrow \text{H}_3\text{N} + {^2\text{H}_3\text{N}^+}$, and used the resulting energy in the WTMAD-2 metric. The impact is minor: this yields a WTMAD-2 of 11.10~kcal/mol, compared to 11.19~kcal/mol when excluding the reaction entirely (99/100 reactions).

For geometry optimization, the same logic described above is used for the initial geometry, but with stricter convergence criteria (see Sec.~\ref{sec:suppgeom}). 
The settings are then kept fixed in subsequent geometry optimization steps.

We also report the average number of iterations required for convergence in the first SCF attempt in Table~\ref{tab:iter-stats}.

\begin{table}[t]
    \centering
    \caption{Average number of SCF iterations required for convergence in the first SCF attempt. Only systems that were converged in the first attempt by all functionals are considered (208 structures for W4-17, 2343 structures for GMTKN55, and 146 structures for dipoles).}
    \tablefontsize 
\begin{tabular}{lrrrrrrrrr}
\toprule
 & revPBE & r\textsuperscript2SCAN & B97M-V & B3LYP & M06-2X & \(\omega\)B97X-V & \(\omega\)B97M-V & \clippyOnePointOne & \clippyOnePointOne (TZVP) \\
\midrule
W4-17 & 7.8 & 7.7 & 7.7 & 7.5 & 7.6 & 7.6 & 7.6 & 8.2 & 7.8 \\
GMTKN55 & 9.1 & 8.7 & 8.8 & 8.3 & 8.0 & 8.0 & 8.0 & 9.3 & 8.9 \\
Dipoles & 7.9 & 7.8 & 7.9 & 7.5 & 7.8 & 7.7 & 7.7 & 9.4 & 9.0 \\
Total & 8.9 & 8.6 & 8.7 & 8.2 & 8.0 & 8.0 & 8.0 & 9.2 & 8.9 \\
\bottomrule
\end{tabular}

    \label{tab:iter-stats}
\end{table}

\subsection{SCF with orbital gradient descent fallback}
\label{sec:grad-descent}
As the SCF retry protocol we describe in Sec.~\ref{sec:scf-retry} does not converge 100\% of the time, we introduce a more robust routine.
We start out with a MINAO initial-guess density matrix\cite{vanlenthe_starting_2006,almlof_principles_1982} and diagonalize the Fock matrix to obtain a permissible idempotent density matrix.
In the first cycle, we compute the total energy.
From the second cycle and beyond, we first check if diagonalization with DIIS\cite{pulay_convergence_1980, pulay_improved_1982} leads to a decrease in energy; if not, we resort to gradient descent with line search in the space of molecular orbital coefficients $C$.

In what follows, we discuss the spin-agnostic case to simplify the notation.
The matrix $C$ can be decomposed into the occupied and virtual parts $\begin{bmatrix} C_\text{occ}, C_\text{vir} \end{bmatrix}$, where only the occupied part contributes to the density matrix $\oneRDM_{ij} = \sum_{k=1}^{N}  C_{ik} C_{jk}$, for $1\leq i,j\leq B$, which means $\oneRDM = C_\text{occ}{C_\text{occ}}^\top$.
From the density matrix, we generate the features needed to compute the exchange-correlation energy on the grid and compute the other energy components including the kinetic energy, Hartree electrostatics, and external potential.
We can write the total energy as a function of $\oneRDM$, $E_\text{tot}(\oneRDM)$.
The Fock matrix is defined as the gradient with respect to the density matrix $F\isDefinedAs\frac{\ud E_\text{tot}}{\ud \oneRDM}$.
When computing the gradient with respect to the orbital matrix $C$, we need to take into account the fact that $\oneRDM$ is invariant to the rotation of the columns of $C_\text{occ}$ and that all columns of $C$ need to stay orthogonal.
This can be done by looking at the virtual-occupied blocks of the Fock matrix
\begin{align}
    \Delta_C^\text{hor} = \begin{bmatrix}
        (I-\oneRDM)F C_\text{occ}, -\oneRDM FC_\text{vir}
    \end{bmatrix} .
\end{align}
This is also known as the horizontal lift of the Riemannian gradient in the Grassmannian manifold (space of density matrices) to the tangent space of the Stiefel manifold (space of orbital coefficients).\cite{bendokat_grassmann_2024}

We use the negative gradient as the search direction, and with step size $\alpha$ we use the exponential map to produce the new orbital matrix $C'\leftarrow C \exp\left(-\alpha C^\top \Delta_C^\text{hor}\right)$.
The form of the update can be equivalently derived by taking the gradient of $C\exp(A)$ with respect to an antisymmetric matrix $A$.\cite{helgaker_molecular_2000}
We initialize the step size to be $1.0$ for each calculation, and adaptively set the step size of the subsequent iterations to be $1.1$ times the previous admissible step size.
Within each iteration, a step size is considered admissible if it satisfies the Armijo rule
\begin{align}
    E_\text{tot}(C') - E_\text{tot}(C) \leq - c_1 \alpha \norm{\Delta_C^\text{hor}}^2
\end{align}
where we set $c_1=0.1$.
We do not check the curvature condition commonly used in line search to avoid unnecessary backward computation.
If the Armijo rule is violated, we form a quadratic approximation to $E_\text{tot}$ as a function of $\alpha$, and use the minimizer as the new trial.\cite{nocedal_numerical_2006}
To prevent the line search strategy from being too aggressive or too conservative, we clamp the search space to be within $[0.1\alpha, 0.9\alpha]$.

Empirically, we find that robustness of gradient descent with line search hinges on good alignment between numerical gradient and energy estimation.
Therefore, all computations with gradient descent have been performed in double precision.
For some computation attempts using gradient descent, we first converge with a downgraded basis, and then project onto the bigger basis set.
For example, we first use def2-TZVP and then def2-QZVP.

\subsection{Geometry optimization: implementation}
\label{sec:suppgeom}
For geometry optimization, we need the gradient of the total energy with respect to the nuclear coordinates.
For standard approximate DFT functionals, this entails the evaluation of the following contributions to the nuclear gradient:
\begin{enumerate}
\item The explicit dependence on the nuclear positions in the nuclear potential (Hellmann--Feynman forces)
\item The dependence of the atomic centered basis functions (Pulay forces)\cite{pulay_initio_1969}
\item The dependence of the integration grid on the atomic positions.
\end{enumerate}
For sufficiently large grid the contribution from the grid integration tends to be small, so it can be neglected for traditional functionals.
However, since the \clippy architecture does not only use the grid for integration of the final energy density, but also for down- and up-sampling, we included it in computing the total forces. Additionally, we need to take into account
\begin{enumerate}[resume]
\item The explicit dependence of the $E_\xc$ functional on the nuclear positions due to the coarse points.
\end{enumerate}
For the initial geometry optimization step, we use the same retry protocol as described in Sec.~\ref{sec:scf-retry}, and during the course of geometry optimization we stick to the SCF method which successfully converged the electronic structure of the initial geometry. 
We use the \textsc{geomeTRIC} package\cite{wang_geometry_2016} to optimize the geometry. The convergence tolerance on the energy change was tightened to \qty{5e-7}{\hartree} and on the gradient norm to \qty{e-5}{\hartree\per\bohr}, which was important to obtain accurate enough nuclear gradients for \textsc{geomeTRIC} to always converge.

\subsection{Computational cost: GPU and CPU implementation}
\label{supp:cost-settings}

The functional was integrated with the GauXC library\cite{petrone_efficient_2018,williams-young_efficient_2020,williams-young_distributed_2023} to evaluate the \clippy functional from the TorchScript exported checkpoint.
In the initial integration, the features computed on the grid points are collected in GauXC, passed to the model for computing the exchange-correlation energy and potential and afterwards redistributed into the batches handled by GauXC.
The computation of the exchange-correlation potential is done by backpropagation using the TorchScript framework.

The GauXC library supports evaluation both on GPU and CPU.
For the computational cost analysis a standalone driver of GauXC is used starting from an input density matrix and evaluating the exchange-correlation energy and potential.
Calculations for GPU timings were performed on Azure NC24ADS V4 A100 virtual machines.
CPU timings were performed on Azure E32ADS V5 virtual machines.

For these calculations a def2-TZVP basis set was used for all functionals and additionally semi-numerical exchange integrals for all hybrid functionals, GM3 grid level for integrating the exchange-correlation energy, Treutler grid pruning and Mura–Knowles radial integration scheme were used.

\section{Additional results}
\label{sec:suppadditional}

\subsection{Spin-symmetry broken solutions for multi-reference systems}
\label{sec:spin-broken}

For systems with a high multi-reference character, Kohn-Sham DFT is in practice prone to breaking the spin symmetry.\cite{perdew_symmetry_2023} These spin broken states are lower in energy and almost always give an answer closer to the ground truth. Therefore, to assess whether the same behaviour holds true for \clippyOnePointOne on the multi-reference subset of W4-17,\cite{karton_w417_2017} we allow for spin-symmetry breaking.

\begin{table}[htbp]
    \centering
    \caption{Total energies and total spin \(\langle S^2\rangle\) for the clustering of spin-symmetry broken stationary SCF solutions for the various functionals. All quantities are in atomic units.}
    \label{tab:w4-17:mr+break_down}
    \vspace{-0.8em}
    
    \begin{minipage}[t]{0.48\textwidth}
        \subcaption{revPBE}
        \vspace{-0.5em}
        \centering
        
\tablefontsize
\begin{tabularx}{0.95\textwidth}{l p{0.45\textwidth} X}
    \toprule
    Molecule & Total energy & \(\langle S^2 \rangle\)\\
    \midrule
    \ce{^{2}ClOO} & 
    [-610.470] & 
    [0.77] \\
    \ce{^{2}ClO3} & 
    [-685.617] & 
    [0.75] \\
    \ce{O3} &
    [-225.496] & 
    [0.00] \\
    \ce{ClF3} &
    [-759.538] & 
    [0.00] \\
    \ce{^{3}B2} & 
    [-49.431,-49.426] & 
    [2.54,2.01] \\
    \ce{C2} &
    [-75.932] & 
    [0.88] \\
    \ce{F2O} &
    [-274.733] & 
    [0.00] \\
    \ce{ClF5} &
    [-959.129] & 
    [-0.00] \\
    \ce{S3} &
    [-1194.448] & 
    [0.00] \\
    \ce{^{2}FO2} & 
    [-250.149] & 
    [0.76] \\
    \ce{FOOF} &
    [-349.930] & 
    [0.00] \\
    \ce{^{2}OClO} & 
    [-610.467] & 
    [0.75] \\
    \ce{Cl2O} &
    [-995.412] & 
    [0.00] \\
    \ce{^{2}OF} & 
    [-174.920] & 
    [0.75] \\
    \ce{S4-C$_{2v}$} &
    [-1592.618] & 
    [0.00] \\
    \ce{^{1}BN} & 
    [-79.412] & 
    [0.00] \\
    \ce{ClOOCl} &
    [-1070.580] & 
    [0.00] \\
    \bottomrule
\end{tabularx}

    \end{minipage}\hfill
    \begin{minipage}[t]{0.48\textwidth}
        \subcaption{r\textsuperscript2SCAN}
        \vspace{-0.5em}
        \centering
        
\tablefontsize
\begin{tabularx}{0.95\textwidth}{l p{0.45\textwidth} X}
    \toprule
    Molecule & Total energy & \(\langle S^2 \rangle\)\\
    \midrule
    \ce{^{2}ClOO} & 
    [-610.477] & 
    [0.88] \\
    \ce{^{2}ClO3} & 
    [-685.618] & 
    [0.76] \\
    \ce{O3} &
    [-225.430,-225.429] & 
    [0.21,0.00] \\
    \ce{ClF3} &
    [-759.554] & 
    [0.00] \\
    \ce{^{3}B2} & 
    [-49.405,-49.389] & 
    [2.72,2.01] \\
    \ce{C2} &
    [-75.896,-75.865] & 
    [1.01,0.00] \\
    \ce{F2O} &
    [-274.687] & 
    [0.00] \\
    \ce{ClF5} &
    [-959.135] & 
    [0.00] \\
    \ce{S3} &
    [-1194.567] & 
    [0.00] \\
    \ce{^{2}FO2} & 
    [-250.093] & 
    [0.79] \\
    \ce{FOOF} &
    [-349.860] & 
    [0.00] \\
    \ce{^{2}OClO} & 
    [-610.479] & 
    [0.76] \\
    \ce{Cl2O} &
    [-995.489] & 
    [0.00] \\
    \ce{^{2}OF} & 
    [-174.888] & 
    [0.75] \\
    \ce{S4-C$_{2v}$} &
    [-1592.776] & 
    [0.00] \\
    \ce{^{1}BN} & 
    [-79.367] & 
    [0.00] \\
    \ce{ClOOCl} &
    [-1070.638] & 
    [-0.00] \\
    \bottomrule
\end{tabularx}

    \end{minipage}
    
    \vspace{-0.4em}
    
    \begin{minipage}[t]{0.48\textwidth}
        \subcaption{B97M-V}
        \vspace{-0.5em}
        \centering
        
\tablefontsize
\begin{tabularx}{0.95\textwidth}{l p{0.45\textwidth} X}
    \toprule
    Molecule & Total energy & \(\langle S^2 \rangle\)\\
    \midrule
    \ce{^{2}ClOO} & 
    [-610.401] & 
    [0.85] \\
    \ce{^{2}ClO3} & 
    [-685.548] & 
    [0.75] \\
    \ce{O3} &
    [-225.458] & 
    [0.00] \\
    \ce{ClF3} &
    [-759.462] & 
    [0.00] \\
    \ce{^{3}B2} & 
    [-49.455,-49.450] & 
    [2.48,2.01] \\
    \ce{C2} &
    [-75.941,-75.918] & 
    [0.96,0.00] \\
    \ce{F2O} &
    [-274.703] & 
    [0.00] \\
    \ce{ClF5} &
    [-959.038] & 
    [0.00] \\
    \ce{S3} &
    [-1194.334] & 
    [0.00] \\
    \ce{^{2}FO2} & 
    [-250.110] & 
    [0.79] \\
    \ce{FOOF} &
    [-349.882] & 
    [0.00] \\
    \ce{^{2}OClO} & 
    [-610.402] & 
    [0.75] \\
    \ce{Cl2O} &
    [-995.318] & 
    [0.00] \\
    \ce{^{2}OF} & 
    [-174.901] & 
    [0.75] \\
    \ce{S4-C$_{2v}$} &
    [-1592.462] & 
    [0.00] \\
    \ce{^{1}BN} & 
    [-79.418] & 
    [0.00] \\
    \ce{ClOOCl} &
    [-1070.476] & 
    [0.00] \\
    \bottomrule
\end{tabularx}

    \end{minipage}\hfill
    \begin{minipage}[t]{0.48\textwidth}
        \subcaption{B3LYP}
        \vspace{-0.5em}
        \centering
        
\tablefontsize
\begin{tabularx}{0.95\textwidth}{l p{0.45\textwidth} X}
    \toprule
    Molecule & Total energy & \(\langle S^2 \rangle\)\\
    \midrule
    \ce{^{2}ClOO} & 
    [-610.462] & 
    [0.98] \\
    \ce{^{2}ClO3} & 
    [-685.591] & 
    [0.76] \\
    \ce{O3} &
    [-225.438,-225.435] & 
    [0.34,0.00] \\
    \ce{ClF3} &
    [-759.548] & 
    [0.00] \\
    \ce{^{3}B2} & 
    [-49.393] & 
    [2.00] \\
    \ce{C2} &
    [-75.890,-75.871] & 
    [0.78,0.00] \\
    \ce{F2O} &
    [-274.703] & 
    [0.00] \\
    \ce{ClF5} &
    [-959.120] & 
    [0.00] \\
    \ce{S3} &
    [-1194.515] & 
    [0.00] \\
    \ce{^{2}FO2} & 
    [-250.100] & 
    [0.83] \\
    \ce{FOOF} &
    [-349.869] & 
    [0.00] \\
    \ce{^{2}OClO} & 
    [-610.459] & 
    [0.76] \\
    \ce{Cl2O} &
    [-995.461] & 
    [0.00] \\
    \ce{^{2}OF} & 
    [-174.900] & 
    [0.75] \\
    \ce{S4-C$_{2v}$} &
    [-1592.702] & 
    [0.00] \\
    \ce{^{1}BN} & 
    [-79.376] & 
    [0.00] \\
    \ce{ClOOCl} &
    [-1070.609] & 
    [0.00] \\
    \bottomrule
\end{tabularx}

    \end{minipage}
    
    \vspace{-0.4em}
    
    \begin{minipage}[t]{0.48\textwidth}
        \subcaption{M06-2X}
        \vspace{-0.5em}
        \centering
        
\tablefontsize
\begin{tabularx}{0.95\textwidth}{l p{0.45\textwidth} X}
    \toprule
    Molecule & Total energy & \(\langle S^2 \rangle\)\\
    \midrule
    \ce{^{2}ClOO} & 
    [-610.482] & 
    [1.19] \\
    \ce{^{2}ClO3} & 
    [-685.616,-685.546] & 
    [0.76,0.77] \\
    \ce{O3} &
    [-225.428,-225.420] & 
    [0.50,0.00] \\
    \ce{ClF3} &
    [-759.568] & 
    [0.00] \\
    \ce{^{3}B2} & 
    [-49.401] & 
    [2.00] \\
    \ce{C2} &
    [-75.879] & 
    [0.00] \\
    \ce{F2O} &
    [-274.694,-274.538] & 
    [0.00,1.03] \\
    \ce{ClF5} &
    [-959.134] & 
    [0.00] \\
    \ce{S3} &
    [-1194.583] & 
    [0.00] \\
    \ce{^{2}FO2} & 
    [-250.080] & 
    [1.00] \\
    \ce{FOOF} &
    [-349.844] & 
    [0.00] \\
    \ce{^{2}OClO} & 
    [-610.485] & 
    [0.76] \\
    \ce{Cl2O} &
    [-995.528] & 
    [0.00] \\
    \ce{^{2}OF} & 
    [-174.893] & 
    [0.75] \\
    \ce{S4-C$_{2v}$} &
    [-1592.786] & 
    [0.00] \\
    \ce{^{1}BN} & 
    [-79.380] & 
    [0.00] \\
    \ce{ClOOCl} &
    [-1070.674] & 
    [0.00] \\
    \bottomrule
\end{tabularx}

    \end{minipage}\hfill
    \begin{minipage}[t]{0.48\textwidth}
        \subcaption{\(\omega\)B97X-V}
        \vspace{-0.5em}
        \centering
        
\tablefontsize
\begin{tabularx}{0.95\textwidth}{l p{0.45\textwidth} X}
    \toprule
    Molecule & Total energy & \(\langle S^2 \rangle\)\\
    \midrule
    \ce{^{2}ClOO} & 
    [-610.438,-610.429,-610.429] & 
    [1.21,0.98,0.82] \\
    \ce{^{2}ClO3} & 
    [-685.579,-685.511] & 
    [0.76,0.77] \\
    \ce{O3} &
    [-225.446,-225.440] & 
    [0.47,0.00] \\
    \ce{ClF3} &
    [-759.518] & 
    [0.00] \\
    \ce{^{3}B2} & 
    [-49.324] & 
    [2.03] \\
    \ce{C2} &
    [-75.872] & 
    [0.00] \\
    \ce{F2O} &
    [-274.703] & 
    [0.00] \\
    \ce{ClF5} &
    [-959.090] & 
    [0.00] \\
    \ce{S3} &
    [-1194.437] & 
    [0.00] \\
    \ce{^{2}FO2} & 
    [-250.098,-250.024] & 
    [0.93,0.77] \\
    \ce{FOOF} &
    [-349.865] & 
    [0.00] \\
    \ce{^{2}OClO} & 
    [-610.441] & 
    [0.76] \\
    \ce{Cl2O} &
    [-995.419] & 
    [0.00] \\
    \ce{^{2}OF} & 
    [-174.901] & 
    [0.75] \\
    \ce{S4-C$_{2v}$} &
    [-1592.589] & 
    [0.00] \\
    \ce{^{1}BN} & 
    [-79.376] & 
    [0.00] \\
    \ce{ClOOCl} &
    [-1070.569] & 
    [0.00] \\
    \bottomrule
\end{tabularx}

    \end{minipage}
    
    \vspace{-0.4em}
    
    \begin{minipage}[t]{0.48\textwidth}
        \subcaption{\(\omega\)B97M-V}
        \vspace{-0.5em}
        \centering
        
\tablefontsize
\begin{tabularx}{0.95\textwidth}{l p{0.45\textwidth} X}
    \toprule
    Molecule & Total energy & \(\langle S^2 \rangle\)\\
    \midrule
    \ce{^{2}ClOO} & 
    [-610.497,-610.436] & 
    [1.13,0.77] \\
    \ce{^{2}ClO3} & 
    [-685.647,-685.580] & 
    [0.76,0.77] \\
    \ce{O3} &
    [-225.467,-225.462] & 
    [0.42,0.00] \\
    \ce{ClF3} &
    [-759.599] & 
    [0.00] \\
    \ce{^{3}B2} & 
    [-49.398,-49.392,-49.322] & 
    [2.59,2.00,2.02] \\
    \ce{C2} &
    [-75.881] & 
    [0.00] \\
    \ce{F2O} &
    [-274.732,-274.561] & 
    [0.00,1.02] \\
    \ce{ClF5} &
    [-959.194] & 
    [0.00] \\
    \ce{S3} &
    [-1194.570] & 
    [0.00] \\
    \ce{^{2}FO2} & 
    [-250.121,-250.049] & 
    [0.90,0.77] \\
    \ce{FOOF} &
    [-349.901] & 
    [0.00] \\
    \ce{^{2}OClO} & 
    [-610.502] & 
    [0.76] \\
    \ce{Cl2O} &
    [-995.521] & 
    [0.00] \\
    \ce{^{2}OF} & 
    [-174.918] & 
    [0.75] \\
    \ce{S4-C$_{2v}$} &
    [-1592.775,-1592.767] & 
    [0.68,0.00] \\
    \ce{^{1}BN} & 
    [-79.386] & 
    [0.00] \\
    \ce{ClOOCl} &
    [-1070.679] & 
    [0.00] \\
    \bottomrule
\end{tabularx}

    \end{minipage}\hfill
    \begin{minipage}[t]{0.48\textwidth}
        \subcaption{\clippyOnePointOne}
        \vspace{-0.5em}
        \centering
        
\tablefontsize
\begin{tabularx}{0.95\textwidth}{l p{0.45\textwidth} X}
    \toprule
    Molecule & Total energy & \(\langle S^2 \rangle\)\\
    \midrule
    \ce{^{2}ClOO} & 
    [-610.486] & 
    [1.01] \\
    \ce{^{2}ClO3} & 
    [-685.628] & 
    [0.75] \\
    \ce{O3} &
    [-225.423] & 
    [0.00] \\
    \ce{ClF3} &
    [-759.566] & 
    [0.00] \\
    \ce{^{3}B2} & 
    [-49.411,-49.404,-49.393] & 
    [2.64,2.01,2.74] \\
    \ce{C2} &
    [-75.910,-75.909,-75.904] & 
    [0.89,1.01,0.68] \\
    \ce{F2O} &
    [-274.683] & 
    [0.00] \\
    \ce{ClF5} &
    [-959.132] & 
    [0.00] \\
    \ce{S3} &
    [-1194.624] & 
    [0.00] \\
    \ce{^{2}FO2} & 
    [-250.075] & 
    [0.78] \\
    \ce{FOOF} &
    [-349.841] & 
    [0.00] \\
    \ce{^{2}OClO} & 
    [-610.490] & 
    [0.76] \\
    \ce{Cl2O} &
    [-995.543] & 
    [0.00] \\
    \ce{^{2}OF} & 
    [-174.883] & 
    [0.76] \\
    \ce{S4-C$_{2v}$} &
    [-1592.849] & 
    [0.00] \\
    \ce{^{1}BN} & 
    [-79.404] & 
    [1.02] \\
    \ce{ClOOCl} &
    [-1070.686] & 
    [0.00] \\
    \bottomrule
\end{tabularx}

    \end{minipage}

\end{table}

Since the initial guess in PySCF has proper spin-symmetry and the gradient with respect to spin-symmetry breaking directions is zero, we kick start the search for symmetry-breaking solutions by adding random noise scaled between \([-0.1, 0.1]\) to the elements of the spin-resolved initial one-electron reduced-density-matrix (1RDM) guess. This randomly perturbed initial guess is subsequently purified by first symmetrizing it and subsequently clipping its eigenvalues to be bounded between 0 and 1. For the baseline functionals, we did not normalize the trace to recover the correct number of electrons, but for \clippy a proper projection to the correct number of electrons was applied.\cite{cances_projected_2008} Since multiple local minima can exist for spin-symmetry breaking solutions, for each molecule of the multi-reference subset of W4-17, we ran 50 calculations with different random seeds to increase the probability of finding the lowest energy solution. We have run these calculations with tighter convergence criteria: a tolerance on the energy difference equal to \qty{e-8}{\hartree} and a convergence tolerance on gradient norm equal to \qty{e-4}{\hartree}. We subsequently clustered the converged results with an energy criterion of \qty{e-3}{\hartree} and a total spin \(\langle S^2\rangle\) criterion of \qty{e-1}{\square\planckbar}. The result for the clustering including the baseline functionals can be found in Table~\ref{tab:w4-17:mr+break_down}.

\begin{figure}[t]
  \centering
  \includegraphics[width=0.90\textwidth]{figures/tae-noise-effect-final-s0.pdf}
  \caption{The effect of using randomly perturbed initial guesses which allow for spin-symmetry breaking for \clippy on the 17 multi-referential structures of W4-17.}
  \label{fig:spin-breaking-effect-on-tae-skala}
\end{figure}

The impact of allowing \clippyOnePointOne to break spin symmetry on the TAE error is shown in Fig.~\ref{fig:spin-breaking-effect-on-tae-skala}. This mainly has an effect on the \ce{^1BN}, \ce{B2} and \ce{C2} molecules, which show the largest energy lowering upon symmetry breaking. For those three molecules we see a clear improvement in the TAE when we allow for symmetry-breaking. The remaining small deviations are not due to symmetry breaking, but the effect of the tighter convergence tolerance in conjunction with the multi SCF attempts increasing the probability of finding the global minimum.

\subsection{GMTKN55 benchmark}

Table~\ref{tab:gmtkn55-examples} highlights reactions from GMTKN55 for which the absolute difference in prediction error between \clippy and \(\omega\)B97M-V is the largest.

\begin{table}[tbp]
    \centering
    \caption{Reactions in GMTKN55 with the largest difference in energy prediction accuracy between \clippy and \(\omega\)B97M-V.}
    \label{tab:gmtkn55-examples}
    
\tablefontsize
\begin{tabularx}{\textwidth}{X r r@{\,}rr}
    \toprule
    
    & \multicolumn{1}{r}{Ref. energy}
    & \multicolumn{3}{l}{Energy error}
    \\
    & \multicolumn{1}{r}{\textcolor{gray}{[kcal/mol]}}
    & \multicolumn{1}{l}{\clippy \textcolor{gray}{($a$)}}
    & \multicolumn{1}{r}{\(\omega\)B97M-V \textcolor{gray}{($b$)}}
    & \multicolumn{1}{r}{\textcolor{gray}{$|b| - |a|$}}
    \\
    \cmidrule(lr){2-2}
    \cmidrule(lr){3-5}
    \textbf{\clippy\ performs better than \(\omega\)B97M-V} \\[1ex]
    $19\text{H}_{2}+2^{2}\text{Al}\text{B}_{3}\text{H}_{6}\text{Li}\text{Mg}_{2}\text{O}\text{Si}_{2} \rightarrow ^{3}\text{O}_{2}+6\text{B}\text{H}_{3}+2\text{H}\text{Li}+4\text{H}_{2}\text{Mg}+4\text{H}_{4}\text{Si}+2\text{Al}\text{H}_{3}$
    & 326.76
    & 4.30
    & 56.33
    & -52.04 \\
    \quad \textcolor{gray}{mb16-43} \\[1ex]
    $7\text{H}_{2}+2\text{Al}_{2}\text{B}\text{Cl}\text{H}_{7}\text{Mg}_{2}\text{Na}\text{O}_{2} \rightarrow 2^{3}\text{O}_{2}+2\text{B}\text{H}_{3}+2\text{H}\text{Na}+\text{Cl}_{2}+4\text{H}_{2}\text{Mg}+4\text{Al}\text{H}_{3}$
    & 806.23
    & 0.42
    & 45.23
    & -44.81 \\
    \quad \textcolor{gray}{mb16-43} \\[1ex]
    $13\text{H}_{2}+2\text{Al}\text{C}_{3}\text{F}\text{H}_{6}\text{Li}\text{Mg}\text{Na}\text{O}\text{S} \rightarrow \text{F}_{2}+^{3}\text{O}_{2}+^{3}\text{S}_{2}+2\text{H}\text{Na}+6\text{C}\text{H}_{4}+2\text{H}\text{Li}+2\text{H}_{2}\text{Mg}+2\text{Al}\text{H}_{3}$
    & 518.66
    & -2.14
    & 32.32
    & -30.18 \\
    \quad \textcolor{gray}{mb16-43} \\[1ex]
    $6\text{H}_{2}+2^{2}\text{B}_{2}\text{F}\text{H}_{7}\text{Li}\text{Mg}\text{O}\text{S}_{2}\text{Si} \rightarrow \text{F}_{2}+^{3}\text{O}_{2}+2^{3}\text{S}_{2}+4\text{B}\text{H}_{3}+2\text{H}\text{Li}+2\text{H}_{2}\text{Mg}+2\text{H}_{4}\text{Si}$
    & 737.79
    & -0.16
    & 27.64
    & -27.48 \\
    \quad \textcolor{gray}{mb16-43} \\[1ex]
    $19\text{H}_{2}+2\text{Al}\text{B}_{4}\text{C}\text{F}\text{H}_{4}\text{O}_{2}\text{S}_{2}\text{Si} \rightarrow \text{F}_{2}+2^{3}\text{O}_{2}+2^{3}\text{S}_{2}+8\text{B}\text{H}_{3}+2\text{C}\text{H}_{4}+2\text{H}_{4}\text{Si}+2\text{Al}\text{H}_{3}$
    & 629.78
    & 2.30
    & 28.00
    & -25.70 \\
    \quad \textcolor{gray}{mb16-43} \\[1ex]
    $12\text{H}_{2}+2^{2}\text{Al}\text{B}\text{C}\text{F}_{5}\text{H}_{4}\text{Li}_{2}\text{S}\text{Si} \rightarrow 5\text{F}_{2}+^{3}\text{S}_{2}+2\text{B}\text{H}_{3}+2\text{C}\text{H}_{4}+4\text{H}\text{Li}+2\text{H}_{4}\text{Si}+2\text{Al}\text{H}_{3}$
    & 1290.74
    & 2.08
    & 26.09
    & -24.01 \\
    \quad \textcolor{gray}{mb16-43} \\[1ex]
    $10\text{H}_{2}+2^{2}\text{Al}\text{B}_{3}\text{C}\text{H}_{8}\text{Mg}\text{N}\text{O} \rightarrow ^{3}\text{O}_{2}+\text{N}_{2}+6\text{B}\text{H}_{3}+2\text{C}\text{H}_{4}+2\text{H}_{2}\text{Mg}+2\text{Al}\text{H}_{3}$
    & 368.49
    & 0.06
    & 22.00
    & -21.94 \\
    \quad \textcolor{gray}{mb16-43} \\[1ex]
    $\text{C}_{2} \rightarrow 2^{3}\text{C}$
    & 147.02
    & 9.60
    & 31.39
    & -21.79 \\
    \quad \textcolor{gray}{w4-11} \\[1ex]
    $17\text{H}_{2}+2\text{Al}\text{B}_{3}\text{C}\text{F}\text{H}_{7}\text{O}\text{Si}_{2} \rightarrow \text{F}_{2}+^{3}\text{O}_{2}+6\text{B}\text{H}_{3}+2\text{C}\text{H}_{4}+4\text{H}_{4}\text{Si}+2\text{Al}\text{H}_{3}$
    & 356.85
    & 4.29
    & 24.09
    & -19.79 \\
    \quad \textcolor{gray}{mb16-43} \\[1ex]
    $20\text{H}_{2}+2\text{B}_{4}\text{C}\text{Cl}\text{F}\text{H}_{6}\text{Mg}\text{Si}_{2} \rightarrow \text{F}_{2}+8\text{B}\text{H}_{3}+\text{Cl}_{2}+2\text{C}\text{H}_{4}+2\text{H}_{2}\text{Mg}+4\text{H}_{4}\text{Si}$
    & 215.39
    & 7.40
    & 25.51
    & -18.11 \\
    \quad \textcolor{gray}{mb16-43} \\[1ex]
    $9\text{H}_{2}+2\text{Al}_{2}\text{B}_{2}\text{Cl}\text{H}_{8}\text{Li}\text{O}\text{Si} \rightarrow ^{3}\text{O}_{2}+4\text{B}\text{H}_{3}+\text{Cl}_{2}+2\text{H}\text{Li}+2\text{H}_{4}\text{Si}+4\text{Al}\text{H}_{3}$
    & 384.89
    & 3.58
    & 20.21
    & -16.63 \\
    \quad \textcolor{gray}{mb16-43} \\[1ex]
    $\ldots$ \\[1ex]
    \textbf{\(\omega\)B97M-V performs better than \clippy} \\[1ex]
    $\text{C}\text{Cl}_{2}\text{H}_{3}^{-} \rightarrow \text{C}\text{Cl}_{2}\text{H}_{3}^{-}$
    & -13.50
    & -3.64
    & -0.07
    & 3.57 \\
    \quad \textcolor{gray}{bh76} \\[1ex]
    $^{2}\text{H}_{4}\text{O}_{2}^{+} \rightarrow ^{2}\text{H}_{2}\text{O}^{+}+\text{H}_{2}\text{O}$
    & 9.30
    & -15.32
    & -11.72
    & 3.60 \\
    \quad \textcolor{gray}{sie4x4} \\[1ex]
    $9\text{H}_{2}+2^{2}\text{B}_{2}\text{C}\text{F}_{3}\text{H}_{6}\text{Li}\text{N}\text{S}\text{Si} \rightarrow 3\text{F}_{2}+^{3}\text{S}_{2}+\text{N}_{2}+4\text{B}\text{H}_{3}+2\text{C}\text{H}_{4}+2\text{H}\text{Li}+2\text{H}_{4}\text{Si}$
    & 695.10
    & -4.20
    & 0.42
    & 3.78 \\
    \quad \textcolor{gray}{mb16-43} \\[1ex]
    $^{3}\text{O}_{2} \rightarrow ^{2}\text{O}_{2}^{+}$
    & 277.73
    & -7.74
    & -3.54
    & 4.20 \\
    \quad \textcolor{gray}{g21ip} \\[1ex]
    $^{2}\text{H}_{2}^{+} \rightarrow ^{2}\text{H}$
    & 58.90
    & -6.35
    & -1.84
    & 4.51 \\
    \quad \textcolor{gray}{sie4x4} \\[1ex]
    $5\text{H}_{2}+2\text{B}\text{C}_{2}\text{Cl}\text{F}_{2}\text{H}_{8}\text{Li}\text{Na} \rightarrow 2\text{F}_{2}+2\text{B}\text{H}_{3}+2\text{H}\text{Na}+\text{Cl}_{2}+4\text{C}\text{H}_{4}+2\text{H}\text{Li}$
    & 436.40
    & -8.27
    & 2.91
    & 5.35 \\
    \quad \textcolor{gray}{mb16-43} \\[1ex]
    $\text{Li}_{8} \rightarrow 4\text{Li}_{2}$
    & 86.47
    & -13.34
    & 7.14
    & 6.19 \\
    \quad \textcolor{gray}{alk8} \\[1ex]
    $^{4}\text{N} \rightarrow ^{3}\text{N}^{+}$
    & 335.30
    & 7.00
    & 0.43
    & 6.57 \\
    \quad \textcolor{gray}{g21ip} \\[1ex]
    $^{2}\text{C}_{2}\text{H} \rightarrow ^{2}\text{H}+2^{3}\text{C}$
    & 266.16
    & 17.03
    & 4.68
    & 12.35 \\
    \quad \textcolor{gray}{w4-11} \\[1ex]
    $\text{Ca}\text{O} \rightarrow ^{3}\text{O}+\text{Ca}$
    & 96.20
    & -15.01
    & -1.65
    & 13.36 \\
    \quad \textcolor{gray}{alkbde10} \\[1ex]
    $^{2}\text{H}_{6}\text{N}_{2}^{+} \rightarrow \text{H}_{3}\text{N}+^{2}\text{H}_{3}\text{N}^{+}$
    & 13.40
    & -67.61
    & -7.25
    & 60.35 \\
    \quad \textcolor{gray}{sie4x4} \\[1ex]
    \bottomrule
\end{tabularx}

\end{table}

\subsection{Convergence with respect to the grid size}
\begin{figure}[t!]
    \centering
    \includegraphics[width=0.6\linewidth]{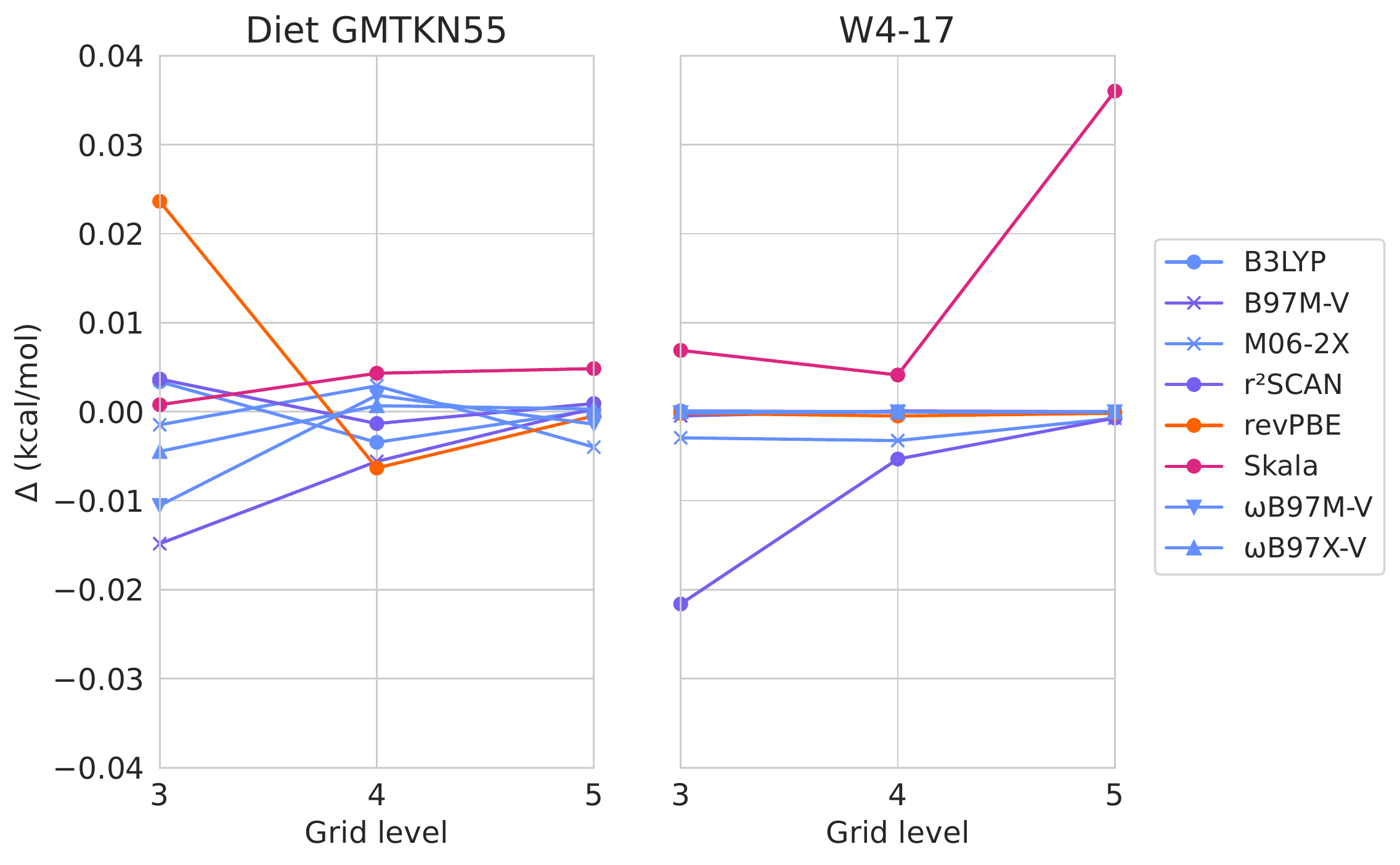}
    \caption{Evaluation on Diet GMTKN55 and W4-17 benchmarks at different pyscf grid levels (sizes). The MAE on the y-axis corresponds to the difference in reaction energies with respect to grid level 6. Reactions were included if they converged for all functionals at all grid levels with the retry logic, not including the manual step. This resulted in 199 reactions being included in W4-17 and 87 reactions being included in Diet GMTKN55.}
    \label{fig:grid-convergence}
\end{figure}

To study the effect of the DFT grid we performed benchmark calculations on W4-17 and Diet GMTKN55 using \clippy and a variety of traditional functionals, displayed in Fig.~\ref{fig:grid-convergence}. \clippyOnePointOne displays similar convergence behavior to traditional functionals on both the W4-17 and Diet GMTKN55 benchmarks.

\subsection{Emergence of exact constraints with more training data}\label{sec:exact-constraints}

\begin{figure}[t!]
    \centering
    \includegraphics[width=0.85\textwidth]{figures/tc_at_gamma1_data_ablation.pdf}
    \caption{Model $T_\cor$ (obtained from Eq.~\ref{eq:TcGKS}) of functionals trained with MSR-ACC/TAE25\cite{ehlert_accurate_2025} only and with the entire training set. 
    The area below the $x$-axis is shaded in red to indicate the violation of the positivity constraint. 
    The vertical lines represent the average across 5 independent training runs, decorated by the min and max of the derived model $T_\cor$ value across seeds. Reference data\cite{vuckovic_density_2019,vuckovic_interpolated_2017} from reverse-engineering of CCSD densities in a pure Kohn-Sham framework are also shown for comparison.
    }
    \label{fig:tc_data_ablation}
\end{figure}

Although the exact functional has an implicit form, several exact constraints that the functional satisfies can be derived.\cite{levy_hellmannfeynman_1985,kaplan_predictive_2022,lieb_density_1983,sun_strongly_2015}
Traditionally, the exact constraints are directly built into the approximate functional by restricting the form of the functional. 
With a neural network functional, it is harder to incorporate some of these constraints without limiting the model's expressivity.
But with enough data, the model could learn to satisfy the constraints as an emergent behavior.

Some of the most important constraints are derived using uniform coordinate scaling, in which a scaled density is defined as
\begin{align}
    \dens_\gamma(r) = \gamma^3\dens(\gamma r)
\end{align}
where the $\gamma^3$ factor ensures the scaled density integrates to the same number of electrons for different scaling factors $\gamma>0$. 
  The constraints related to uniform coordinate scaling for $E_{\xc}[\dens_{\gamma}]$ can be checked by evaluating a given XC model on scaled features using the following assignments
\begin{subequations}
\begin{align}
    \rho_{\gamma, i} &\leftarrow \gamma^3 \rho_i \\
    \norm{\grad\rho_{\gamma, i}} &\leftarrow \gamma^4 \norm{\grad\rho_i} \\
    \tau_{\gamma, i} &\leftarrow \gamma^5 \tau_i \label{eq:tauscaling} \\
    w_{\gamma, i} &\leftarrow \gamma^{-3} w_i \\
    r_{\gamma,i} &\leftarrow \gamma^{-1} r_i
\end{align}
\end{subequations}
where the unscaled counterparts are from a self-consistent calculation.  We recall that $r_i$ and $w_i$ are the points and the weights of the integration grids.

However, a subtle point we need to mention is that the scaling of $\tau$ of Eq.~\eqref{eq:tauscaling} is exact only in a pure KS setting, where the exact functional is partitioned as (with $U[\rho]$ the Hartree electrostatic energy defined in Eq.~\eqref{eq:TotEnergyDFT})
\begin{equation}
F[\rho]=\underbrace{\min_{\Phi\to\rho}\langle\Phi|\hat{T}|\Phi\rangle}_{T_s[\rho]}+U[\rho]+E_{\xc}[\rho].
\end{equation}
In this case, the KS orbitals have the simple scaling \(\gamma^{3/2}\phi_i(\gamma r)\) leading to the scaling of $\tau$ in Eq.~\eqref{eq:tauscaling} and the scaling of the KS kinetic energy $T_s[\rho_\gamma]=\gamma^2T_s[\rho]$. However, functionals like \clippy that contain the meta-GGA feature $\tau$ are usually evaluated within the so-called generalized Kohn-Sham scheme (GKS), in which the universal functional is implicitly partitioned as 
\begin{equation}
F[\rho]=\underbrace{\min_{\Phi\to\rho}\left(\langle\Phi|\hat{T}|\Phi\rangle+E_\xc[\rho,\tau_\Phi]\right)}_{E_{\rm Txc}[\rho]=T_s^{\rm GKS}[\rho]+E_\xc[\rho,\tau^{\rm GKS}[\rho]]}{}+U[\rho]. \label{eq:metaKS}
\end{equation}
Otherwise said, only the sum $T_s+E_\xc$ is well defined when we apply GKS to a meta-GGA. Hand-crafted meta-GGA functionals usually impose scaling constraints by assuming the functional would be evaluated within pure Kohn-Sham, making use of Eq.~\eqref{eq:tauscaling}. In the case of \clippy, where we want to check {\em a posteriori} whether constraints are satisfied, however, we need to pay attention to how the SCF $\tau$ was obtained. 
In fact, Eq.~\eqref{eq:metaKS} implies that the meta KS orbitals entering the minimizing Slater determinant $\Phi$ do not have anymore a well-defined scaling, which, in turn, means that $\tau^{\rm GKS}[\dens_\gamma]$ for the scaled density $\rho_\gamma$ is not known.  
An exception is the physical case $\gamma=1$, where the derivative with respect to the scaling parameter $\gamma$ of $E_{\rm Txc}[\dens_\gamma]$ can be obtained exactly using $\tau_\gamma$ from the usual scaling of \eqref{eq:tauscaling},
\begin{align} \label{eq:equalityDerivGamma1}
\frac{\ud }{\ud\gamma}\frac{E_{\rm Txc}[\dens_\gamma]}{\gamma}\Big|_{\gamma=1}=T_s^{\rm GKS}[\rho]+\frac{\ud }{\ud\gamma}\frac{E_\xc[\dens_\gamma,\tau_\gamma]}{\gamma}\Big|_{\gamma=1}.
\end{align}
This is due to the fact that, by definition, $E_{\rm Txc}[\dens_\gamma]\le \gamma^2T_s^{\rm GKS}[\rho]+E_\xc[\dens_\gamma,\tau_\gamma]$ for all $\gamma>0$, with equality at $\gamma=1$, implying that also their derivatives must be equal at $\gamma=1$. This will remain true if we divide both sides by $\gamma$.
We can therefore use the equation\cite{levy_hellmannfeynman_1985} that gives the physical, interacting, kinetic energy for a given scaled density $\dens_\gamma$,
\begin{align}
    T[\rho_\gamma]=\gamma^2 \frac{\ud }{\ud\gamma}\frac{E_{\rm Txc}[\dens_\gamma]}{\gamma},
\end{align}
and combine it with Eq.~\eqref{eq:equalityDerivGamma1} to obtain the GKS correlation kinetic energy at $\gamma=1$,
\begin{align}
    T_\cor^{\rm GKS}[\rho]=T[\rho]-T_s^{\rm GKS}[\rho]=\frac{\ud }{\ud\gamma}\frac{E_\xc[\dens_\gamma,\tau_\gamma]}{\gamma}\Big|_{\gamma=1}.\label{eq:TcGKS}
\end{align}
Notice that $T_\cor^{\rm GKS}[\rho]$ has the freedom to become negative: in principle the GKS system could have kinetic energy higher than the physical one, compensated by a more negative $E_\xc$, as it happens when we train on the MSR-ACC/TAE25 dataset\cite{ehlert_accurate_2025} only. Strictly speaking, this would not be wrong, but it is a less physical and convenient way to obtain accurate energies. It is remarkable, as shown in \cref{fig:tc_data_ablation_main}, that \clippy learns from data to keep $T_\cor^{\rm GKS}[\rho]$ positive and quite close to the one from pure KS. In Fig.~\ref{fig:tc_data_ablation} we show that this happens consistently across 5 different seeds. Notice that it should hold $ T_\cor^{\rm GKS}[\rho]\le T_\cor[\rho]$ if the densities were exactly the same, which is not the case here.

\subsection{Ablation of the performance on geometry optimization}\label{sec:geometry-ablation}
In this section, additional results are presented on the performance of \clippy on geometry optimization tasks.
First, the GEO metric introduced by Vuckovic and Burke \cite{vuckovic_quantifying_2020} is evaluated for the datasets and functionals displayed in Table~\ref{tab:geometry} of the main text.
The GEO metric is defined as the energy difference between evaluating a given functional on the reference and its own optimal geometry:
\begin{equation}
    \text{GEO} = E_\text{F}(R_\text{ref}) - E_\text{F}(R_\text{F}) \,
\end{equation}
where $E_F(R_\text{ref})$ and $E_F(R_F)$ denote the energy yielded by functional $F$ when evaluated on the reference or its own optimal molecular geometry, respectively.
This metric is therefore a direct measure of the additional error one makes when one uses $F$ not only for single-point calculations on reference geometries, but also for optimizing the geometries of molecules of interest.
As can be seen in Fig.~\ref{fig:geometry-ablation-geo}, the performance of \clippyOnePointOne on the W4-11-GEOM test set according to the GEO metric is on par with state-of-the-art hybrid functionals such as $\omega$B97M-V.

To elucidate the reasons for this excellent accuracy, ablations were performed on the composition of the data used to train \clippyOnePointOne.
In particular, both the right side of Fig.~\ref{fig:geometry-ablation-geo} and Table~\ref{tab:geometry-ablation} show the effect of including TAEs of molecular geometries distorted along vibrational modes (MSR-ACC/Distortion) in the training data.
Without this data, performance on bond lengths, angles and the GEO metric hovers around that of the revPBE GGA functional, while after adding this data, the performance is boosted to its current level.
This finding supports our intuition that training on distorted molecular geometries is of great importance in achieving accurate geometry optimization.

Taken together, the results in this section indicate that \clippyOnePointOne is expected to perform well in settings where reference molecular geometries are not available and one must rely on the functional for optimizing geometries.
\begin{figure}
    \centering
    \begin{minipage}[b]{0.8\linewidth}
    \centering
    \subfloat[The performance of various functionals in the GEO metric on the W4-11-GEOM dataset.]{
        \centering
        \includegraphics[width=\linewidth]{figures/si-w4-11-geom-geo.pdf}
        \label{fig:geometry-ablation-geo}
    }
    \end{minipage}
    \vspace{2em}\\
    \begin{minipage}[b]{\linewidth}
        \centering
        \include{tables/geometry_vib_ablation}
        \subcaption{Comparing geometry optimization results between checkpoints trained on data including/excluding the TAEs of vibrationally distorted geometries.}
        \label{tab:geometry-ablation}
    \end{minipage}
    \vspace{0.5em}\\
    \caption{Results on the ablation of geometry optimization performance. (a): Accuracy of molecular geometries optimized with various functionals according to the GEO metric\cite{vuckovic_quantifying_2020} on the W4-11-GEOM dataset\cite{spackman_basis_2016}.
    The right side of the plot shows the effect of including TAEs of molecular geometries distorted along vibrational modes (MSR-ACC/Distortion) in the training data of \clippy, displaying results obtained with three different seeds for each setting.
    (b): Accuracy of geometries in the benchmark datasets LMGB35, HMGB11\cite{grimme_consistent_2015}, CCse21\cite{piccardo_semiexperimental_2015} and W4-11-GEOM\cite{spackman_basis_2016} as optimized with different \clippy checkpoints, either with or without the inclusion of TAEs of vibrationally distorted molecular geometries (MSR-ACC/Distortion) in the training data, using three different seeds each.
    The table shows average absolute errors for bond lengths (in \r{A}ngstrom) and bond angles (in degrees) compared to the ground truth values from these datasets. Box plots show the quartiles of the error distribution.
    }
    \label{fig:geometry-ablation}
\end{figure}

\clearpage
\printbibliography[resetnumbers=\resetnum, title=Supplementary References]
\numgdef\resetnum{\csuse{blx@labelnumber@\therefsection}+1}
\end{refsection}

\end{document}